\newcolumntype{C}[1]{>{\centering\arraybackslash}p{#1}}
\theoremstyle{definition}
\newtheorem{dfn}[theorem]{Definition}
\newtheorem{pro}[theorem]{Problem}
\newtheorem{prop}[theorem]{Proposition}
\newtheorem{lem}[theorem]{Lemma}
\newtheorem{exa}[theorem]{Example}
\newcommand{\R}{\mathbb R}
\newcommand{\Z}{\mathbb Z}
\newcommand{\al}{\alpha}
\newcommand{\be}{\beta}
\newcommand{\ga}{\gamma}
\newcommand{\Ga}{\Gamma}
\newcommand{\de}{\delta}
\newcommand{\la}{\lambda}
\newcommand{\La}{\Lambda}
\newcommand{\ep}{\varepsilon}
\newcommand{\iso}{\mathrm{Iso}}
\newcommand{\SO}{\mathrm{SO}}
\newcommand{\RMS}{\mathrm{RMS}}
\newcommand{\AMD}{\mathrm{AMD}}
\newcommand{\vol}{\mathrm{Vol}}
\newcommand{\dif}{\mathrm{Dif}}
\newcommand{\bs}{\hfill $\blacksquare$}
\newcommand{\lra}{\leftrightarrow}
\newcommand{\vl}{\,:\,}
\title{The asymptotic behaviour and a near linear time algorithm for isometry invariants of periodic sets} 
\titlerunning{A near linear time algorithm for isometry invariants of periodic sets} 
\author{Daniel Widdowson}{Department of Computer Science, University of Liverpool, Liverpool, United Kingdom}{D.E.Widdowson@liverpool.ac.uk}{https://orcid.org/0000-0002-5958-0703}{}
\author{Marco Mosca}{Department of Computer Science, University of Liverpool, Liverpool, United Kingdom}{m.m.mosca@liverpool.ac.uk}{}{}
\author{Angeles Pulido}{Cambridge Crystallographic Data Centre, Cambridge, United Kingdom}{apulido@ccdc.cam.ac.uk}{0000-0002-7596-7262}{}
\author{Vitaliy Kurlin}{Department of Computer Science, University of Liverpool, Liverpool, United Kingdom}{vitaliy.kurlin@liverpool.ac.uk}{https://orcid.org/0000-0001-5328-5351}{EPSRC grant `Application-driven Topological Data Analysis', EP/R018472/1}
\author{Andrew I Cooper}{Materials Innovation Factory, University of Liverpool, Liverpool, United Kingdom}{aicooper@liverpool.ac.uk}{}{ERC Synergy Grant `Autonomous Discovery of Advanced Materials'}
\authorrunning{D. Widdowson and M. Mosca and A. Pulido and V. Kurlin and A. Cooper} 
\keywords{Lattices, periodic sets, isometry invariants, bottleneck distance, continuity}
\begin{document}
\nolinenumbers
\maketitle

\begin{abstract}
The fundamental model of a periodic structure is a periodic point set up to rigid motion or isometry. Our recent paper in SoCG 2021 defined isometry invariants (density functions), which are complete in general position and continuous under perturbations. This work introduces much faster isometry invariants (average minimum distances), which are also continuous and distinguish some sets that have identical density functions. We explicitly describe the asymptotic behaviour of the new invariants for a wide class of sets including non-periodic. The proposed near linear time algorithm processed a dataset of hundreds of thousands of real structures in a few hours on a modest desktop.
\end{abstract}

\section{Motivations, problem statement and overview of new results}
\label{sec:intro}

A \emph{lattice} consists of integer linear combinations of a basis whose vectors span a parallelepiped called a \emph{unit cell}.
A \emph{periodic point set} is the Minkowski sum $\La+M=\{\vec u+\vec v \vl u\in\La, v\in M\}$ of a \emph{lattice} $\La$ and a \emph{motif} $M$, which is a finite set of points in the unit cell $U$ \cite{zhilinskii2016introduction}.
\smallskip

The recent papers \cite{edels2021,anosova2021isometry} of the fourth author initiated a study into efficient classifications of periodic point sets up to isometry.
An \emph{isometry} of Euclidean $\R^n$ is any map that preserves inter-point distances.
Any orientation-preserving isometry can be realised as a continuous rigid motion, for example a composition of translations and rotations in $\R^3$.
This equivalence is most natural for periodic point sets due to their applications in rigid periodic structures.  
\smallskip

An isometry classification of periodic point sets turned out to be highly non-trivial from both mathematical and algorithmic points of view for the  following four reasons.

\vspace*{-1mm}
\begin{figure*}[h!]
\includegraphics[height=19mm]{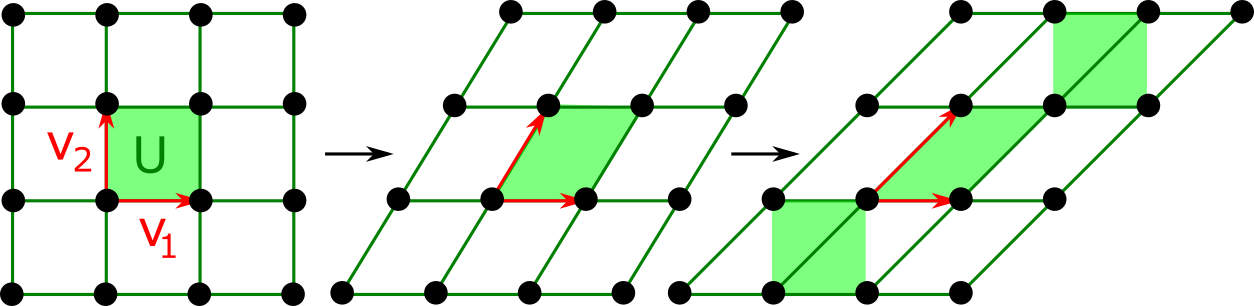}
\hspace*{1mm}
\includegraphics[height=19mm]{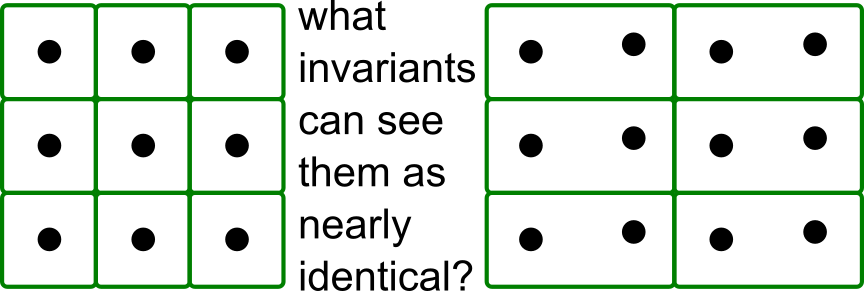}
\caption{\textbf{Left}: a continuous deformation gives different bases for the same first and last lattices.
\textbf{Right}: many isometry invariants such as symmetry groups are discontinuous under perturbations.}
\label{fig:deformations}
\end{figure*}

Firstly, since any lattice can be generated by infinitely many different bases or \emph{primitive} unit cells of a minimal volume, a representation of a periodic point set as a sum $S=\La+M$ of a lattice and a motif is highly ambiguous.
All attempts to find a canonical basis or a \emph{reduced} cell of a lattice are discontinuous under perturbations as noted in \cite[section~1]{edels2021}.
\medskip

Fig.~\ref{fig:deformations} shows that all periodic point sets form a continuous space whose geometry remains unexplored \cite{cassels2012introduction}. 
The last two periodic sets in Fig.~\ref{fig:deformations} provide an example of nearly identical structures whose similarity is hard to quantify without using thresholds.
These sets substantially differ by symmetry groups and unit cell volumes but have the same density.
\medskip

Secondly, even for a fixed lattice basis, shifting points within a unit cell changes their coordinates in the basis.
Similarly, any isometry of a periodic point set preserves a rigid structure, but dramatically changes the basis representation in a fixed coordinate system.
\medskip

Thirdly, crystallography studies periodic structures via symmetry groups, which break down under perturbations of points.
Many past invariants are discrete and cut the continuous space of periodic point sets into strata.
This discontinuous stratification is a main obstacle for understanding transitions between crystal phases and for detecting nearly identical structures, which are common due to noise in measurements or approximations in simulations. 
\medskip

Fourthly, the world's largest Cambridge Structural Database already contains more than 1.1M known structures kept as pairs (unit cell, motif), which should be converted into continuous invariants for reliable comparisons.
Hence we need quickly computable invariants that can be sequentially extended to distinguish more and more complicated structures.
\medskip

This paper contributes to both mathematics and computer science by solving the following problem to find a sequential family of isometry  invariants computable in near linear time.

\begin{pro}[fast and continuous isometry invariants]
\label{pro:invariants}
Find continuous isometry invariants of a periodic point set that can be computed in a near linear time in all input parameters. 
\bs
\end{pro}

\begin{figure*}[h!]
\includegraphics[width=\linewidth]{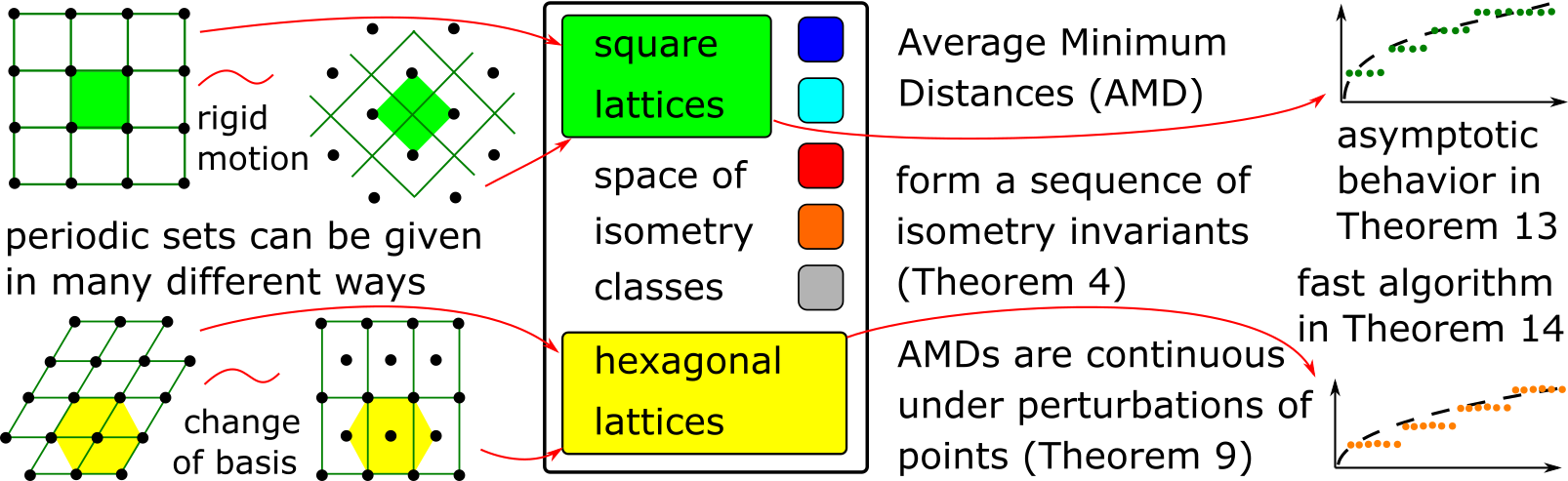}
\caption{Ambiguous representations of periodic point sets by pairs (unit cell, motif) are converted into continuous isometry invariants (Average Minimum Distances), which form an infinite sequence.}
\label{fig:invariants}
\end{figure*}

Fig.~\ref{fig:invariants} summarises the problem of distinguishing periodic point sets up to isometry and outlines  properties of the new Average Minimum Distances (AMD).
Section~\ref{sec:review} reviews key concepts and past results on isometry classifications.
Section~\ref{sec:AMD_invariant} introduces the AMD and illustrates computations by examples.
Section~\ref{sec:continuity} proves the Lipschitz continuity of AMD under perturbations of points.
Section~\ref{sec:asymptotic} proves the asymptotic behaviour of the infinite AMD sequence.
Section~\ref{sec:discussion} describes a near linear time algorithm to compute AMD, which runs  on real structures in milliseconds on a modest desktop.
The paper finishes with a discussion of current limitations and potential developments.
The appendices for crystallographers include extra details, comparisons with past applied work and breakthrough visualisations.

\section{Key definitions for periodic point sets and a review of past work}
\label{sec:review}

The symbol $\R^n$ denotes Euclidean space. 
Any point $p\in\R^n$ can be represented by the vector $\vec p$ from the origin of $\R^n$ to the point $p$, so $p$ and $\vec p$ can be used interchangeably, though $\vec p$ can be drawn at any initial point.
The \emph{Euclidean} distance between points $p,q\in\R^n$ is $|p-q|$.

\begin{dfn}[a lattice $\La$, a motif $M$, a unit cell $U$, a periodic point set $S=\La+M$]
\label{dfn:crystal}
Let vectors $\vec v_1,\dots,\vec v_n$ form a linear {\em basis} in $\R^n$ so that if $\sum\limits_{i=1}^n \la_i\vec v_i=\vec 0$ for some real $\la_i$, then all $\la_i=0$.
Then a {\em lattice} $\La$ in $\R^n$ consists of all linear combinations $\sum\limits_{i=1}^n \la_i\vec v_i$  with integer coefficients $\la_i\in\Z$.
A {\em motif} $M$ is a finite set of points $p_1,\dots,p_m$ in the \emph{unit cell} $U(\vec v_1,\dots,\vec v_n)=\left\{ \sum\limits_{i=1}^n \la_i\vec v_i \vl \la_i\in[0,1) \right\}$, which is the parallepiped spanned by $\vec v_1,\dots,\vec v_n$.
\medskip

A \emph{periodic point set} $S\subset\R^n$ is the \emph{Minkowski sum}  $S=\La+M=\{\vec u+\vec v \vl  u\in\La, v\in M\}$, so $S$ is a finite union of translates of the lattice $\La$.
A unit cell $U$ is \emph{primitive} if $S$ remains invariant under shifts by vectors only from $\La$ generated by $U$ (or the basis $\vec v_1,\dots,\vec v_n$).
\bs
\end{dfn}

Any lattice $\La$ can be considered as a periodic set with a 1-point motif $M=\{p\}$.
This single point $p$ can be arbitrarily chosen in a unit cell $U$.
The lattice translate $\La+\vec p$ is also considered as a lattice, because the point $p$ can be chosen as the origin of $\R^n$.
\medskip

The periodic sets in the top left part of Fig.~\ref{fig:invariants} represent isometric square lattices, though the former has a point $p$ at a corner of a unit cell $U$ and the latter has $p$ in the center of $U$.
The periodic sets in the bottom left part of Fig.~\ref{fig:invariants} represent isometric  hexagonal lattices, because every black point has exactly six nearest neighbours that form a regular hexagon.
\medskip

A lattice $\La$ of a periodic set $S=M+\La\subset\R^n$ is not unique in the sense that $S$ can be generated by a sublattice of $\La$ and a motif larger than $M$.
If $U$ is any unit cell of $\La$, the sublattice $2\La$ has the $2^n$ times larger unit cell $2^n U$ (twice larger along each of $n$ basis vectors of $U$), hence contains $2^n$ times more points than $M$.
Such an extended cell $2^n U$ is superfluous, because $S$ is invariant under translations along not only integer linear combinations $\sum\limits_{i=1}^n \la_i\vec v_i$ with $\la_i\in\Z$, but also along linear  combinations with half-integer coefficients $\la_i\in \frac{1}{2}\Z$.
\medskip


Now we discuss the closely related work on comparing finite and periodic sets up to isometry.  
The excellent book \cite{liberti2017euclidean} reviews the wider area of Euclidean distance geometry.
\medskip

The full distribution of all pairwise Euclidean distances $|a-b|$ between points $a,b$ in a finite set $S\subset\R^m$ is a well-known isometry invariant.
This invariant is complete or injective for finite sets in general position \cite{boutin2004reconstructing} in the sense that almost any finite set $S$ can be uniquely reconstructed up to isometry from the set of all distances between points of $S$.
The left hand side pictures in Fig.~\ref{fig:non-isometric_pairs} show the counter-example pair $T,K$ to the full completeness. 
\medskip

\begin{figure}[h]
\includegraphics[height=26mm]{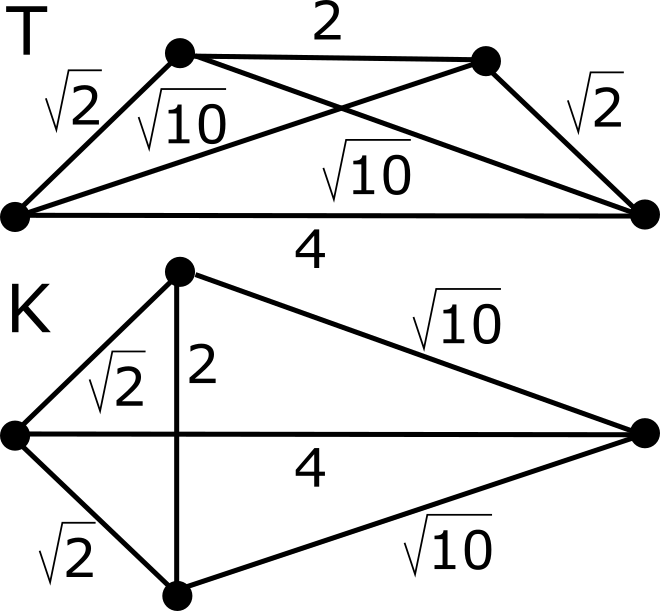}
\hspace*{0mm}
\includegraphics[height=26mm]{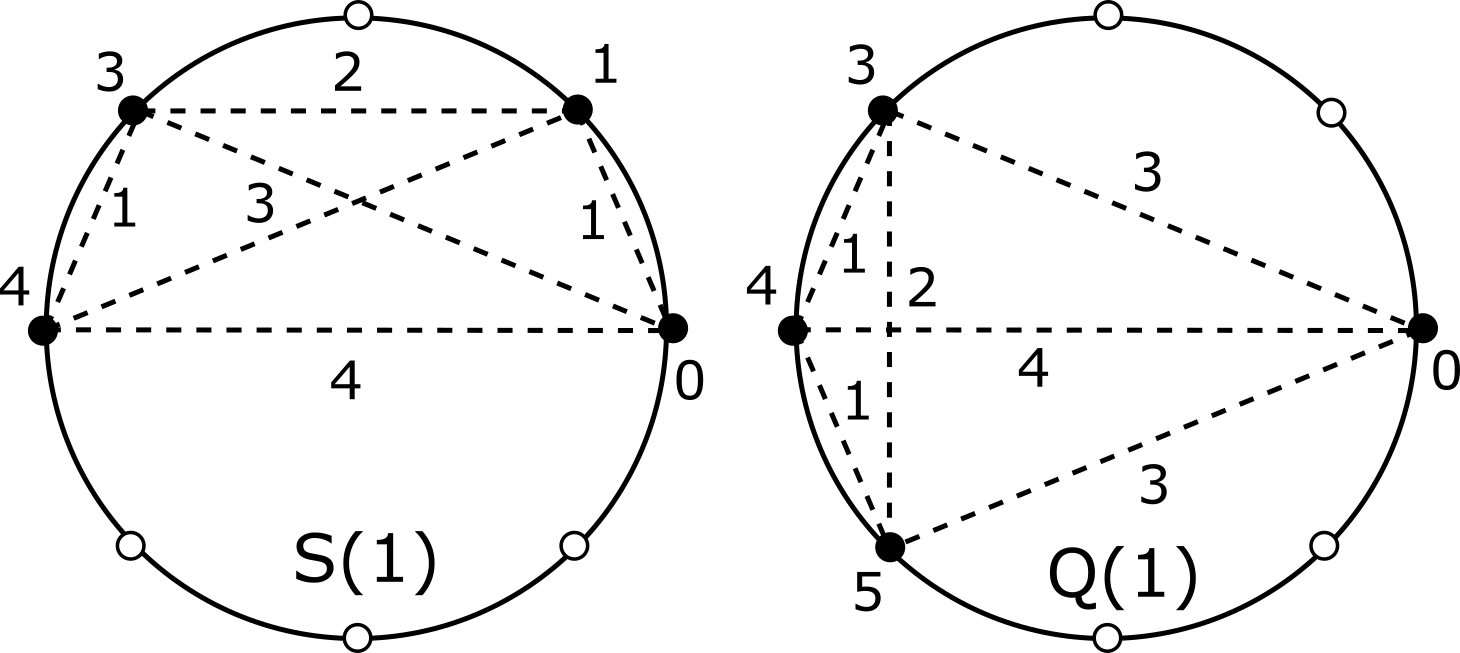}
\hspace*{1mm}
\includegraphics[height=26mm]{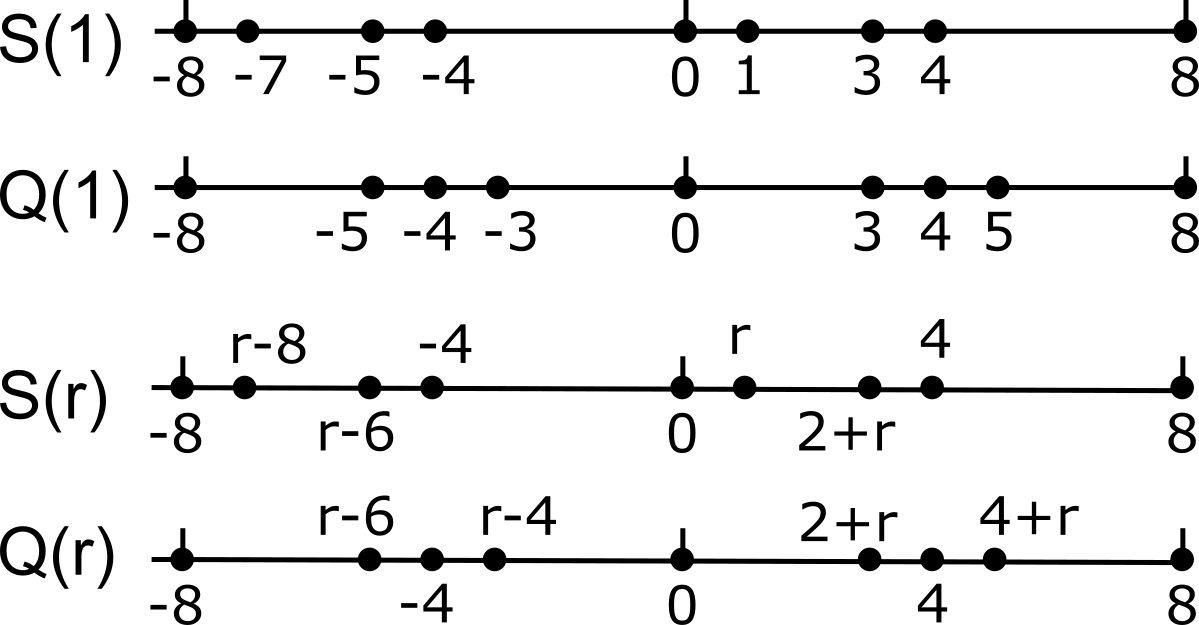}
\caption{Non-isometric sets that cannot be distinguished by past invariants. 
\textbf{Left}: $T,K\subset\R^2$ have the same pairwise distances $\{2,\sqrt{2},\sqrt{2},\sqrt{10},\sqrt{10},4\}$.
\textbf{Middle}: periodic versions $S(1),Q(1)$ of $T,K$ on a circle of length 8. Distances are measured along arcs.
\textbf{Right}: periodic sets $S(r)=\{0, r, r+2, 4\}+8\Z$ and $Q(r)=\{0, r+2, 4, r+4\}+8\Z$ for $0<r\leq 1$ have the same Patterson function \cite[p.~197, Fig.~2]{patterson1944ambiguities}.
All these pairs are distinguished by $\AMD$ invariants in Section~\ref{sec:AMD_invariant}.}
\label{fig:non-isometric_pairs}
\end{figure}

The isometry classification of finite point sets was algorithmically resolved by \cite[Theorem~1]{alt1988congruence} saying that an existence of an isometry between $m$-point sets in $\R^n$ can be checked in time $O(m^{n-2}\log m)$.
The latest advance is the sophisticated $O(m\log m)$ algorithm \cite{kim2016congruence} testing an isometry (or congruence) between $m$-point sets in $\R^4$.
For finite sets, the new concept of Average Minimum Distances is similar to M\'emoli's seminal work on \emph{distributions of distances }\cite{memoli2011gromov}, also known as \emph{shape distributions} \cite{belongie2002shape, grigorescu2003distance, manay2006integral, osada2002shape}.
All algorithms for finite sets cannot be easily extended to periodic sets by   choosing a unit cell, because any such reduced cell is discontinuous under perturbations according to Theorem~\ref{thm:discontinuity} in Section~\ref{sec:discussion}. 
\medskip

An isometry classification of periodic point sets by complete invariants is already non-trivial in dimension~1. 
Gr\"unbaum and Moore \cite{grunbaum1995use} constructed the sophisticated invariants below.
Points $c_0,\dots,c_{m-1}\in\R$ of a 1-dimensional periodic set can be considered on the unit circle $S^1\subset\mathbb{C}$ and converted into the Fourier coefficients $d(k)=\sum\limits_{j=0}^{m-1} c_j\exp\dfrac{2\pi\sqrt{-1}jk}{m}$, $k=0,\dots,m-1$. 
Then all 1-dimensional periodic sets whose points have only integer (or rational) coordinates are distinguished up to translations by the $n$-th order invariants up to $n=6$, which are all products of the form $d(k_1)\cdots d(k_n)$ with $k_1+\dots+k_n\equiv 0\pmod{m}$.
\medskip

The 3-dimensional analogues of the above invariant products $d(k)d(-k)$ define the \emph{Patterson function}, whose peaks correspond to interpoint vectors \cite{glusker1987patterson}.
Periodic sets that have identical Patterson functions are called \emph{homometric}, see more details in appendix~B.
\medskip

The 4-point non-isometric sets $T,K$ have periodic versions 
$S(1)=\{0, 1, 3, 4\}+8\Z$ and $Q(1)=\{0, 3, 4, 5\}+8\Z$ in  Fig.~\ref{fig:non-isometric_pairs}.
Even more general homometric sets $S(r),Q(r)$ depending on a parameter $0<r\leq 1$ will be distinguished by simplest invariant $\AMD_1$ in section~\ref{sec:AMD_invariant}.
\medskip

More recently, for any periodic point set $S\subset\R^n$ with a motif $M$ in a unit cell $U$, Edelsbrunner et al. \cite{edels2021} introduced the density functions $\psi_k(t)$ for any integer $k\geq 1$.
The $k$-th \emph{density function} $\psi_k(t)$ is the total volume of the regions within the unit cell $U$ covered by exactly $k$ balls $B(p;t)$ with a radius $t\geq 0$ and centres at points $p\in M$, divided by the unit cell volume $\vol[U]$.
The density function $\psi_k(t)$ was proved to be invariant under isometry, continuous under perturbations, complete for periodic sets satisfying certain conditions of general position in $\R^3$, and computable in time $O(m k^3)$, where $m$ is the motif size of $S$.
\medskip

Section~5 in \cite{edels2021} gives the counter-example to completeness: the 1-dimensional periodic sets $S_{15}=X+Y+15\Z$ and $Q_{15}=X-Y+15\Z$ for $X = \{0, 4, 9\}$ and $Y = \{0, 1, 3\}$, which appeared earlier in \cite[section~4]{grunbaum1995use}.
These non-isometric sets have the same density functions for all $k\geq 1$, see \cite[Example~11]{anosova2021introduction}, and will be distinguished by $\AMD_3$ in Example~(\ref{exa:SQr+SQ15}b).
\medskip

The latest advance \cite{anosova2021isometry} reduces the isometry classification of all periodic point sets to an \emph{isoset} of isometry classes of $\al$-clusters around points in a motif at a certain radius $\al$, which was motivated by the seminal work of Dolbilin with co-authors about Delone sets \cite{dolbilin1998multiregular,bouniaev2017regular,dolbilin2019regular}.
Checking if two isosets coincide needs a cubic algorithm, which is not yet implemented.
Running times of available algorithms will be compared in section~\ref{sec:discussion}.  
Appendix~C reviews past experimental work on comparisons of periodic structures for experts in applications.
\medskip

The only other invariant that is widely used in applications which is continuous under perturbations is the density $\rho$ equal to the weight of atoms within a unit cell $U$, divided by $\vol[U]$.
This density $\rho$ does not distinguish any perturbed periodic point sets, see Fig.~\ref{fig:deformations}.

\section{Definition and isometry invariance of Average Minimum Distances}
\label{sec:AMD_invariant}

This section introduces Average Minimum Distances in Definition~\ref{dfn:AMD} and proves their isometry invariance in Theorem~\ref{thm:AMD_invariant}.
For the particular case of a lattice $\La\subset\R^n$, $\AMD_k(\La)$ can be defined as the distance from a fixed point $p\in L$ to its $k$-th nearest neighbour in $\La$, see Fig.~\ref{fig:AMD}.

\newcommand{\amdh}{45mm}
\begin{figure*}[h]
\centering
\includegraphics[height=\amdh]{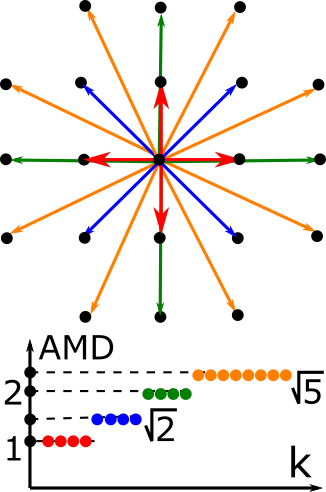}
\hspace*{8mm}
\includegraphics[height=\amdh]{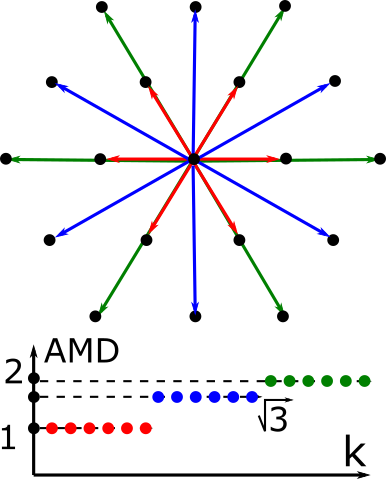}
\hspace*{8mm}
\includegraphics[height=\amdh]{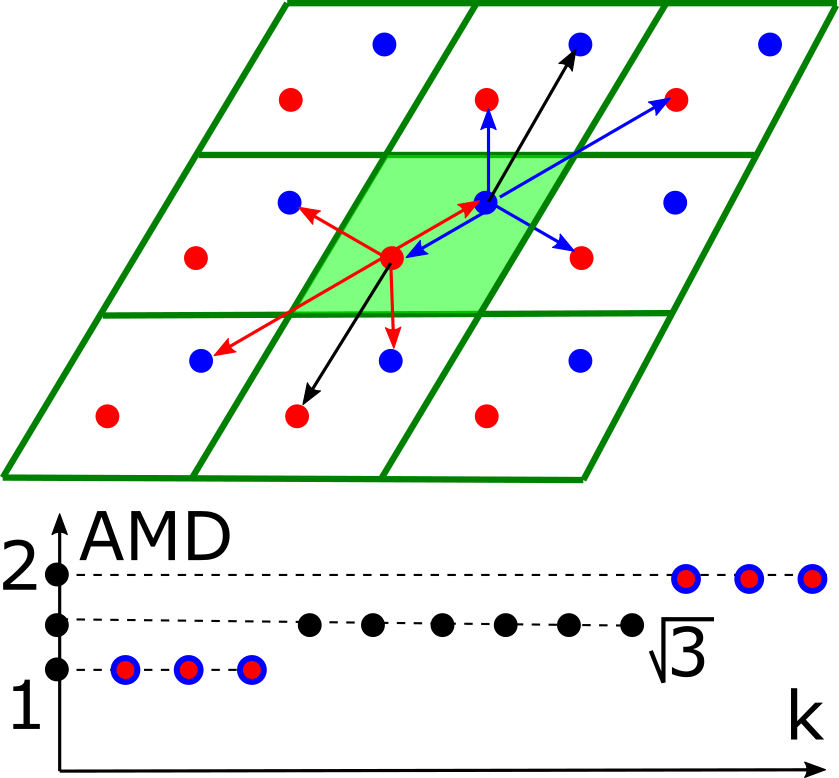}
\caption{\textbf{Left}: in the square lattice, the $k$-th neighbours of the origin and corresponding $\AMD_k$  are shown in the same colour, e.g. the shortest axis-aligned distances $\AMD_1=\dots=\AMD_4=1$ are in red, the longer diagonal distances $\AMD_5=\dots=\AMD_8=\sqrt{2}$ are in blue.
\textbf{Middle}: in the hexagonal lattice, the shortest distances are in red: $\AMD_1=\dots=\AMD_6=1$.
\textbf{Right}: the motif of two points (red and blue) within a green unit cell has 
$\AMD_1=\AMD_2=\AMD_3=1$.}
\label{fig:AMD}
\end{figure*}

\begin{dfn}[Average Minimum Distances $\AMD_k(S)$]
\label{dfn:AMD}
Let a periodic point set $S=\La+M\subset\R^n$ have points $p_1,\dots,p_m$ in a primitive unit cell.
For a fixed integer $k\geq 1$ and $i=1,\dots,m$, the $i$-th row of the $m\times k$ matrix $D(S;k)$ consists of the ordered Euclidean distances $d_{i1}\leq\cdots\leq d_{ik}$ measured from the point $p_i$ to its first $k$ nearest neighbours within the infinite set $S$, see Fig.~\ref{fig:AMD}.
Then the \emph{Average Minimum Distance}
$\AMD_k(S)=\dfrac{1}{m}\sum\limits_{i=1}^m d_{ik}$ equals the average of the $k$-th column in the matrix $D(S;k)$ of distances to neighbours.
\bs
\end{dfn}

Definition~\ref{dfn:AMD} makes sense for any finite set $S=M$ of $m$ points for $k\leq m-1$.
Then the matrix $D(S;m-1)$ for the largest possible number $k=m-1$ of neighbours includes all pairwise distances, but differs from the usual symmetric distance matrix of $S$ due to the ordered distances in each row.
This pointwise information distinguishes the 4-point sets $T,K$ in Fig.~\ref{fig:non-isometric_pairs} as follows.
The trapezium $T$ and kite $K$ in $\R^2$ can be represented by the points $(\pm 1,1),(\pm 2, 0)$ and $(-2,0),(-1, \pm 1),(2,0)$, respectively.
The matrices from Definition~\ref{dfn:AMD} are
$D(T;3)=
\left(\begin{array}{lll}
\sqrt{2} & 2 & \sqrt{10} \\
\sqrt{2} & 2 & \sqrt{10} \\
\sqrt{2} & \sqrt{10} & 4 \\
\sqrt{2} & \sqrt{10} & 4 
\end{array}\right)$ and
$D(K;3)=
\left(\begin{array}{lll}
\sqrt{2} & \sqrt{2} & \sqrt{10} \\
\sqrt{2} & 2 & \sqrt{10} \\
\sqrt{2} & 2 & \sqrt{10} \\
\sqrt{10} & \sqrt{10} & 4 
\end{array}\right)$.
Already the first components of the vectors
$\AMD^{(3)}(T)=(\sqrt{2},1+\frac{\sqrt{10}}{2}, 2+\frac{\sqrt{10}}{2})$ and $\AMD^{(3)}(K)=(\frac{3\sqrt{2}+\sqrt{10}}{4},1+\frac{\sqrt{2}+\sqrt{10}}{4}, 1+\frac{3}{4}\sqrt{10})$ distinguish the sets $K,T$ in Fig.~\ref{fig:non-isometric_pairs}.
\medskip

If $S$ is periodic, all AMD values form the infinite sequence $\{\AMD_k\}_{k=1}^{+\infty}$.
In practice, we compute the vector $\AMD^{(k)}=(\AMD_1,\dots,\AMD_k)$ up to a number $k$ of nearest neighbours.
However, $k$ is not a parameter that changes the output.
If we increase $k$, we get more values without changing the previous ones, so $k$ is similar to a number of decimal places or a length of approximation.
Since the asymptotic behaviour of $\AMD_k$ will be explicitly described in Theorem~\ref{thm:asymptotic}, the infinite AMD sequence can be informally compared with the sequence of coefficients in a degree $k$ Taylor polynomial approximating an analytic function as $k\to+\infty$.
 
\begin{theorem}[isometry invariance of AMD]
\label{thm:AMD_invariant}
For any finite or periodic point set $S\subset\R^n$, the Average Minimum Distance $\AMD_k(S)$ 
is an isometry invariant of $S$ for any $k\geq 1$.
\bs
\end{theorem}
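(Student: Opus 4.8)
The plan is to show that the entire matrix $D(S;k)$ is unchanged (up to permuting rows) by any isometry $f\colon\R^n\to\R^n$, from which the invariance of the column average $\AMD_k(S)$ follows immediately. The essential fact is that isometries preserve Euclidean distances by definition, and they map periodic point sets to periodic point sets: if $S=\La+M$, then $f(S)$ is again periodic with lattice $f(\La)-f(\vec 0)$ (a genuine lattice, since $f$ restricted to a coset of $\La$ is an affine isometry) and a congruent motif. So $f(S)$ admits a primitive unit cell with motif points $f(p_1),\dots,f(p_m)$, and I may compute $D(f(S);k)$ using exactly these representatives.

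First I would record that for each motif point $p_i$ and each radius $t$, the bijection $q\mapsto f(q)$ carries $S\cap B(p_i;t)$ onto $f(S)\cap B(f(p_i);t)$, because $|f(q)-f(p_i)|=|q-p_i|$. Consequently the multiset of distances from $p_i$ to the points of $S$ equals the multiset of distances from $f(p_i)$ to the points of $f(S)$; in particular the ordered list of the first $k$ nearest-neighbour distances agrees: $d_{ij}(f(S))=d_{ij}(S)$ for all $j\le k$. Hence the $i$-th row of $D(f(S);k)$ computed at $f(p_i)$ equals the $i$-th row of $D(S;k)$, so the two matrices coincide after matching the row indexing; averaging the $k$-th column gives $\AMD_k(f(S))=\AMD_k(S)$.

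Two small points need care so the argument is airtight. One must check that $\AMD_k$ does not depend on which primitive unit cell (hence which motif representatives) is chosen for $S$ itself: this holds because any two motifs of the same periodic set differ by a permutation together with lattice translations of individual points, and translating $p_i$ by a lattice vector does not change the set $S$ nor the point's neighbourhood structure, only relabels which copy of the point we look at; so the column of $k$-th nearest-neighbour distances, and thus its average, is well defined. One should also note the finite-set case is identical but simpler, with $S$ itself playing the role of the motif and no lattice present.

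I do not anticipate a genuine obstacle here — the statement is essentially a definitional consequence of ``isometry preserves distances.'' The only mildly delicate step is the well-definedness of $\AMD_k(S)$ under the choice of primitive cell/motif (and the observation that $f(S)$ really is periodic with a primitive cell of the same motif size), which I would state explicitly as a short lemma-style remark before invoking the distance-preservation computation; everything after that is a one-line multiset equality.
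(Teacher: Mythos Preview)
Your proposal is correct and follows essentially the same route as the paper: first establish that $D(S;k)$ is well defined up to row permutation (independent of the chosen primitive cell/motif), then use that an isometry $f$ carries a primitive cell of $S$ to a primitive cell of $f(S)$ with the same motif size and preserves all inter-point distances, so the matrices and hence the column averages agree. The paper spells out the primitivity of $f(U)$ via a short contradiction argument (pulling back a non-lattice translation of $f(S)$ through $f^{-1}$), which is exactly the ``lemma-style remark'' you flagged as needing explicit statement.
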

\begin{proof}[Proof of Theorem~\ref{thm:AMD_invariant}]
If $S$ is periodic, first we show that the unordered collections of rows of the matrix $D(S;k)$, and hence $\AMD_k(S)$, is independent of a primitive unit cell. 
Let $U,U'$ be different primitive cells of the periodic point set $S\subset\R^n$ with a lattice $\La$.
Any point $q\in S\cap U'$ can be translated along $\vec v\in\La$ to a point $p\in S\cap U$ and vice versa.
These translations establish a bijection between the motifs $S\cap U\lra S\cap U'$ and preserve all distances.
Hence the matrix $D(S;k)$ is the same for both cells $U,U'$ up to a permutation of rows.
\medskip

Now we prove that $D(S;k)$, and hence $\AMD_k(S)$, is preserved by
any isometry $f:S\to Q$.
Any primitive cell $U$ of $S$ is bijectively mapped by $f$ to the unit cell $f(U)$ of $Q$, which should be also primitive.
Indeed, if $Q$ is preserved by a translation along a vector $\vec v$ that  doesn't have all integer coefficients in the basis of $f(U)$, then
$S=f^{-1}(Q)$ is preserved by the translation along $f^{-1}(\vec v)$, which also doesn't have all integer coefficients in the basis of $U$, i.e. $U$ was non-primitive.
Since both primitive cells $U$ and $f(U)$ contain the same number of points from $S$ and $Q=f(S)$, the isometry $f$ gives a bijection between all motif points of $S,Q$.
\medskip

For any (finite or periodic) sets $S,Q$, since $f$ preserves distances, every list of ordered distances from any point $p_i\in S\cap U$ to its first $k$ nearest neighbours in $S$ coincides with the list of the ordered distances from $f(p_i)$ to its first $k$ neighbours in $Q$.
The matrices $D(S;k),D(Q;k)$ are identical up to permutations of rows, hence $\AMD_k(S)=\AMD_k(Q)$.  
\end{proof}

\begin{exa}
\label{exa:SQr+SQ15}
\textbf{(\ref{exa:SQr+SQ15}a)}
Table~\ref{tab:SQr} implies by Theorem~\ref{thm:AMD_invariant} that $S(r),Q(r)$ in Fig.~\ref{fig:SQr+SQ15} are not isometric for $0<r\leq 1$.
The mirror image of $S(r)=\{0, r, r+2, 4\}+8\Z$ under the reflection $t\mapsto 4-t$ coincides with $S(2-r)=\{0, 2-r, 4-r, 4\}+8\Z$, so they are equivalent up to isometries including reflections.
Similarly, $Q(r)$ and $Q(2-r)$ are isometric by $t\mapsto -t$.
Though $\AMD_k(S(r))$ seem to be independent of $r$, the first column of $D(S(r);3)$ has the minimum distance $r$, which distinguishes $S(r)$ between each other for different parameters $0<r\leq 1$.
\medskip

\noindent
\textbf{(\ref{exa:SQr+SQ15}b)}
The sets 
$S_{15} = \{0,1,3,4,5,7,9,10,12\}+15\Z$,
$Q_{15} = \{0,1,3,4,6,8,9,12,14\}+15\Z$ in Fig.~\ref{fig:SQr+SQ15}
 are not isometric, because $\AMD_k(S_{15})\neq\AMD_k(Q_{15})$ for $k=3,4$ in Table~\ref{tab:SQ15}.
All their density functions $\psi_k(t)$ are identical as noted in the beginning of \cite[section~5]{edels2021}.
\bs
\end{exa}

\begin{table*}[h!]
$\begin{array}{l|lllll}
D(S(r);3) & \mbox{distance to 1st neighb.} & \mbox{distance to 2nd neighb.} & \mbox{distance to 3rd neighb.}  
\\
\hline
p_1=0 & |0-r|=r & |0-(2+r)|=2+r & |0-4|=4 
\\
p_2=r & |r-0|=r & |r-(2+r)|=2 & |r-4|=4-r 
 \\
p_3=2+r & |(2+r)-4|=2-r & |(2+r)-r|=2 & |(2+r)-0|=2+r 
\\
p_4=4 & |4-(2+r)|=2-r & |4-r|=4-r & |4-0|=4 
\\
\hline
\AMD_k(S(r)) & \AMD_1=1 & \AMD_2=2.5 & \AMD_3=3.5 
\end{array}$
\medskip

$\begin{array}{l|lllll}
D(Q(r);3) & \mbox{distance to 1st neighb.} & \mbox{distance to 2nd neighb.} & \mbox{distance to 3rd neighb.} 
\\
\hline
p_1=0 & |0-(2+r)|=2+r & |0-(r+4-8)|=4-r & |0-4|=4 
\\
p_2=2+r & |(2+r)-4|=2-r & |(2+r)-(4+r)|=2 & |(2+r)-0|=2+r 
\\
p_3=4 & |4-(4+r)|=r & |4-(2+r)|=2-r & |4-0|=4 
\\
p_4=4+r & |(4+r)-4|=r & |(4+r)-(2+r)|=2 & |(4+r)-8|=4-r 
\\
\hline
\AMD_k(Q(r)) & \AMD_1=1+0.5r & \AMD_2=2.5-0.5r & \AMD_3=3.5 
\end{array}$
\medskip

\caption{Matrices $D(S;k)$ and average minimum distances (AMD) from Definition~\ref{dfn:AMD} distinguish the periodic sets $S(r)=\{0, r, r+2, 4\}+8\Z$ and $Q(r)=\{0, r+2, 4, r+4\}+8\Z$ for any $0<r\leq 1$.}
\label{tab:SQr}
\end{table*}

\newcommand{\hheight}{30mm}
\begin{figure}[h]
\centering
\includegraphics[height=\hheight]{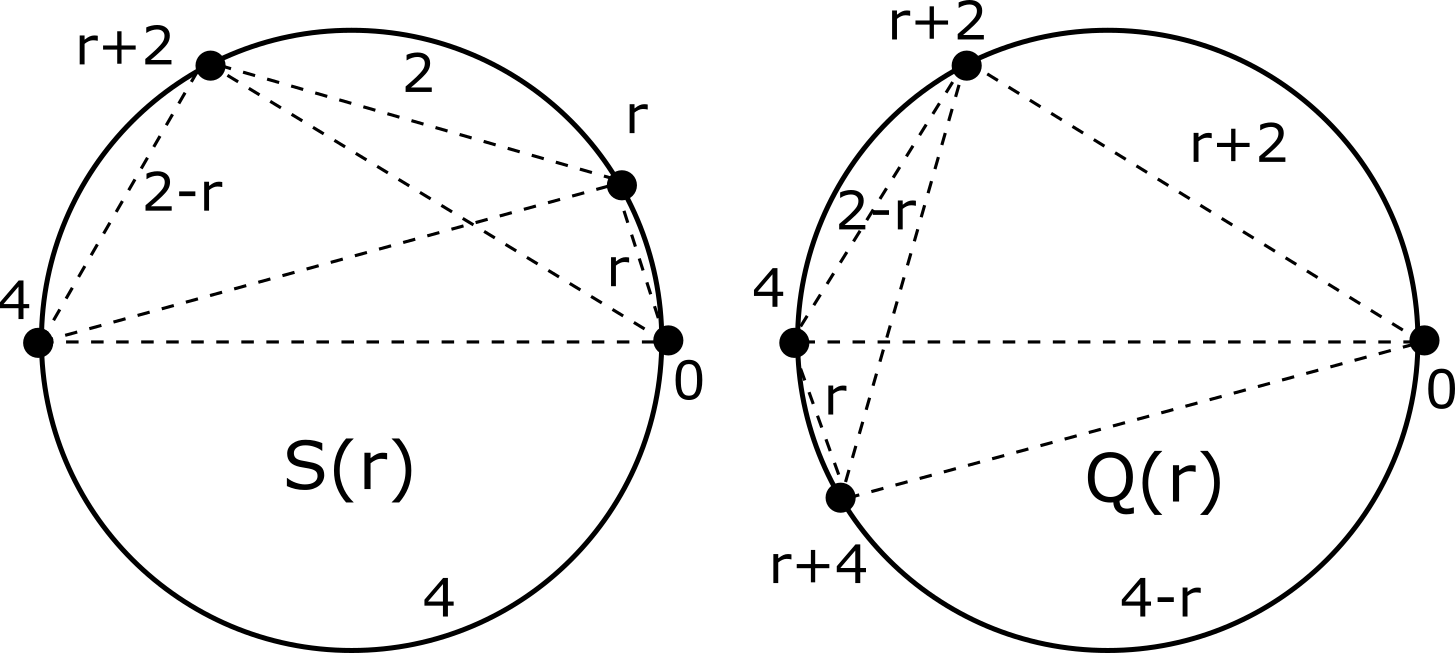}
\hspace*{2mm}
\includegraphics[height=\hheight]{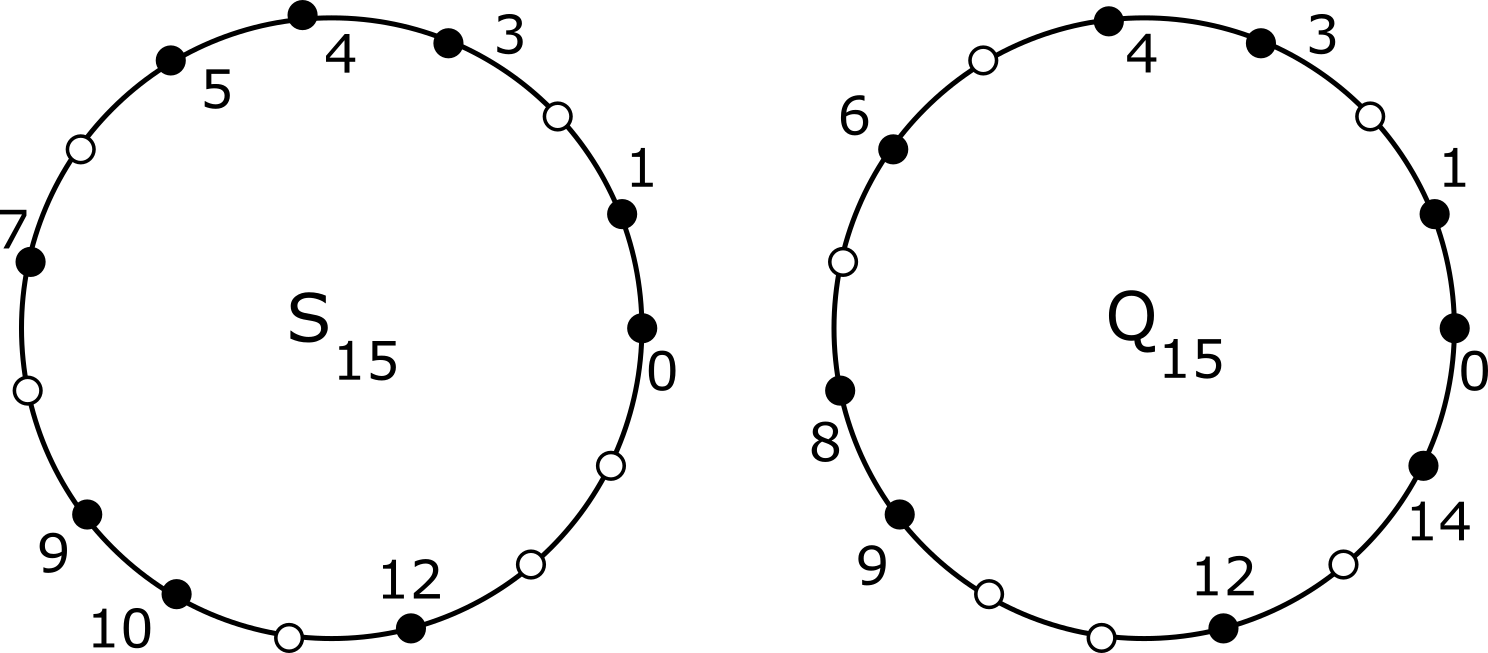}
\caption{\textbf{Left}: circular versions of the periodic point sets $S(r)=\{0, r, r+2, 4\}+8\Z$ and $Q(r)=\{0, r+2, 4, r+4\}+8\Z$ for $0<r\leq 1$.
The distances between points (shown outside the disk) are arc lengths (shown inside the disk).
\textbf{Right}: the periodic sets $S_{15},Q_{15}$ from Example~\ref{exa:SQr+SQ15}b are distinguished by $\AMD_3$, not by the density functions $\psi_k(t)$ for all $k\geq 1$ as shown in \cite[Example~11]{anosova2021introduction}. }
\label{fig:SQr+SQ15}
\end{figure}

\vspace*{-4mm}
  
\noindent
\begin{table}[h!]
\begin{tabular}{C{8mm}|C{0.8mm}C{0.8mm}C{0.8mm}C{0.8mm}C{0.8mm}C{0.8mm}C{0.8mm}C{1mm}C{2mm}|C{8mm}|}
\hline\noalign{\smallskip}
$S_{15}$ & 0 & 1 & 3 & 4 & 5 & 7 & 9 & 10 & 12 & $\AMD_k$ \\
\hline
$k=1$      & 1 & 1 & 1 & 1 & 1 & 2 & 1 &  1 &  2 & 11/9 \\
$k=2$      & 3 & 2 & 2 & 1 & 2 & 2 & 2 &  2 &  3 & 19/9 \\
$k=3$      & 3 & 3 & 2 & 3 & 2 & 3 & 3 &  3 &  3 & 25/9 \\
$k=4$      & 4 & 4 & 3 & 3 & 4 & 3 & 4 &  5 &  4 & 34/9 
\end{tabular}
\begin{tabular}{|C{8mm}|C{0.8mm}C{0.8mm}C{0.8mm}C{0.8mm}C{0.8mm}C{0.8mm}C{0.8mm}C{1mm}C{2mm}|C{6mm}}
\hline\noalign{\smallskip}
$Q_{15}$ & 0 & 1 & 3 & 4 & 6 & 8 & 9 & 12 & 14 & $\AMD_k$ \\
\hline
$k=1$      & 1 & 1 & 1 & 1 & 2 & 1 & 1 &  2  & 1 & 11/9 \\
$k=2$      & 1 & 2 & 2 & 2 & 2 & 2 & 3 &  3  & 2 & 19/9 \\
$k=3$      & 3 & 2 & 3 & 3 & 3 & 4 & 3 &  3  & 2 & 26/9 \\
$k=4$      & 3 & 3 & 3 & 4 & 3 & 4 & 5 &  4  & 4 & 33/9
\end{tabular}
\caption{\textbf{First row}: 9 points from the motif $M$ of the periodic sets $S_{15}$ (left) and $Q_{15}$ (right) in Fig.~\ref{fig:SQr+SQ15}.
\textbf{Further rows}: distances from each $p\in M$ to its $k$-th nearest neighbour in $S_{15}, Q_{15}$.}
\label{tab:SQ15}
\end{table}
  
\section{Continuity of Average Minimum Distances under perturbations}
\label{sec:continuity}

For the isometry invariance of $\AMD(S;k)$ in Theorem~\ref{thm:AMD_invariant}, a unit cell $U$ in Definition~\ref{dfn:AMD} should be primitive.
If $U$ contains $m$ points and we make one edge of $U$ twice longer, the resulting non-primitive unit cell contains $2m$ points and the matrix $D(S;k)$ will be twice larger.
A translated copy of any point $p_i\in U$ will have exactly the same ordered distances to its neighbours as $p_i$ due to periodicity.
After doubling $U$, every row is repeated twice in $D(S;k)$.
The requirement of a primitive cell $U$ makes $D(S;k)$ discontinuous similarly to the cell volume $\vol[U]$ in Fig.~\ref{fig:deformations}.
One way to resolve this discontinuity is to average each column.
\medskip

The bottleneck distance $d_B$ below measures a maximum perturbation of points needed to get one set from another.
Such perturbations always exist due to thermal vibrations.

\begin{dfn}[bottleneck distance]
\label{dfn:BND}
For a bijection $g:S\to Q$ between finite or periodic sets $S,Q\subset\R^n$, the \emph{maximum deviation} is the supremum $\sup\limits_{p\in S}|p-g(p)|$ 
over $p\in S$.
The \emph{bottleneck distance} is $d_B(S,Q)=\inf\limits_{g:S\to Q}\;\sup\limits_{p\in S}|p-g(p)|$ is the infimum over bijections $g:S\to Q$.
\bs
\end{dfn}

The bottleneck distance is impractical to compute because of a minimisation over infinitely many bijections.
Theorem~\ref{thm:discontinuity} in Section~\ref{sec:discussion} will justify that there is no continuous way to select a unit cell of a lattice. 
However, continuity of isometry invariants can be proved for perturbations in the bottleneck distance.
Continuity Theorem~\ref{thm:continuity} requires Lemmas~\ref{lem:common_lattice} and \ref{lem:perturbed_distances}.

\begin{lem}[Lemma~2 in \cite{edels2021}]
\label{lem:common_lattice}
Let periodic point sets $S,Q\subset\R^n$ have a bottleneck distance $d_B(S,Q)<r(Q)$, where $r(Q)$ is the minimum half-distance between points of $S$.
Then $S,Q$ have a common lattice $\La$ with a unit cell $U$ such that $S=\La+(U\cap S)$, $Q=\La+(U\cap Q)$.
\bs
\end{lem}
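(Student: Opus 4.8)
The plan is to build everything on the bijection forced by the hypothesis, reduce the statement to a commensurability claim about the two period lattices, and then establish commensurability by a drift argument. Since $d_B(S,Q)<r(Q)$ there is a bijection $g\colon S\to Q$ with $\sup_{p\in S}|p-g(p)|<r(Q)$, and I would first note that this $g$ is canonical: for $p\in S$, $g(p)$ is the \emph{only} point of $Q$ in the open ball $B(p,r(Q))$, because two such points of $Q$ would be within $2r(Q)$ of each other, contradicting that distinct points of $Q$ are at least $2r(Q)$ apart; dually, $g^{-1}(q)$ is the only point of $S$ in $B(q,r(Q))$ (if $p',p''\in S$ both lay there, then $g(p'),g(p'')\in Q$ would each be within $2r(Q)$ of $q$, which is again impossible). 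In particular, whenever a lattice $\Lambda$ satisfies $S+\Lambda=S$ and $Q+\Lambda=Q$, this $g$ is automatically $\Lambda$-equivariant: for $z\in\Lambda$ the point $g(s)+z\in Q$ lies in $B(s+z,r(Q))$, hence equals $g(s+z)$.

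Next I would reduce to commensurability. Let $\Lambda_S,\Lambda_Q\subset\R^n$ be the full translation lattices of $S$ and $Q$; both have rank $n$. If $\Lambda:=\Lambda_S\cap\Lambda_Q$ again has rank $n$, then $\Lambda$ is a lattice, $S$ and $Q$ are both $\Lambda$-periodic, and for any unit cell $U$ of $\Lambda$ (a fundamental domain) we get $S=\Lambda+(U\cap S)$ and $Q=\Lambda+(U\cap Q)$. So the entire content of the lemma is that $\Lambda_S$ and $\Lambda_Q$ are commensurable, i.e. that every $v\in\Lambda_S$ has a positive integer multiple in $\Lambda_Q$; applying this to a basis of $\Lambda_S$ then produces a rank-$n$ sublattice inside $\Lambda_S\cap\Lambda_Q$.

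To prove commensurability, fix $v\in\Lambda_S\setminus\{0\}$ and consider the self-bijection $\sigma:=g\circ T_v\circ g^{-1}$ of $Q$, where $T_v$ is translation by $v$. Since $g^{-1}g=\mathrm{id}$ one has $\sigma^{k}(q)=g(g^{-1}(q)+kv)$, so writing $\delta(q)=g^{-1}(q)-q$ and $\epsilon(p)=g(p)-p$ (both of norm $<r(Q)$),
\[
\bigl|\sigma^{k}(q)-q-kv\bigr|=\bigl|\delta(q)+\epsilon\bigl(g^{-1}(q)+kv\bigr)\bigr|<2r(Q)\qquad(k\in\Z).
\]
Hence every orbit $k\mapsto\sigma^{k}(q)$ stays within distance $2r(Q)$ of the line $q+\R v$, while $\sigma^{k}(q)\bmod\Lambda_Q$ runs through the finite set $Q/\Lambda_Q$; pigeonholing over $k=0,1,\dots,|Q/\Lambda_Q|$ gives an integer $1\le N\le|Q/\Lambda_Q|$ and a $w\in\Lambda_Q$ with $|w-Nv|<2r(Q)$. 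The remaining task is to upgrade this to $Nv\in\Lambda_Q$ exactly. Here I would argue by contradiction: if no multiple of $v$ lay in $\Lambda_Q$, the points $g^{-1}(q)+kNv\in S$ would be within $r(Q)$ of $Q$ for all $k$, yet their residues mod $\Lambda_Q$ would wind densely around a subtorus of positive dimension in $\R^n/\Lambda_Q$; tracking how the correction $\epsilon(g^{-1}(q)+kNv)$ is forced to evolve along this orbit makes it drift out of the ball of radius $r(Q)$, contradicting the uniform bound on $\epsilon$. Equivalently, a globally bounded displacement field cannot carry the $\Lambda_Q$-periodic set $Q$ onto a periodic set whose lattice is irrationally related to $\Lambda_Q$. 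Doing this for a basis of $\Lambda_S$ then finishes the proof via the reduction above.

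The main obstacle is precisely this last amplification. The pigeonhole only yields an approximate relation $|w-Nv|<2r(Q)$ with a \emph{constant} error, and iterating ($kNv$) makes the error grow rather than shrink; a volume or packing estimate on the subtorus is also too weak, since $Q$ may have an arbitrarily large motif and hence cover a subtorus with many small balls. The argument has to genuinely use that the full displacement field between $S$ and $Q$ is bounded by $r(Q)$ while being forced to respect a period of $S$ that visits a dense orbit on $\R^n/\Lambda_Q$; and this is exactly where the strict inequality $d_B(S,Q)<r(Q)$ (rather than mere finiteness of $d_B$) enters — two equal-covolume but irrationally rotated lattices show the conclusion fails without it.
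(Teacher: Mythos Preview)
Your reduction to commensurability of $\Lambda_S$ and $\Lambda_Q$ is correct, and your setup via the canonical bijection $g$ is fine, but the proposal has a genuine gap exactly where you flag it. Your pigeonhole on the finite motif $Q/\Lambda_Q$ only yields $|w-Nv|<2r(Q)$ (in fact $<4r(Q)$ after taking differences), a constant error that does not shrink under iteration; and your ``drift'' sketch is not an argument --- the displacement $\epsilon$ never drifts out of the $r(Q)$-ball, so nothing forces a contradiction along those lines. The difficulty is real: you must rule out that the dense orbit on a positive-dimensional subtorus of $\R^n/\Lambda_Q$ stays inside the union of finitely many $d_B$-balls around motif points, and your framework gives no leverage on that.

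The paper's proof supplies precisely the missing idea, and it bypasses the bijection $g$ entirely. Work directly with the $S$-orbit: after translating so that $0\in S$, the points $k\vec p\in S$ (for $\vec p\in\Lambda_S$) reduce to $\vec q(k)\in U(Q)$. Now pigeonhole \emph{geometrically} on the compact cell $U(Q)$, not on the finite motif: choose $i\neq j$ with $|\vec q(i)-\vec q(j)|<\delta$, where $\delta=r(Q)-d_B(S,Q)>0$ is the slack in the hypothesis. Then the arithmetic progression $\vec q(i+k(j-i))=\vec q(i)+k(\vec q(j)-\vec q(i))$ (as long as it stays in $U(Q)$) is a line of points of $S\bmod\Lambda_Q$ with successive spacing $<\delta$. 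But the closed balls of radius $d_B(S,Q)$ around motif points of $Q$ are pairwise at distance $\geq 2r(Q)-2d_B(S,Q)=2\delta$. Consecutive orbit points at distance $<\delta$ cannot lie in distinct such balls, so either all orbit points lie in a single ball (impossible once the line segment is long enough), or some orbit point lies outside all of them --- i.e., a point of $S$ is farther than $d_B(S,Q)$ from $Q$, contradicting the definition of $d_B$. This is where the strict inequality $d_B(S,Q)<r(Q)$ is actually used: it produces the positive $\delta$ that makes the step-versus-gap comparison work.
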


\begin{lem}[perturbed distances]
\label{lem:perturbed_distances}
For some $\ep>0$, let $g:S\to Q$ be a bijection between finite or periodic sets such that $|a-g(a)|\leq\ep$ for all $a\in S$.
For any $i\geq 1$, let $a_i\in S$ and $b_i\in Q$ be the $i$-nearest neighbours of $a\in S$ and $b=g(a)\in Q$, respectively.
Then the Euclidean distances from $a,b$ to their $i$-th neighbours $a_i,b_i$ are $2\ep$-close, i.e. $||a-a_i|-|b-b_i||\leq 2\ep$. 
\bs
\end{lem}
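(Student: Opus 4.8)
The plan is to reduce the statement to a one-sided comparison and then symmetrise using the inverse bijection $g^{-1}$. First I would record the elementary characterisation of the $i$-th nearest-neighbour distance: for a point $a$ of a (finite or periodic) set $S$, the distance $|a-a_i|$ equals the $i$-th smallest value in the multiset $\{\,|a-p| \vl p\in S\setminus\{a\}\,\}$; equivalently, $|a-a_i|$ is the smallest radius $t\geq 0$ for which the closed ball $B(a;t)$ contains at least $i$ points of $S$ other than $a$. The same characterisation applies to $b=g(a)$ in $Q$.

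Next comes the core estimate. Fix $i$ distinct points $a_1,\dots,a_i\in S\setminus\{a\}$ realising the $i$ nearest neighbours of $a$, so that $|a-a_j|\leq|a-a_i|$ for every $j\leq i$. Since $g$ is injective and each $a_j\neq a$, the images $g(a_1),\dots,g(a_i)$ are $i$ pairwise distinct points of $Q\setminus\{b\}$, and the triangle inequality together with the hypotheses $|a_j-g(a_j)|\leq\ep$ and $|a-b|=|a-g(a)|\leq\ep$ gives
\[ |b-g(a_j)|\leq|b-a|+|a-a_j|+|a_j-g(a_j)|\leq\ep+|a-a_i|+\ep. \]
Hence the ball $B(b;\,|a-a_i|+2\ep)$ contains at least $i$ points of $Q$ other than $b$, so by the characterisation above $|b-b_i|\leq|a-a_i|+2\ep$.

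Finally I would symmetrise. The inverse bijection $g^{-1}:Q\to S$ also displaces every point by at most $\ep$, because $|g^{-1}(q)-q|=|a-g(a)|\leq\ep$ where $a=g^{-1}(q)$, and $g^{-1}(b)=a$; running the previous argument with the roles of $S$ and $Q$ exchanged yields $|a-a_i|\leq|b-b_i|+2\ep$. Combining the two inequalities gives $||a-a_i|-|b-b_i||\leq 2\ep$, which is the claim.

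I do not expect a genuine obstacle here. The only steps requiring care are (i) checking that the $i$ images $g(a_1),\dots,g(a_i)$ are pairwise distinct and distinct from $b$ — this is exactly where injectivity of $g$ is used — and (ii) handling possible ties among the distances, which is why it is cleaner to argue through the ``at least $i$ points inside a ball'' formulation rather than by literally selecting a unique $i$-th neighbour. For finite sets one additionally assumes $i\leq m-1$ so that an $i$-th neighbour exists; for periodic sets, where every point has infinitely many neighbours, no such restriction is needed.
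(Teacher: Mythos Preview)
Your proof is correct and follows essentially the same idea as the paper's: transport the $i$ nearest neighbours through the bijection and use the displacement bound to control the change in distance, then do the other direction. The paper packages this as a proof by contradiction after first translating $Q$ so that $a=g(a)$ (which collapses the two $\ep$-shifts into a single $2\ep$-shift), whereas you argue directly via the triangle inequality and then symmetrise with $g^{-1}$; your version is a bit more explicit about why the $i$ transported points are distinct and different from $b$, and your ``at least $i$ points in a ball'' formulation handles ties cleanly.
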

\begin{proof}
Translate the full set $Q$ by the vector $a-g(a)$.
So we assume that $a=g(a)$ and $|b-g(b)|<2\ep$ for all $b\in S$.
Assume by contradiction that the distance from $a$ to its $i$-th neighbour $b_i$ is less than $|a-a_i|-2\ep$.
Then all first $i$ neighbours $b_1,\dots,b_i$ of $a$ within $Q$ belong to the open ball with the center $a$ and the radius $|a-a_i|-2\ep$. 
Since the bijection $g$ shifted every point $b_1,\dots,b_i$ by at most $2\ep$, their preimages $g^{-1}(b_1),\dots,g^{-1}(b_i)$ belong to the open ball with the center $a=g(a)$ and the radius $|a-a_i|$.
Then the $i$-th neighbour of $a$ within $S$ is among these $i$ preimages.
Hence the distance from $a$ to its $i$-th nearest neighbour is strictly less than the required distance $|a-a_i|$.
A similar contradiction is obtained from the assumption that the distance from $a$ to its new $i$-th neighbour $b_i$ is more than $|a-a_i|+2\ep$.
\end{proof}

\begin{theorem}[continuity of $\AMD$ under perturbations]
\label{thm:continuity}
Let finite or periodic sets $S,Q\subset\R^n$ satisfy $d_B(S,Q)<r(Q)$.
Then $|\AMD_k(S)-\AMD_k(Q)|\leq 2d_B(S,Q)$ for any $k\geq 1$.
\bs
\end{theorem}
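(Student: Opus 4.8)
The plan is to reduce everything to a single pointwise estimate and then average it. For a set $X\subset\R^n$ and a point $p$, write $f_X(p)$ for the distance from $p$ to its $k$-th nearest neighbour inside $X$; so $\AMD_k(S)$ is the average of $f_S$ over the motif points of $S$. Fix any $\ep>d_B(S,Q)$; by definition of the bottleneck distance there is a bijection $g\colon S\to Q$ with $|p-g(p)|\le\ep$ for every $p\in S$. Applying Lemma~\ref{lem:perturbed_distances} to $g$ with $i=k$ gives exactly $\big|f_S(p)-f_Q(g(p))\big|\le 2\ep$ for all $p\in S$. The whole proof is then about organising an average of this inequality over a bijection.

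For finite sets $S,Q$ this is immediate: the hypothesis forces $|S|=|Q|=:m$ (otherwise no bijection exists), and since $g$ is a bijection, $\sum_{q\in Q}f_Q(q)=\sum_{p\in S}f_Q(g(p))$, hence
\[
\big|\AMD_k(S)-\AMD_k(Q)\big|=\frac1m\Big|\sum_{p\in S}\big(f_S(p)-f_Q(g(p))\big)\Big|\le\frac1m\cdot m\cdot 2\ep=2\ep,
\]
and letting $\ep\downarrow d_B(S,Q)$ finishes the finite case.

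For periodic sets the obstacle is that $g$ need not be compatible with any lattice, so it does not descend to fundamental domains; I would get around this by averaging over a large region and letting its size grow. Since $d_B(S,Q)<r(Q)$, Lemma~\ref{lem:common_lattice} supplies a common lattice $\La$ and a unit cell $U$ with $S=\La+(U\cap S)$, $Q=\La+(U\cap Q)$, and comparing point counts in large balls shows $|U\cap S|=|U\cap Q|=:N$. Both $f_S$ and $f_Q$ are $\La$-periodic, so $\AMD_k(S)$ equals the average of $f_S$ over $W\cap S$ for any region $W$ that is a finite union of $\La$-translates of $U$, and likewise for $Q$; in particular $f_Q$ is bounded by a constant $C=C(Q,k)$ since it takes only $N$ values. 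Take $W$ to be a union of $L^n$ translates of $U$ tiling a large box, so $|W\cap S|=|W\cap Q|=L^nN$. Then $g$ restricted to $W\cap S$ is an injection into $Q$ whose image lies in the $\ep$-neighbourhood of $W$, so the index sets $g(W\cap S)$ and $W\cap Q$ differ only at points within $\ep$ of $\partial W$, of which there are $O(L^{n-1})$; splitting $\sum_{p\in W\cap S}f_S(p)-\sum_{q\in W\cap Q}f_Q(q)$ as $\sum_{p\in W\cap S}\!\big(f_S(p)-f_Q(g(p))\big)$ plus a boundary-mismatch term bounded by $C\cdot O(L^{n-1})$, and dividing by $L^nN$, gives
\[
\big|\AMD_k(S)-\AMD_k(Q)\big|\le 2\ep+\frac{C\cdot O(L^{n-1})}{L^nN}.
\]
Sending $L\to\infty$ kills the error term, so $|\AMD_k(S)-\AMD_k(Q)|\le 2\ep$, and then $\ep\downarrow d_B(S,Q)$ completes the proof.

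The one thing that needs care is the bookkeeping of the boundary mismatch: one must verify both that $g$ pushes at most $O(L^{n-1})$ points of $W\cap S$ outside $W\cap Q$ and that at most $O(L^{n-1})$ points of $W\cap Q$ fail to be in the image, and that the uniform bound on $f_Q$ is legitimate — it is, by $\La$-periodicity. Everything else is a direct application of Lemmas~\ref{lem:common_lattice} and~\ref{lem:perturbed_distances} together with the telescoping over $g$. A slicker-looking alternative would be to first argue that a near-optimal bottleneck bijection may be chosen $\La$-equivariant and then average over a single fundamental domain, but making that choice rigorous seems more delicate than the limiting argument above, so I would not pursue it.
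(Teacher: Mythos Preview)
Your argument is correct. The paper, however, takes exactly the ``slicker-looking alternative'' you set aside at the end: once Lemma~\ref{lem:common_lattice} supplies a common lattice $\La$, the paper does not pick an arbitrary near-optimal bijection but instead sets $g(a)$ to be the unique point of $Q$ within distance $\ep=d_B(S,Q)$ of $a$ --- uniqueness is precisely what $\ep<r(Q)$ buys. Because both $S$ and $Q$ are $\La$-periodic, this nearest-point map is automatically $\La$-equivariant, so it descends to a map between the motifs $S\cap U$ and $Q\cap U$; combined with $|S\cap U|=|Q\cap U|$ (which you also note) this is a bijection on a single cell, and one simply averages the estimate from Lemma~\ref{lem:perturbed_distances} over that cell --- no large box, no boundary term, no limit in $L$ or in $\ep$. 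Your route is longer but more robust: it works for \emph{any} bijection approximating the bottleneck distance and never needs to verify that a canonical periodic matching exists, so it would survive weaker hypotheses. The paper's route is shorter precisely because the hypothesis $d_B(S,Q)<r(Q)$ is strong enough to manufacture that equivariant matching for free; what you feared was ``more delicate'' is in fact a one-line observation about the nearest-neighbour map.
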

\begin{proof}
By Lemma~\ref{lem:common_lattice} the sets $S,Q$ have a common lattice $\La$.
Any primitive cell $U$ of $\La$ is a unit cell of $S,Q$, i.e. $S=\La+(S\cap U)$ and $Q=\La+(Q\cap U)$.
Since the bottleneck distance $\ep=d_B(S,Q)<r(Q)$, we can define a bijection $g$ from every point $a\in S$ to its unique $\ep$-closest neighbour $g(a)\in Q$. 
If $U$ is a non-primitive unit cell of $S$, the matrix $D(S;k)$ can be constructed as in Definition~\ref{dfn:AMD}, but each row will be repeated $n(S)>1$ times, where $n(S)$ is $\vol[U]$ divided by the volume of a primitive unit cell of $S$.
The average $\AMD_k(S)$ in the $k$-th column is independent of the factor $n(S)>1$.
Since the above conclusions hold for $Q$ instead of $S$, we now compare the matrices $D(S;k)$ and $D(Q;k)$ that are built on the same unit cell $U$ and have equal sizes. 
By Lemma~\ref{lem:perturbed_distances} the corresponding elements of the matrices $D(S;k)$ and $D(Q;k)$ differ by at most $2\ep$, i.e. $|D_{ij}(S;k)-D_{ij}(Q;k)|\leq 2\ep$.
Then the average of the $k$-th column changes by at most $2\ep$, i.e. $|\AMD_k(S)-\AMD_k(Q)|\leq 2\ep$.
\end{proof}

 
\section{The asympotic behaviour of Average Minimum Distances}
\label{sec:asymptotic}

The main result of this section is Theorem~\ref{thm:asymptotic} explicitly describing the asymptotic growth of $\AMD_k$ as $k\to+\infty$ for a wide class of sets including non-periodic sets.
$\AMD_k(S)$ approaches $c(S)\sqrt[n]{k}$.
The point packing coefficient $c(S)$ is introduced below, see examples in Fig.~\ref{fig:lattices2D}.
\medskip

The volume of the unit ball in $\R^n$ is $V_n=\dfrac{\pi^{n/2}}{\Ga(\frac{n}{2}+1)}$, where $\Ga$ denotes Euler's Gamma function $\Ga(m)=(m-1)!$ and $\Ga(\frac{m}{2}+1)=\sqrt{\pi}(m-\frac{1}{2})(m-\frac{3}{2})\cdots\frac{1}{2}$ for any integer $m\geq 1$.

\begin{dfn}[$(U,m)$-sets $S$, $\AMD_k(S;U)$, the point packing coefficient $c(S)$]
\label{dfn:Um-set}
Let $U$ be a unit cell of a lattice $\La\subset\R^n$.
For any fixed $m\geq 1$, a set $S\subset\R^n$ is called a \emph{$(U,m)$-set} if $S\cap (U+\vec v)$ consists of $m$ points for any vector $\vec v\in\La$.
For any point $p\in S\cap U$, let $d_k(S;p)$ be the distance from $p$ to its $k$-th nearest neighbour in $S$.
The \emph{Average Minimum Distance} is $\AMD_k(S;U)=\dfrac{1}{m}\sum\limits_{p\in S\cap U} d_k(S;p)$.
The \emph{Point Packing Coefficient} is $c(S)=\sqrt[n]{\dfrac{\vol[U]}{mV_n}}$.
\bs
\end{dfn}

\newcommand{\size}{17mm}
\begin{figure}[h!]
\centering
\includegraphics[height=\size]{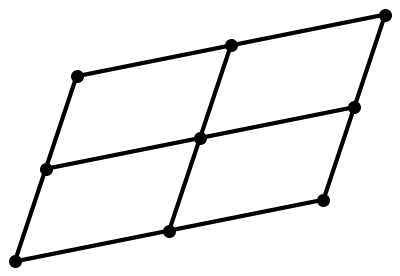}
\hspace*{0mm}
\includegraphics[height=\size]{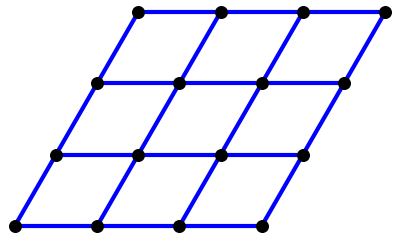}
\hspace*{0mm}
\includegraphics[height=\size]{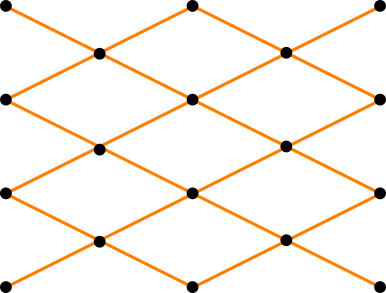}
\includegraphics[height=\size]{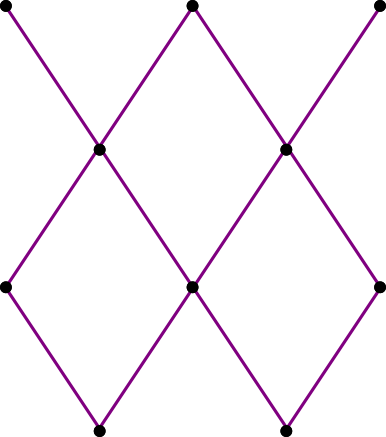}
\hspace*{0mm}
\includegraphics[height=\size]{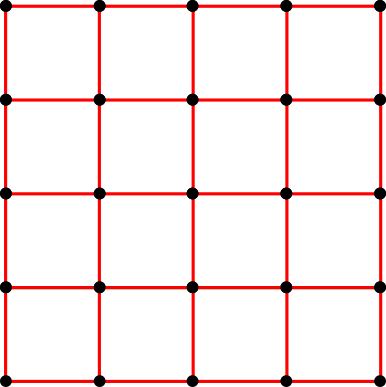}
\hspace*{0mm}
\includegraphics[height=\size]{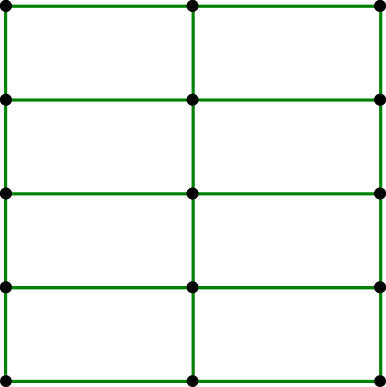}
\caption{
The six 2-dimensional lattices $\La_i$ have point packing coefficients $c(\La_i)$ below and curves of $\AMD_k$ in Fig.~\ref{fig:lattices2D_AMD30+300}. 
\textbf{1st}: a generic black lattice $\La_1$ with the basis $(1.25,0.25),(0.25,0.75)$ and $c(\La_1)=\sqrt{\dfrac{7}{8\pi}}\approx 0.525$.
\textbf{2nd}: the blue hexagonal lattice $\La_2$ with the basis $(1,0),(1/2,\sqrt{3}/2)$ and $c(\La_2)=\sqrt{\dfrac{\sqrt{3}}{2\pi}}\approx 0.528$.
\textbf{3rd}: the orange rhombic lattice $\La_3$ with the basis $(1,0.5),(1,-0.5)$ and $c(\La_3)=\sqrt{\dfrac{1}{\pi}}\approx 0.564$.
\textbf{4th}: the purple rhombic lattice $\La_4$ with the basis $(1,1.5),(1,-1.5)$ and $c(\La_4)=\sqrt{\dfrac{3}{\pi}}\approx 0.977$.
\textbf{5th}: the red square lattice $\La_5$ with the basis $(1,0),(0,1)$ and $c(\La_5)=\sqrt{\dfrac{1}{\pi}}\approx 0.564$. 
\textbf{6th}: the green rectangular lattice $\La_6$ with the basis $(2,0),(0,1)$ and $c(\La_6)=\sqrt{\dfrac{2}{\pi}}\approx 0.798$.
}
\label{fig:lattices2D}
\end{figure}

\begin{figure}[h!]
\centering
\includegraphics[width=0.49\linewidth]{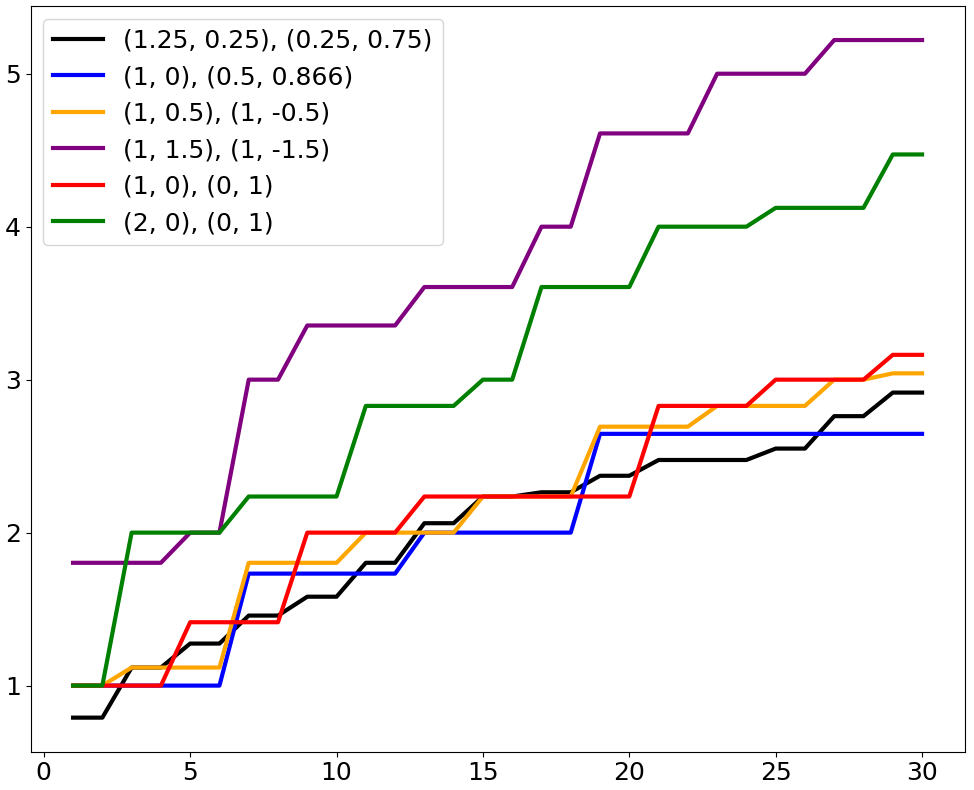}
\includegraphics[width=0.49\linewidth]{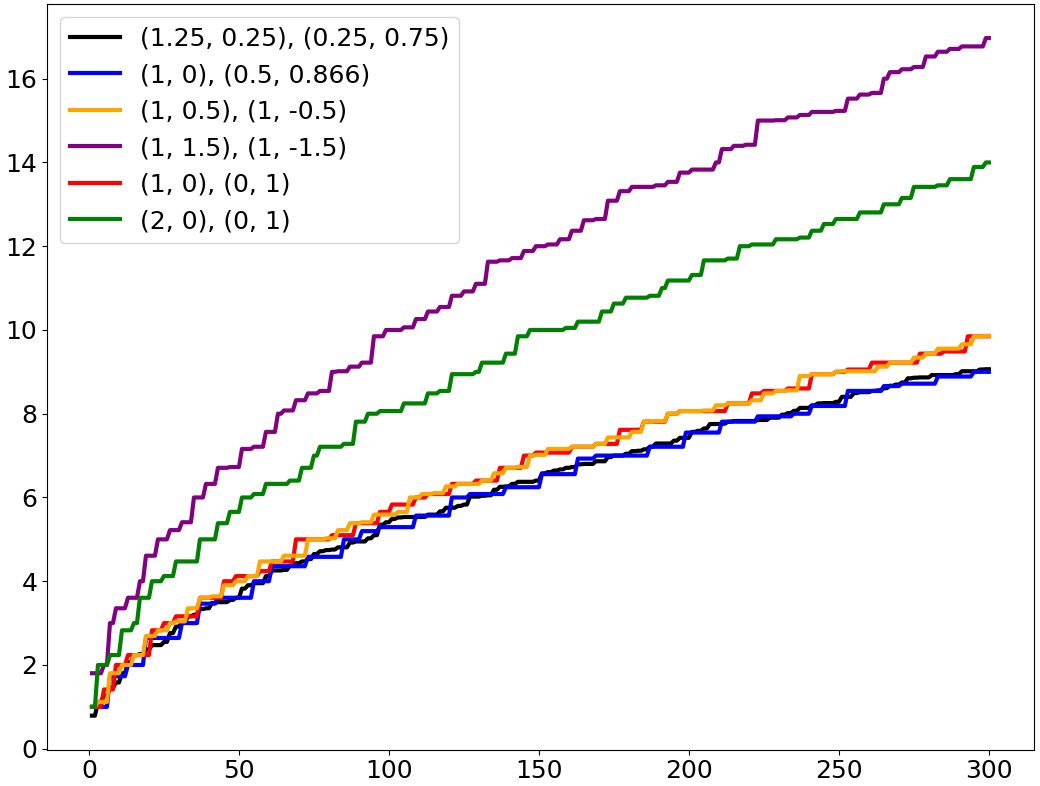}
\caption{\textbf{Left}: $\AMD_k$ up to $k=30$ for the 2D lattices in Fig.~\ref{fig:lattices2D}. 
\textbf{Right}: $\AMD_k$ extended to $k=300$.
Though the orange and red lattices have close point packing coefficients $c(\La_i)$, so their AMD curves approach each other by Theorem~\ref{thm:asymptotic}, $\AMD_k$ for small $k$ distinguish these lattices.}
\label{fig:lattices2D_AMD30+300}
\end{figure}

A broad example of a $(U,m)$-set is a non-periodic perturbation of a periodic set $S=\La+M$, where a lattice $\La$ is generated by a unit cell $U$, a motif $M$ has $m$ points. 
Since $S$ may not be periodic, $\AMD_k(S;U+\vec v)$ can depend on a shift vector $\vec v\in\La$.
Even if $S$ is periodic, a unit cell $U$ in Definition~\ref{dfn:Um-set} can be non-primitive. 
However, $\vol[U]/m$ is independent of a choice of $U$.
Hence $c(S)$ is an isometry invariant, also for any $(U,m)$-set $S$, because a shifted cell $U+\vec v$ contains the same number $m$ of points from $S$.
If all points have the weight $V_n$ of the unit ball in $\R^n$, then $(c(S))^n$ is inversely proportional to the density $\rho$ of $S$.
Lemmas~\ref{lem:ball_size} and~\ref{lem:distance_bounds} are needed to prove Theorem~\ref{thm:asymptotic}.
The \emph{diameter} of a unit cell $U$ is $d=\sup\limits_{a,b\in U}|a-b|$.

\begin{lem}[bounds on points within a ball]
\label{lem:ball_size}
Let $S\subset\R^n$ be any $(U,m)$-set with a unit cell $U$, which generates a lattice $\La$ and has a diameter $d$.
For any point $p\in S\cap U$ and a radius $r$, 
consider the \emph{lower} union  
$U'(p;r)=\bigcup \{(U+\vec v) \text{ such that } \vec v\in\La, (U+\vec v)\subset\bar B(p;r)\}$
and the \emph{upper} union 
$U''(p;r)=\bigcup \{(U+\vec v) \text{ such that }  \vec v\in\La, (U+\vec v)\cap\bar  B(p;r)\neq\emptyset\}$.
Then the number of points from $S$ in the closed ball $\bar B(p;r)$ with center $p$ and radius $r$ has the bounds
$\left(\dfrac{r-d}{c(S)}\right)^n\leq
m\dfrac{\vol[U'(p;r)]}{\vol[U]}\leq|S\cap\bar B(p;r)|\leq 
m\dfrac{\vol[U''(p;r)]}{\vol[U]}\leq
\left(\dfrac{r+d}{c(S)}\right)^n$.
\bs
\end{lem}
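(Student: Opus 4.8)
The plan is to exploit the fact that the half-open translates $\{U+\vec v\vl \vec v\in\La\}$ partition $\R^n$, so that counting points of $S$ reduces to counting tiles, each contributing exactly $m$ points by the $(U,m)$-set hypothesis. Write $N'(p;r)$ and $N''(p;r)$ for the numbers of tiles $U+\vec v$ appearing in the unions $U'(p;r)$ and $U''(p;r)$, respectively. Then $\vol[U'(p;r)]=N'(p;r)\vol[U]$ and $\vol[U''(p;r)]=N''(p;r)\vol[U]$, while the finite sets $S\cap U'(p;r)$ and $S\cap U''(p;r)$ have exactly $mN'(p;r)$ and $mN''(p;r)$ points. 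Substituting $\vol[U]=mV_n\,c(S)^n$ from Definition~\ref{dfn:Um-set} will turn all volume ratios into the powers of $\frac{r\pm d}{c(S)}$ appearing in the claim.

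For the two middle inequalities I would first record the set inclusions $U'(p;r)\subseteq\bar B(p;r)\subseteq U''(p;r)$. The left one is immediate from the definition of $U'(p;r)$. For the right one, any $q\in\bar B(p;r)$ lies in a unique tile $U+\vec v$, and that tile meets $\bar B(p;r)$ at $q$, so it is among the tiles forming $U''(p;r)$, giving $q\in U''(p;r)$. Intersecting with $S$ and using the counts above yields $m\dfrac{\vol[U'(p;r)]}{\vol[U]}=|S\cap U'(p;r)|\le|S\cap\bar B(p;r)|\le|S\cap U''(p;r)|=m\dfrac{\vol[U''(p;r)]}{\vol[U]}$.

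For the two outer inequalities I would sandwich the unions between concentric balls, using the diameter $d$ of $U$, which equals the diameter of every translate since translation is an isometry. If a tile $U+\vec v$ meets $\bar B(p;r)$ at a point $y$, then every $x\in U+\vec v$ has $|x-p|\le|x-y|+|y-p|\le d+r$, so $U''(p;r)\subseteq\bar B(p;r+d)$ and $\vol[U''(p;r)]\le V_n(r+d)^n$. Conversely, assuming $r\ge d$, if $q\in\bar B(p;r-d)$ then its tile $U+\vec v$ has $|x-p|\le|x-q|+|q-p|\le d+(r-d)=r$ for all $x\in U+\vec v$, so the whole tile lies in $\bar B(p;r)$, hence in $U'(p;r)$; thus $\bar B(p;r-d)\subseteq U'(p;r)$ and $\vol[U'(p;r)]\ge V_n(r-d)^n$. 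Dividing by $\vol[U]=mV_n\,c(S)^n$ gives $\left(\dfrac{r-d}{c(S)}\right)^n\le m\dfrac{\vol[U'(p;r)]}{\vol[U]}$ and $m\dfrac{\vol[U''(p;r)]}{\vol[U]}\le\left(\dfrac{r+d}{c(S)}\right)^n$, closing the chain. The case $r<d$, in which $U'(p;r)$ may be empty, is irrelevant for the asymptotic application and can be dismissed (the left bound being the trivial $\le 0$ when $n$ is odd).

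The argument is essentially bookkeeping, and I expect no genuine obstacle: the only point requiring care is the interaction between the half-open tiles and the closed ball, i.e. ensuring that ``a tile contained in / meeting $\bar B(p;r)$'' translates correctly both into point counts (via the partition and the constant count $m$ per tile) and into the nested-ball volume comparisons (via the diameter estimate and the $r\ge d$ caveat).
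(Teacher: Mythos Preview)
Your proposal is correct and follows essentially the same approach as the paper: the nested inclusions $U'(p;r)\subseteq\bar B(p;r)\subseteq U''(p;r)$ give the middle bounds via the constant count $m$ per tile, and the ball inclusions $\bar B(p;r-d)\subseteq U'(p;r)$ and $U''(p;r)\subseteq\bar B(p;r+d)$ give the outer bounds after dividing by $\vol[U]=mV_n\,c(S)^n$. Your treatment is in fact slightly more careful than the paper's, which silently assumes $r\ge d$ for the lower bound; your remark that this case is irrelevant for the asymptotic application is exactly right.
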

\begin{proof}
Intersect the three regions ${U'}(p;r)\subset\bar B(p;r)\subset{U''}(p;r)$ with $S$ in $\R^n$ and count the numbers of resulting points:
$|S\cap{U'}(p;r)|\leq |S\cap\bar B(p;r)|\leq |S\cap{U''}(p;r)|$.
\medskip

The union ${U'}(p;r)$ consists of $\dfrac{\vol[U'(p;r)]}{\vol[U]}$ cells, which all have the same volume $\vol[U]$.
Since $|S\cap U|=m$, we now get $|S\cap U'(p;r)|=m\dfrac{\vol[U'(p;r)]}{\vol[U]}$.
Similarly we count the points in the upper union:
$|S\cap{U''}(p;r)|=m\dfrac{\vol[{U''}(p;r)]}{\vol[U]}$.
The bounds of $|S\cap\bar  B(p;r)|$ become
$$m\dfrac{\vol[{U'}(p;r)]}{\vol[U]}\leq|S\cap\bar B(p;r)|\leq
m\dfrac{\vol[{U''}(p;r)]}{\vol[U]},$$
$$\vol[{U'}(p;r)]\leq\dfrac{\vol[U]}{m}|S\cap\bar B(p;r)|\leq\vol[{U''}(p;r)].$$
For the diameter $d$ of the unit cell $U$, the smaller ball $\bar B(p;r-d)$ is completely contained within the lower union ${U'}(p;r)$.
Indeed, if $|\vec q-\vec p|\leq r-d$, then $q\in U+\vec v$ for some $\vec v\in\La$.
Then $(U+\vec v)$ is covered by the ball $\bar B(q;d)$, hence by $\bar B(p;r)$ due to the triangle inequality.
\medskip

The inclusion $\bar B(p;r-d)\subset{U'}(p;r)$ implies the lower bound for the volumes: 
$$V_n(r-d)^n=\vol[\bar B(p;r-d)]\leq\vol[{U'}(p;r)]\text{  where},$$ $V_n$ is the volume of the unit ball in $\R^n$.
The similar inclusion $U''(p;r)\subset\bar B(p;r+d)$ gives 
$$\vol[{U''}(p;r)]\leq\vol[\bar B(p;r+d)]=V_n(r+d)^n,\;
V_n(r-d)^n\leq\dfrac{\vol[U]}{m}|S\cap B(p;r)|\leq V_n(r+d)^n,$$
$\dfrac{mV_n}{\vol[U]}(r-d)^n\leq|S\cap B(p;r)|\leq\dfrac{mV_n}{\vol[U]}(r+d)^n$,
which implies the required inequality.
\end{proof}

\begin{lem}[distance bounds]
\label{lem:distance_bounds}
Let $S\subset\R^n$ be any $(U,m)$-set with a unit cell $U$ of diameter $d$, see Definition~\ref{dfn:Um-set}.
For any point $p\in S\cap U$, let $d_k(S;p)$ be the distance from $p$ to its $k$-th nearest neighbour in $S$.
Then 
$c(S)\sqrt[n]{k}-d<d_k(S;p)\leq c(S)\sqrt[n]{k}+d$ for any $k\geq 1$.
\bs
\end{lem}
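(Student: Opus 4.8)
The plan is to read off both inequalities of the lemma from the point-counting estimates of Lemma~\ref{lem:ball_size}, by comparing the number of points of $S$ lying in a ball $\bar B(p;r)$ against $k$, where $r$ is a small perturbation of $c(S)\sqrt[n]{k}\pm d$.

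First, for the upper bound I would fix $p\in S\cap U$ and $k\ge 1$, take an arbitrary $\ep>0$, and set $r=c(S)\sqrt[n]{k}+d+\ep$. Lemma~\ref{lem:ball_size} then gives $|S\cap\bar B(p;r)|\ge\left(\dfrac{r-d}{c(S)}\right)^n=\left(\sqrt[n]{k}+\dfrac{\ep}{c(S)}\right)^n>k$; since the left-hand side is an integer it is at least $k+1$, so $\bar B(p;r)$ contains $p$ together with at least $k$ further points of $S$. Those $k$ points are neighbours of $p$ lying within distance $r$, hence $d_k(S;p)\le r=c(S)\sqrt[n]{k}+d+\ep$, and letting $\ep\to 0^+$ yields $d_k(S;p)\le c(S)\sqrt[n]{k}+d$.

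Next, for the lower bound I would argue by contradiction. If $c(S)\sqrt[n]{k}\le d$ the claim is trivial since $d_k(S;p)>0$, so assume $r:=c(S)\sqrt[n]{k}-d>0$ and suppose $d_k(S;p)\le r$. Then $p$ together with its $k$ nearest neighbours form $k+1$ distinct points of $S$ — they exist because any $(U,m)$-set is infinite — all contained in $\bar B(p;r)$, so $|S\cap\bar B(p;r)|\ge k+1$. On the other hand Lemma~\ref{lem:ball_size} bounds this count above by $\left(\dfrac{r+d}{c(S)}\right)^n=\left(\dfrac{c(S)\sqrt[n]{k}}{c(S)}\right)^n=k$, which is impossible. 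Hence $d_k(S;p)>c(S)\sqrt[n]{k}-d$, completing the argument.

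The hard part is only bookkeeping: at the critical radii $r=c(S)\sqrt[n]{k}\pm d$ the estimates of Lemma~\ref{lem:ball_size} evaluate to exactly $k$, so there is no margin to spare. This is what forces the $\ep$-limiting step in the upper bound (to upgrade a strict inequality on counts into the stated non-strict inequality on distances) and the strictness obtained via contradiction in the lower bound; it is also why one must be careful that $p$ is not counted among its own neighbours, so that a ball must contain $k+1$ points of $S$, not merely $k$, before one may conclude $d_k(S;p)\le r$. Everything else is immediate from the displayed inequalities already proved in Lemma~\ref{lem:ball_size}.
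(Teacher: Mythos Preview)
Your proof is correct and rests on the same tool as the paper's, namely the two-sided count of Lemma~\ref{lem:ball_size}; the only difference is the direction of the argument. The paper plugs in $r=d_k(S;p)$ (or $r$ just below it) and reads off bounds on $r$, whereas you plug in $r=c(S)\sqrt[n]{k}\pm d$ (with an $\ep$-perturbation for the upper bound) and compare the resulting point count with $k$. These are contrapositive versions of the same computation. Your write-up is slightly more explicit about two bookkeeping points the paper glosses over: the trivial case $c(S)\sqrt[n]{k}\le d$ in the lower bound, and the fact that $p$ itself is not one of its own neighbours, so one needs $k+1$ points in the ball.
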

\begin{proof}
The closed ball $\bar B(p;r)$ of radius $r=d_k(S;p)$ has more than $k$ points (including $p$) from $S$.
The upper bound of Lemma~\ref{lem:ball_size} for $r=d_k(S;p)$ implies that $k<|S\cap\bar  B(p;r)|\leq\dfrac{(r+d)^n}{(c(S))^n}$.
Taking the $n$-th root of both sides, we get
$\sqrt[n]{k}<\dfrac{r+d}{c(S)}$, so
$r=d_k(S;p)>c(S)\sqrt[n]{k}-d$.
\medskip

For any smaller radius $r<d_k(S;p)$, the closed ball $\bar B(p;r)$ contains at most $k$ points (including $p$) from $S$.
The lower bound of Lemma~\ref{lem:ball_size} for any $r<d_k(S;p)$ implies that $\dfrac{(r-d)^n}{c(S)^n}\leq |S\cap\bar B(p;r)|\leq k$.
Since $\dfrac{(r-d)^n}{c(S)^n}\leq k$ holds for the constant upper bound $k$ and any radius $r<d_k(S;p)$, the same inequality holds for  the radius $r=d_k(S;p)$.
\medskip

Similarly to the upper bound above, we get  
$\dfrac{r-d}{c(S)}\leq\sqrt[n]{k}$, 
$r=d_k(S;p)\leq c(S)\sqrt[n]{k}+d$.
Combine the two bounds above as follows: $c(S)\sqrt[n]{k}-d<d_k(S;p)\leq c(S)\sqrt[n]{k}+d$.
\end{proof}

\begin{theorem}[asymptotic behaviour of AMD]
\label{thm:asymptotic}
For any $(U,m)$-set $S\subset\R^n$ from Definition~\ref{dfn:Um-set}, we have $|\AMD_k(S;U)-c(S)\sqrt[n]{k}|\leq d$ for any $k\geq 1$ and
$\lim\limits_{k\to+\infty}\dfrac{\AMD_k(S;U)}{\sqrt[n]{k}}=c(S)$.
\bs
\end{theorem}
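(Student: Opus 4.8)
The plan is to derive both assertions directly from the pointwise distance bounds of Lemma~\ref{lem:distance_bounds}, since that lemma already does all the geometric work. First I would fix an arbitrary $(U,m)$-set $S\subset\R^n$ with unit cell $U$ of diameter $d$, and recall that by Lemma~\ref{lem:distance_bounds}, for every point $p\in S\cap U$ and every $k\geq 1$ we have the two-sided bound $c(S)\sqrt[n]{k}-d< d_k(S;p)\leq c(S)\sqrt[n]{k}+d$. These inequalities hold simultaneously for all $m$ points in $S\cap U$ with the same constants $c(S)\sqrt[n]{k}$ and $d$, which is the crucial point.

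Next I would average the bound over the $m$ points of $S\cap U$. Since $\AMD_k(S;U)=\frac{1}{m}\sum_{p\in S\cap U} d_k(S;p)$ is a convex combination of the quantities $d_k(S;p)$, averaging the inequality $c(S)\sqrt[n]{k}-d< d_k(S;p)\leq c(S)\sqrt[n]{k}+d$ preserves it, giving $c(S)\sqrt[n]{k}-d<\AMD_k(S;U)\leq c(S)\sqrt[n]{k}+d$. This is exactly the claimed estimate $|\AMD_k(S;U)-c(S)\sqrt[n]{k}|\leq d$ (the strict lower inequality of course implies the non-strict absolute-value bound).

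For the limit, I would divide the displayed estimate by $\sqrt[n]{k}>0$ to obtain $\left|\dfrac{\AMD_k(S;U)}{\sqrt[n]{k}}-c(S)\right|\leq\dfrac{d}{\sqrt[n]{k}}$, and then let $k\to+\infty$: since $d$ is a fixed constant depending only on the cell $U$, the right-hand side tends to $0$, so $\lim_{k\to+\infty}\dfrac{\AMD_k(S;U)}{\sqrt[n]{k}}=c(S)$, as required.

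Honestly there is no serious obstacle at the level of this theorem: the genuine difficulty was already absorbed into Lemma~\ref{lem:ball_size}, where the number of points of $S$ in a ball is sandwiched between volumes of inner and outer unions of lattice-translated cells, and into Lemma~\ref{lem:distance_bounds}, which converts those counting bounds into distance bounds by taking $n$-th roots. Given those, the only thing to be careful about is the bookkeeping with strict versus non-strict inequalities and making sure the averaging step is valid for a non-periodic $(U,m)$-set (it is, because the bound of Lemma~\ref{lem:distance_bounds} is uniform over $p\in S\cap U$ and does not use periodicity of $S$).
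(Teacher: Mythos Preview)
Your proposal is correct and follows essentially the same approach as the paper: average the pointwise bounds of Lemma~\ref{lem:distance_bounds} over $S\cap U$ to get $c(S)\sqrt[n]{k}-d<\AMD_k(S;U)\leq c(S)\sqrt[n]{k}+d$, then divide by $\sqrt[n]{k}$ and let $k\to+\infty$. Your added remarks on strict versus non-strict inequalities and on uniformity over $p\in S\cap U$ are accurate and slightly more explicit than the paper's own brief argument.
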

\begin{proof}
Averaging the bounds of Lemma~\ref{lem:distance_bounds} over all points $p\in S\cap U$, we get the required bounds:
$c(S)\sqrt[n]{k}-d<\AMD_k(S;U)=\dfrac{1}{m}\sum\limits_{p\in S\cap U}d_k(S;p)\leq c(S)\sqrt[n]{k}+d$, which imply that
$|\AMD_k(S;U)-c(S)\sqrt[n]{k}|\leq d$ for $k\geq 1$.
Hence
$\lim\limits_{k\to+\infty}\dfrac{\AMD_k(S;U)}{\sqrt[n]{k}}=c(S)$.
\end{proof}

\section{A near linear time algorithm, experiments and a discussion}
\label{sec:discussion}

This section describes the AMD algorithm in Theorem~\ref{thm:algorithm} and experiments on big datasets.
\medskip

The input for computing the an AMD is a periodic point set $S$ given by the basis vectors of its unit cell $U$ and the Cartesian coordinates of $m$ motif points.
The length $k$ of the vector $\AMD^{(k)}(S)=(\AMD_1,\dots,\AMD_k)$ is independent of a periodic point set $S$.
Increasing $k$ adds more components to the vector $\AMD^{(k)}$ without changing any previous values.
\medskip

The size of an input for any real periodic structure is proportional to the number $m$ of points in a motif.
Theorem~\ref{thm:algorithm} solves Problem~\ref{pro:invariants} requiring a near linear time in both $k,m$.

\begin{theorem}[a near linear time algorithm for AMD]
\label{thm:algorithm}
Let a periodic set $S\subset\R^n$ have $m$ points in a unit cell $U$.
For a fixed dimension $n$ and $i=1,\dots,k$, $\AMD_i(S)$ can be computed in a time $O(\nu(S;n)km\log(km))$, where $\nu(S;n)$ is independent of $k,m$, see Fig.~\ref{fig:time}.
\bs
\end{theorem}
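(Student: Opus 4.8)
The plan is to build the whole $m\times k$ matrix $D(S;k)$ of Definition~\ref{dfn:AMD} in time near-linear in its size $mk$, and then average its columns. Three ingredients are needed: an a priori radius bound (from Lemma~\ref{lem:distance_bounds}) inside which every required neighbour must lie; an enumeration of the finitely many points of $S$ in that radius, whose count is controlled by Lemma~\ref{lem:ball_size}; and a nearest-neighbour data structure for the fixed dimension $n$.

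\emph{Step 1: search radius and point generation.} From the input basis $\vec v_1,\dots,\vec v_n$ compute $\vol[U]=|\det(\vec v_1,\dots,\vec v_n)|$, the packing coefficient $c(S)=\sqrt[n]{\vol[U]/(mV_n)}$, and the cell diameter $d=\diam U$ (a maximum over the $2^{n-1}$ long diagonals); for fixed $n$ this costs $O(m)$, dominated by reading the input. Put $R=c(S)\sqrt[n]{k}+d$. By Lemma~\ref{lem:distance_bounds}, for every motif point $p_i$ the $k$-th nearest neighbour of $p_i$ in $S$ lies in $\bar B(p_i;R)$, so this ball contains $p_i$ together with its first $k$ neighbours. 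Since $|p_i-p_1|\le d$ for all $i$, we have $\bigcup_{i=1}^{m}\bar B(p_i;R)\subseteq\bar B(p_1;R+d)$, so it suffices to generate $S'=S\cap\bar B(p_1;R+d)$: after a one-off size reduction of the basis (overhead depending only on $n$) this is a bounded nested integer loop over lattice coefficients, running in $O(N)$ time for fixed $n$, where $N=|S'|$. By Lemma~\ref{lem:ball_size}, $N\le\bigl(\sqrt[n]{k}+3d/c(S)\bigr)^{n}$, and since $(d/c(S))^{n}=m\,d^{n}V_n/\vol[U]$ this gives $N=O(k+\tau m)$ with $\tau:=d^{n}V_n/\vol[U]$ a dimensionless shape factor of $U$ alone, independent of $k$ and $m$.

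\emph{Step 2: neighbour queries.} Build a $k$-d tree (or any spatial search structure for bounded dimension) on $S'$ in time $O(N\log N)$, and for each $i=1,\dots,m$ retrieve the $k$ points of $S'$ nearest to $p_i$ (excluding $p_i$ itself). Because $\bar B(p_i;R)\cap S\subseteq S'$ already contains these $k$ neighbours, they coincide with the first $k$ neighbours of $p_i$ in the infinite set $S$, so their ordered distances $d_{i1}\le\dots\le d_{ik}$ are exactly row $i$ of $D(S;k)$. A single query costs $O(\log N+k)$, so all $m$ queries together cost $O(mk\log N)$.

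\emph{Step 3: averaging and the bound on $\nu$.} Sum the $j$-th column to obtain $\AMD_j(S)=\dfrac1m\sum_{i=1}^{m}D_{ij}(S;k)$ for $j=1,\dots,k$ in time $O(mk)$; a non-primitive input cell merely repeats each row of $D(S;k)$ and does not change any column average. The total is $O(N\log N+mk\log N+mk)$; substituting $N=O(k+\tau m)=O((1+\tau)mk)$ (using $k,m\ge1$) and $\log N=O(\log(mk)+\log(1+\tau))$, the whole expression collapses to $O(\nu(S;n)\,mk\log(mk))$, where $\nu(S;n)$ absorbs the dimension-dependent overheads of the lattice enumeration and the search structure together with the factor $(1+\tau)(1+\log(1+\tau))$, and depends only on $n$ and on the shape of $U$, not on $k$ or $m$. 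The step I expect to need the most care is Step~2: a worst-case analysis of $k$-d tree queries only yields $O(N^{1-1/n}+k)$ per query, so the near-logarithmic bound has to exploit that $S'$ comes from a $(U,m)$-set and hence has bounded local density — precisely the two-sided estimate of Lemma~\ref{lem:ball_size} — to control the cells visited; alternatively one replaces exact by approximate nearest neighbours via a balanced box-decomposition tree and checks that, with the density of $S'$ bounded, the approximation still returns the exact ordered list of $k$ neighbours.
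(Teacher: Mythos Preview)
Your argument is correct and mirrors the paper's proof almost step for step: bound the search radius via Lemma~\ref{lem:distance_bounds}, count the generated points via Lemma~\ref{lem:ball_size}, build a $k$-d tree on those points, query each motif point for its $k$ nearest neighbours, and average the columns of $D(S;k)$. The only differences are cosmetic --- the paper centres its ball at the cell centre with radius $c(S)\sqrt[n]{k}+1.5d$, inserts an extra $O(km\log m)$ lexicographic sort of the rows of $D(S;k)$ (unneeded for the column averages), and quotes the cruder per-query bound $O(\mu\log\mu)$ rather than your $O(\log N+k)$; your explicit shape factor $\tau=d^{n}V_n/\vol[U]$ and your candid caveat about worst-case $k$-d tree queries are in fact more careful than the paper's own accounting.
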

\begin{proof}
Let the origin $0$ be in the center of the unit cell $U$, which generates a lattice $\La$.
If $d$ is the diameter of $U$, any point $p\in M=S\cap U$ is covered by the closed ball $\bar B(0,0.5d)$.
\medskip

By Lemma~\ref{lem:distance_bounds} all $k$ neighbours of $p$ are covered by the ball $\bar B(0;r)$ of radius $r=c(S)\sqrt[n]{k}+1.5d$.
To generate all $\La$-translates of $M$ within $\bar B(0;r)$, we gradually extend $U$ in spherical layers by adding more shifted cells until we get the upper union $U''(0;r)\supset\bar B(0;r)$.
By Lemma~\ref{lem:ball_size} the union $U''(0;r)$ 
includes all $k$ neighbours of motif points and has at most
$$\mu
=m\dfrac{\vol[U''(0;r)]}{\vol[U]}
\leq\left(\dfrac{c(S)\sqrt[n]{k}+2.5d}{c(S)}\right)^n
=\left(\sqrt[n]{k}+\dfrac{2.5d}{c(S)}\right)^n=O(f(S;n)k) \text{ points,}$$
where $f(S;n)$ denotes a suitable function that is independent of $k\to+\infty$ for a fixed dimension $n$. 
We build an $n$-d tree \cite{Brown2015kdtree} on $\mu$ points above in time $O(n\mu\log\mu)$.
For each $p\in M$, all $k$ neighbours of $p$ can be found and ordered by distances to $p$ in time $O(\mu\log\mu)$.

By Definition~\ref{dfn:AMD} we lexicographically sort $m$ lists of ordered distances in time $O(km\log m)$, because a comparison of any two ordered lists of length $k$ takes $O(k)$ time.

The ordered lists of distances are the rows of the matrix $D(S;k)$.
All $\AMD_i(S)$ are found in time $O(km)$.
The total time is $O((m+n)\mu\log\mu+km\log m)=O(\nu(S;n)km\log(km))$.
\end{proof}

\begin{figure}[h!]
\includegraphics[width=0.49\linewidth]{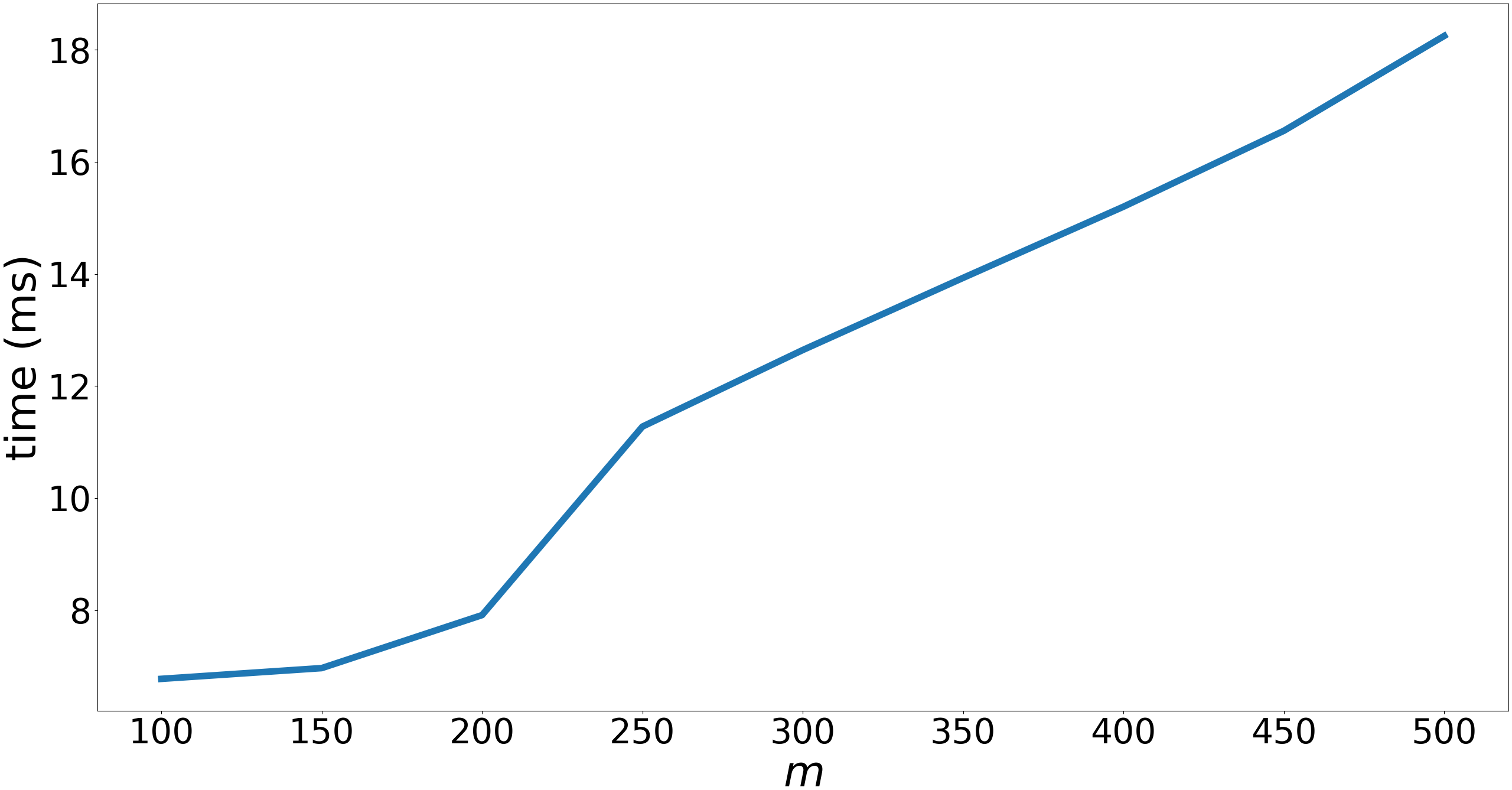}
\includegraphics[width=0.49\linewidth]{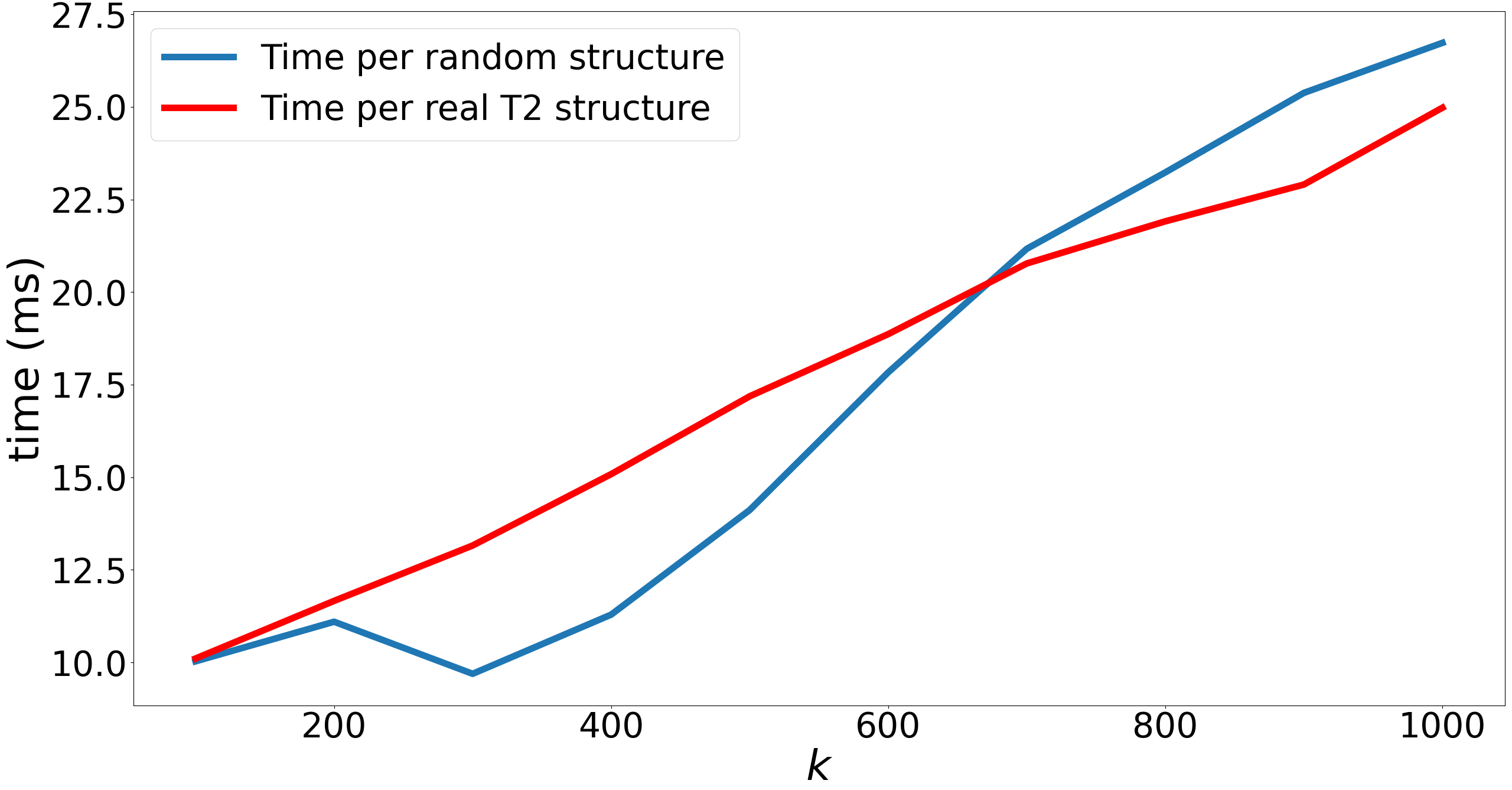}
\caption{
The blue curves show the AMD time in milliseconds averaged over 3000 sets whose points are uniformly generated in unit cells with random edge-lengths in $[1,2]$ and angles in $[\frac{\pi}{3},\frac{2\pi}{3}]$.
\textbf{Left}: $\AMD^{(200)}$.
\textbf{Right}: the blue curve is for $m=250$, the red curve is for 5679 real structures.}
\label{fig:time}
\end{figure}

Fig.~\ref{fig:time} illustrates a near linear running time justified by Theorem~\ref{thm:algorithm}. 
The red curve on the right was obtained on 5679 predicted  structures, which have the same chemical compositions and contain about 250 points on average in a unit cell, as reported in our past paper \cite{pulido2017functional}.
\medskip

The only other implemented infinite sequence of continuous isometry invariants are density functions $\psi_k(t)$ requiring a cubic time in $k$, see \cite[section~6.3]{edels2021}. 
Because of this cubic increase, we ran the C++ code only up to $k=8$ for four representative points out of 46 atoms per molecule.
These simplified sets contain about 25 points on average in a unit cell.  
\medskip

Each $\psi_k(t)$ depends on a continuous radius $t$, and hence can be evaluated only at discretely sampled values of $t$.
We computed the first eight $\psi_k(t)$ sampled at $t$ equal to multiples of 0.2$\AA$ up to about 12$\AA$, where 1$\AA=10^{-10}$m.
Smallest distances between points in real structures are about 0.8$\AA$ at this atomic scale. 
With the above parameters, the experiments on similar machines
(Dell XPS 15 6-core, 2.20GHz, 16GB and MacBook Pro, 2.3GHz, 8GB) took about 1 min per structure, more than 4 days for 5679 simplified sets.
\medskip

In comparison with the above times of density functions, vectors $\AMD^{(k)}$ were computed for full periodic structures and much larger $k$ on the modest desktop AMD Ryzen 5 6-core 4.60Ghz, 32GB DDR4.
The red curve in Fig.~\ref{fig:time} for the 5679 structures implies the average time of 10ms for $\AMD^{(100)}$ and 27ms for $\AMD^{(1000)}$, so the total time is about 17 min.
Despite the Cambridge Structural Database (CSD) having much more diverse compositions in comparison with T2 structures based on the same molecule, $\AMD^{(100)}$ required the similar time of 13.4ms on average, less than 52 min in total for all 228,994 organic structures.
\medskip

We computed the $L_{\infty}$-distance between $\AMD^{(100)}$ vectors of all structures. 
A Python script making use of vectorization in NumPy was able to compute all $\binom{228972}{2}$ distances in 8 hours 52 min, about 1.2 microseconds on average.
A final visualisation of nearly 229K structures as a Minimum Spanning Tree in Fig.~\ref{fig:CSDorganic_AMD100TMap} in Appendix~D took less than 1 min.
\medskip


In conclusion, the AMD sequence is a novel continuous invariant of periodic point sets up to isometry in $\R^n$ by Theorems~\ref{thm:AMD_invariant} and~\ref{thm:continuity}.
Theorem~\ref{thm:asymptotic} is especially strong due to the asymptotic formula working even for non-periodic sets $S$ for the newly defined point packing coefficient $c(S)$.
Theorem~\ref{thm:algorithm} justified a near linear running time in both input parameters $k,m$, which has enabled visualisations of huge and diverse datasets with minimal resources.
\medskip

The key motivation for continuous isometry invariants is the discontinuity of reduced cells, which was experimentally known \cite{andrews1980perturbation} since 1980.
Let $L$ be the space of all lattices in $\R^n$. 
Let $B$ be the space of linear bases $b=\{\vec v_1,\dots,\vec v_n\}$.
If we concatenate $n$ vectors into one vector with $n^2$ coordinates, the space $B$ becomes a subset of $\R^{n^2}$ with the Euclidean topology.  
Since any basis generates a lattice by Definition~\ref{dfn:crystal}, we have the projection $g:B\to L$.
Consider a minimal topology on $L$ that makes $g$ continuous so that the preimage $g^{-1}(N(\La))$ of any open neighbourhood $N(\La)$ is a union of open neighbourhoods of bases from $g^{-1}(\La)\subset B$.
A desired reduction would be a continuous map $h:L\to B$ such that $g\circ h(\La)=\La$ is the identity map.
In other words, if we continuously change a lattice $\La\in L$, its reduced basis $h(\La)\in B$ should also change continuously.
Theorem~\ref{thm:discontinuity} disproves any possibility of a continuous reduction.

\begin{theorem}[discontinuity of reduced cells]
\label{thm:discontinuity}
Let $g:B\to L$ map any basis $b$ of $\R^n$ to its lattice $\La$.
Then there is no continuous map $h:L\to B$ such that $g\circ h$ is the identity.
\bs
\end{theorem}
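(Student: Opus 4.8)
The plan is to derive a contradiction from a \emph{monodromy obstruction}: I will exhibit a loop of lattices in $L$ whose every continuous lift to $B$ fails to close up, so no continuous section $h$ can exist. Throughout I take $n\ge 2$ (for $n=1$ a continuous section does exist, so the statement should be read with $n\ge 2$).

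First I construct the loop. For $\theta\in[0,\pi/2]$ let $\tilde\gamma(\theta)\in B$ be the basis whose matrix of rows is the block sum $R(\theta)\oplus I_{n-2}$, where $R(\theta)$ is rotation by the angle $\theta$ in the first two coordinates. Then $\tilde\gamma:[0,\pi/2]\to B\subset\R^{n^2}$ is continuous, $\tilde\gamma(0)=I_n$, and $\tilde\gamma(\pi/2)=P$, where $P\in GL_n(\Z)$ is the block sum of $\bigl(\begin{smallmatrix}0&1\\-1&0\end{smallmatrix}\bigr)$ with $I_{n-2}$; crucially $P\ne I_n$. Since the topology on $L$ makes $g$ continuous, $\gamma:=g\circ\tilde\gamma$ is a continuous path in $L$, and it is a \emph{loop}: both $\gamma(0)=g(I_n)$ and $\gamma(\pi/2)=g(P)$ equal the standard lattice $\Z^n$, because the rows of $P$ generate $\Z^n$.

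Next, assume for contradiction that $h:L\to B$ is continuous with $g\circ h=\mathrm{id}_L$, and set $\beta:=h\circ\gamma$, a continuous path in $B$. For every $\theta$ the bases $\beta(\theta)$ and $\tilde\gamma(\theta)$ span the same lattice $\gamma(\theta)$ (the former because $g\circ h=\mathrm{id}$), so the change-of-basis matrix $\Phi(\theta):=\beta(\theta)\,\tilde\gamma(\theta)^{-1}$ lies in $GL_n(\Z)$; being a product and inverse of continuous matrix-valued functions, $\Phi$ is continuous in $\theta$, and since $GL_n(\Z)$ is discrete in $\R^{n^2}$ while $[0,\pi/2]$ is connected, $\Phi$ is a constant $C\in GL_n(\Z)$. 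Closing the loop: from $\gamma(\pi/2)=\gamma(0)$ we get $\beta(\pi/2)=\beta(0)$, hence $C=\Phi(\pi/2)=\beta(\pi/2)\,\tilde\gamma(\pi/2)^{-1}=\beta(0)\,P^{-1}=\Phi(0)\,P^{-1}=C\,P^{-1}$, so $P=I_n$, contradicting the construction; therefore no such $h$ exists.

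I expect the only genuinely creative step to be the first one — choosing the rotation loop and verifying that its monodromy matrix $P$ is nontrivial; the remainder is bookkeeping. Two small points deserve care: that the topology on $L$ is used only through continuity of $g$ (part of its definition) and the discreteness of $GL_n(\Z)$ (immediate), and that the row/column and left/right conventions stay consistent, so that ``generating the same lattice'' corresponds to multiplication on the correct side by $GL_n(\Z)$. One could instead package the last two paragraphs as uniqueness of path lifting for the covering map $g$ — legitimate since $B\cong GL_n(\R)$, $L\cong GL_n(\R)/GL_n(\Z)$, and $GL_n(\Z)$ acts freely and properly discontinuously — but the elementary discreteness argument above avoids invoking covering-space machinery.
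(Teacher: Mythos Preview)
Your proof is correct and follows the same monodromy strategy as the paper: exhibit a loop in $L$ based at $\Z^n$ that cannot lift to a closed path in $B$, contradicting the existence of a continuous section $h$. The differences are cosmetic but worth noting: the paper uses the shear path $\gamma(t)=(\vec v_1+t\vec v_2,\vec v_2,\dots,\vec v_n)$ rather than your rotation, and it argues lifting uniqueness by a local-extension-plus-compactness step (``all other bases \dots are sufficiently away''), whereas your formulation via the constant $GL_n(\Z)$-valued function $\Phi(\theta)=\beta(\theta)\tilde\gamma(\theta)^{-1}$ makes the discreteness of the fibre explicit and is somewhat cleaner. Your remark that $n\ge 2$ is needed (and that a section exists for $n=1$) is a useful clarification the paper leaves implicit.
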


A limitation of $\AMD$ is its potential incompleteness, though we do not know non-isometric periodic sets that have equal $\AMD_k$ for all $k$.
Theorem~\ref{thm:completeness} (proved with Theorem~\ref{thm:discontinuity} in appendix~A) hints at completeness of $\AMD$ for periodic point sets in general position. 

\begin{theorem}[completeness for generic finite sets]
\label{thm:completeness}
Let a finite set $S\subset\R^n$ consist of $m$ points such that all pairwise distances between points of $S$ are distinct.
Then $S$ can be uniquely reconstructed  from the matrix $D(S;m-1)$ in Definition~\ref{dfn:AMD} up to isometry of $\R^n$.
\bs
\end{theorem}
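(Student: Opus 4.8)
The plan is to recover, from the rows of $D(S;m-1)$, the full symmetric matrix of pairwise Euclidean distances of $S$, and then to invoke the classical fact from Euclidean distance geometry that such a matrix determines a finite point configuration in $\R^n$ uniquely up to isometry.

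First I would record that, since $k=m-1$, the $i$-th row of $D(S;m-1)$ is precisely the sorted list of the $m-1$ distances $\{\,|p_i-p_j|\ :\ j\neq i\,\}$. The hypothesis that the $\binom{m}{2}$ pairwise distances are pairwise distinct then yields three observations: (i) the entries within any single row are distinct; (ii) each value $|p_i-p_j|$ occurs exactly twice among all entries of $D(S;m-1)$, once in row $i$ and once in row $j$, and never elsewhere; and (iii) for $i\neq j$ the rows $i$ and $j$ share exactly one entry, namely $|p_i-p_j|$. Observation (iii) is the heart of the argument: a second shared value $\delta'$ would satisfy $\delta'=|p_i-p_k|=|p_j-p_l|$ for some $k\neq i$ and $l\neq j$; if the pairs $\{i,k\}$ and $\{j,l\}$ are distinct this contradicts distinctness of distances, while if they coincide then necessarily $k=j$, $l=i$ and $\delta'=|p_i-p_j|$.

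Using (ii) and (iii) I would then reconstruct a labeled distance matrix. Label the rows of $D(S;m-1)$ by $1,\dots,m$ in any fixed order. For each value $\delta$ occurring in the matrix, observation (ii) singles out the unique pair of rows $i,j$ in which $\delta$ appears; set the $(i,j)$ and $(j,i)$ entries of an $m\times m$ matrix equal to $\delta$, with $0$ on the diagonal. By (iii) each off-diagonal entry is assigned exactly once, so this matrix is well defined and symmetric, and by construction it is the matrix of pairwise distances of the points $p_1,\dots,p_m$ listed in the chosen order.

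Finally I would apply the standard reconstruction: fix one point at the origin, use the law of cosines (polarization) to convert the recovered pairwise distances into the Gram matrix of the $m-1$ remaining position vectors, and take any square root of this positive semidefinite matrix to obtain coordinates in $\R^n$; the configuration is then unique up to an orthogonal transformation, hence $S$ is determined up to isometry of $\R^n$ (equivalently, one may argue through Cayley--Menger determinants). The only genuinely delicate step is observation (iii): it is exactly where the distinctness hypothesis is used, to make the cross-row matching of equal entries unambiguous and thereby guarantee that the reconstructed distance matrix is consistent; everything else is bookkeeping plus the classical distance-geometry fact.
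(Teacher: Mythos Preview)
Your proof is correct and follows essentially the same line as the paper's: observe that each pairwise distance appears in exactly two rows of $D(S;m-1)$, use this to rebuild the full symmetric distance matrix, and then appeal to classical distance geometry (the paper cites multidimensional scaling where you invoke the Gram matrix / Cayley--Menger route). Your observation (iii), that any two distinct rows share exactly one common value, is stated more crisply than in the paper, which instead works row-by-row to locate the second occurrence of each entry; but the content is the same.
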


The code for AMDs in \cite{AMD_DW,AMD_MM} produce identical results. 
We thank the co-authors of \cite{edels2021} and Janos Pach for fruitful discussions and all reviewers in advance for helpful suggestions.\newpage

\bibliography{MFCS2021AMD}

\newpage

\appendix
\section{Extra examples and proofs of Lemma~\ref{lem:common_lattice} and Theorems~\ref{thm:discontinuity},~\ref{thm:completeness}}
\label{sec:proofs}

Following the main 12-page paper and appendix A with extra examples and proofs for mathematicians and computer scientists, all remaining appendices B,C,D are especially written for experts in applied areas such as crystallography and materials science.
The breakthrough visualisations in Fig.~\ref{fig:T2AMD200TMap}, \ref{fig:CSD_drugs_TMap_AMD200},\ref{fig:CSDorganic_AMD100TMap} became possible due to the fast AMD.
\medskip

The density functions narrowly distinguish the following periodic sets \cite[Example~1]{edels2021}
$$S_{32} = \{ 0, 7, 8, 9, 12, 15, 17, 18, 19, 20, 21, 22, 26, 27, 29, 30 \} + 32\Z,$$
$$Q_{32} = \{ 0, 1, 8, 9, 10, 12, 13, 15, 18, 19, 20, 21, 22, 23, 27, 30 \} + 32\Z$$
 with period 32 in Fig.~\ref{fig:SQ32}.
Tables~\ref{tab:S32},~\ref{tab:Q32} distinguish these very similar sets by $\AMD_k$ for $k=2,3$.

\begin{figure}[h!]
\includegraphics[width=0.8\linewidth]{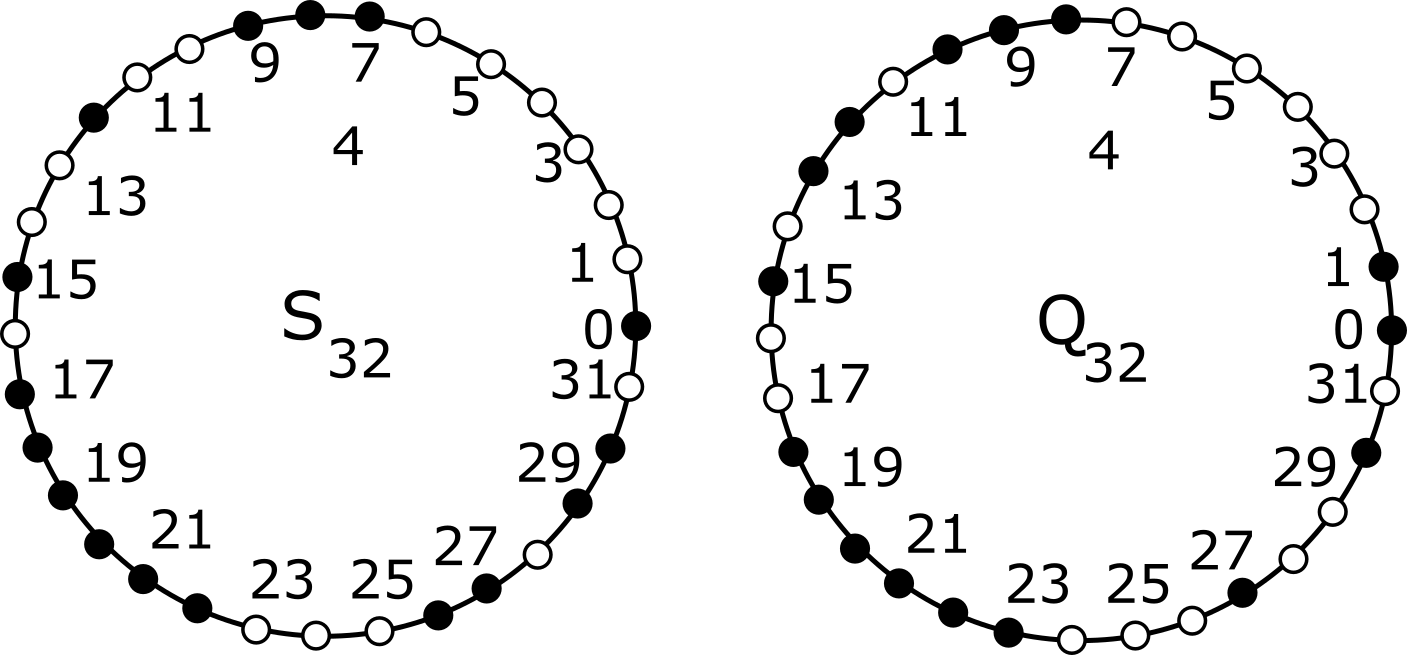}
\caption{Circular versions of the periodic point sets $S_{32},Q_{32}$.
Distances between any points are measured along round arcs in each circle.}
\label{fig:SQ32}      
\end{figure}

\begin{table*}[h!]
\begin{tabular}{C{10mm}|
C{2mm}C{2mm}C{2mm}C{2mm}C{2mm}C{2mm}C{2mm}C{2mm}
C{2mm}C{2mm}C{2mm}C{2mm}C{2mm}C{2mm}C{2mm}C{2mm}|C{10mm}}
\hline\noalign{\smallskip}
$S_{32}$ & 0 & 7 & 8 & 9 & 12 & 15 & 17 & 18 & 19 & 20 & 21 & 22 & 26 & 27 & 29 & 30 & $\AMD_k$ \\
\hline
\noalign{\smallskip}\noalign{\smallskip}
$k=1$     & 2 & 1 & 1 & 1 &  3  &  2  &  1  &  1 &   1  &  1  &  1  &  1 &  1  &   1 &   1 &  1 & 20/16 \\
\noalign{\smallskip}\noalign{\smallskip}
$k=2$     & 3 & 2 & 1 & 2 &  3  &  3  &  2  &  1 &   1  &  1  &  1  &  2 &  3  &   2 &   2 &  2 & 31/16 \\
\noalign{\smallskip}\noalign{\smallskip}
$k=3$     & 5 & 5 & 4 & 3 &  4  &  3  &  2  &  2 &   2  &  2  &  2  &  3 &  4  &   3 &   3 &  3 & 50/16
\end{tabular}
\caption{\textbf{First row}: 16 points from the motif $M\subset S_{32}$ in Fig.~\ref{fig:SQ32}. 
\textbf{Further rows}: distance from each point $p\in M$ to its $k$-th nearest neighbour in $S_{32}$.}
\label{tab:S32}
\end{table*}

\begin{table*}[h!]
\begin{tabular}{C{10mm}|
C{2mm}C{2mm}C{2mm}C{2mm}C{2mm}C{2mm}C{2mm}C{2mm}
C{2mm}C{2mm}C{2mm}C{2mm}C{2mm}C{2mm}C{2mm}C{2mm}||C{10mm}}
\hline\noalign{\smallskip}
$Q_{32}$ & 0 & 1 & 8 & 9 & 10 & 12 & 13 & 15 & 18 & 19 & 20 & 21 & 22 & 23 & 27 & 30 & $\AMD_k$ \\
\hline
\noalign{\smallskip}\noalign{\smallskip}
$k=1$      & 1 & 1 & 1 & 1 &  1  &  1  &  1  &  2 &   1  &  1  &  1  &  1 &  1  &   1 &   3 &  2 & 20/16 \\
\noalign{\smallskip}\noalign{\smallskip}
$k=2$      & 2 & 3 & 2 & 1 &  2  &  2  &  2  &  3 &   2  &  1  &  1  &  1 &  1  &   2 &   4 &  3 & 32/16 \\
\noalign{\smallskip}\noalign{\smallskip}
$k=3$      & 5 & 6 & 4 & 3 &  2  &  3  &  3  &  3 &   3  &  2  &  2  &  2 &  2  &   3 &   5 &  3 & 51/16 \\
\end{tabular}
\caption{\textbf{First row}: 16 points from the motif $M\subset Q_{32}$. 
\textbf{Further rows}: distance from each point $p\in M$ to its $k$-th nearest neighbour in $Q_{32}$, see Fig.~\ref{fig:SQ32}.}
\label{tab:Q32}
\end{table*}

\begin{proof}[Proof of Lemma~\ref{lem:common_lattice}]
Let $S=\La(S)+(U(S)\cap S)$ and $Q=\La(Q)+(U(Q)\cap Q)$, where 
$U(S),U(Q)$ are initial unit cells of $S,Q$ and the lattices $\La(S),\La(Q)$ contain the origin.
\medskip

By shifting all points of $S,Q$ (but not their lattices), we guarantee that $S$ contains the origin $0$ of $\R^n$.
Assume by contradiction that the given periodic point sets $S,Q$ have no common lattice.
Then there is a vector $\vec p\in\La(S)$ whose all integer multiples $k\vec p\not\in\La(Q)$ for $k\in\Z-0$.
Any such multiple $k\vec p$ can be translated by a vector $\vec v(k)\in\La(Q)$ to the initial unit cell $U(Q)$ so that $\vec q(k)=k\vec p-\vec v(k)\in U(Q)$.
\medskip

Since $U(Q)$ contains infinitely many points $\vec q(k)$,
one can find a pair $\vec q(i),\vec q(j)$ at a distance less than $\de=r(Q)-d_B(S,Q)>0$.
The formula $\vec q(k)\equiv k\vec p\pmod{\La(Q)}$ implies that 
 $\vec q(i+k(j-i)) \equiv (i+k(j-i))\vec p\pmod{\La(Q)} \equiv
 \vec q(i) + k(\vec q(j)-\vec q(i))\pmod{\La(Q)}$.
\medskip
 
If the point $\vec q(i) + k(\vec q(j)-\vec q(i))$ belongs to $U(Q)$, we get the equality $\vec q(i+k(j-i))=\vec q(i) + k(\vec q(j)-\vec q(i))$.
All these points over $k\in\Z$ lie on a straight line within $U(Q)$ and have the distance $|\vec q(j)-\vec q(i)|<\de$ between successive points.
\medskip

The closed balls with radius $d_B(S,Q)$ and centers at points in $Q$ are at least $2\de$ away from each other.
Then one of the points $\vec q(i+k(j-i))$ is more than $d_B(S,Q)$ away from $Q$. 
Hence the point $(i+k(j-i))\vec p\in S$ also has a distance more than $d_B(S,Q)$ from any point of $Q$, which contradicts Definition~\ref{dfn:BND}. 
\end{proof}

\begin{proof}[Proof of Theorem~\ref{thm:discontinuity} by contradiction]
Let $h(\Z^n)=\{\vec v_1,\dots,\vec v_n\}\in B$ be a reduced basis of integer lattice $\Z^n\subset\R^n$.
Consider the continuous path $\ga:[0,1]\to B$, where $\ga(t)$ is the basis $\vec v_1+t\vec v_2,\vec v_2,\dots,\vec v_n$.
Since the bases $\ga(0)\neq\ga(1)$ define the same integer lattice $\Z^n\subset\R^n$, the composition $g\circ\ga$ can be considered as a continuous loop $g\circ\ga:[0,1]\to L$ in the space of lattices with $g\circ\ga(0)=\Z^n=g\circ\ga(1)$.
\medskip

It remains to show that the continuous map $h$ lifts the loop $g\circ\ga$ to the path $\ga:[0,1]\to B$ with disjoint endpoints $\ga(0)\neq\ga(1)$, which is a contradiction with the existence of such $h$.
\medskip

Since $h$ is continuous, for all sufficiently small $t>0$, the basis $h\circ g\circ\ga(t)$ should be close to $h(\Z^n)$, hence should coincide with $\ga(t)$, because all other bases of the lattice $h\circ g\circ\ga(t)$ close to $\Z^n$ are sufficiently away from $\ga(t)$ in Euclidean metric on $B$.
This local extension argument works around any $t\in[0,1]$ where we already know that $h\circ g\circ\ga(t)=\ga(t)$.
\medskip

Since any infinite cover of the compact line segment $[0,1]$ by open neighbourhoods contains a finite subcover, we need only finitely many steps to get $\ga(1)=h\circ g\circ\ga(1)=h(\Z^n)=h\circ g\circ\ga(0)=\ga(0)$, which contradicts the fact that the initial bases $\ga(1)\neq\ga(0)$.
\end{proof}

\begin{proof}[Proof of Theorem~\ref{thm:completeness}]
Since the distances between all $m$ points of $S\subset\R^n$ are distinct, every distance appears in the matrix $D(S;m-1)$ exactly twice,
 once as the distance from a point $p_i$ to its neighbour $p_j$, and once more as the distance from $p_j$ to $p_i$, though these equal entries are not symmetric.
We will convert $D(S;m-1)$ into the distance matrix $D(S)$.
Let $d_1<d_2<\cdots<d_{m-1}$ be all distances from the first point $p_1\in S$ to all $m-1$ others.
\medskip

Each distance $d_i$ from the first row of $D(S;m-1)$ appears exactly once more in another (say, $i'$-th) row of $D(S;m-1)$.
Then $d_i$ is the distance between the points $p_1$ and $p_{i'}$ numbered as the $i'$-th row.
The map of indices $i\mapsto i'$ is a permutation of $\{2,\dots,m\}$.
We set $D_{11}=0$ and $D_{1,i'}=d_i$ for each $i=2,\dots,m$.  
Then we similarly permute indices in the 2nd row of $D(S;m-1)$, starting from the 3rd index due to the symmetry of $D(S)$, and so on.
The full distance matrix $D(S)$ uniquely determines a set with ordered points $S\subset\R^n$ modulo isometries by the classical multi-dimensional scaling in \cite[Section~8.5.1]{liberti2017euclidean}.
\end{proof}

\section{Isometries, isometry invariants and homometric sets}
\label{sec:background}

This section covers key facts about isometries in $\R^n$ and their invariants.
We also rigorously define homometric sets, which had different interpretations in past papers \cite{patterson1944ambiguities, franklin1974ambiguities}. 

\begin{dfn}[isometry]
\label{dfn:isometry}
An {\em isometry} of $\R^n$ is a map $f:\R^n\to\R^n$ that preserves the Euclidean distance, so $|p-q|=|f(p)-f(q)|$ for any points $p,q\in\R^n$.
The map $f$ also preserves the {\em orientation} if the matrix whose columns are images under $f$ of the standard basis vectors $\vec e_1,\dots,\vec e_n$ has a positive determinant.
In this case $f$ can be called a {\em rigid motion}, because $f$ is included into a continuous family of isometries $f_{t}:\R^n\to\R^n$, $t\in[0,1]$, where $f_1=f$ and $f_0$ is the identity map $f_0(p)=p$ for any $p\in\R^n$.
\bs
\end{dfn}

Any isometry of $\R^n$ can be decomposed into at most $n+1$ reflections over hyperspaces, for example over planes in $\R^3$, hence is bijective and can be inverted.
A composition of isometries is also an isometry and defines the operation in the group $\iso(\R^n)$ of all isometries in $\R^n$.
Rigid motions are orientation-preserving isometries and form the smaller subgroup $\iso^+(\R^n)\subset\iso(\R^n)$.  
Example rigid motions in $\R^3$ are translations by vectors and rotations around straight lines.
It suffices to classify sets up to general isometries, because if we know that two point sets $S,Q$ are isometric, one can easily check if a possible isometry $S\to Q$ preserves an orientation defined as follows.
\medskip

Let a set $S\subset\R^n$ have $n+1$ points $p_0,\dots,p_n$ that are not in any $(n-1)$-dimensional subspace.
Then the determinant of the $n\times n$ matrix with columns $p_i-p_0$, $i=1,\dots, n$, is the signed volume of the parallelepiped spanned by these $n$ vectors.
The sign of this determinant can considered as an \emph{orientation} of $S$.
An isometry $f$ preserves an orientation if the determinant obtained from the points $f(p_0),\dots,f(p_n)\in Q$ has the same sign as $S$. 
\medskip

For any $n\times n$ matrix $A$, recall that $A^T$ denotes the \emph{transpose} matrix with elements $A^T_{ij}=A_{ji}$, $i,j=1,\dots,n$.
A matrix $A$ is \emph{orthogonal} if the inverse matrix $A^{-1}$ equals the transpose $A^T$. 
Orthogonality of a matrix $A$ means that $\vec v\mapsto A\vec v$ maps any orthonormal basis to another orthonormal basis.
All orthogonal matrices $A$ have the determinant $\det A=\pm 1$.
If $\det A=1$, then the map $\vec v\mapsto A\vec v$ preserves an orientation of $\R^n$.
All orthogonal matrices $A$ with $\det A=1$ form the special orthogonal group $\SO(\R^n)$, where the operation is the matrix multiplication.
The group $\SO(\R^2)$ consists of rotations about the origin in the plane.
The group $\SO(\R^3)$ consists of rotations about axes passing through the origin in $\R^3$.
All orientation-preserving isometries in $\R^n$ can be decomposed into translations and high-dimensional rotations $R\in\SO(\R^n)$ around the origin in $\R^n$.
\medskip

For a given equivalence relation, any objects can be distinguished (claimed to non-equivalent) only by an invariant is preserved the given equivalence hence is independent of a representation of an object.
Surprisingly many descriptors of crystals include non-invariants, for example parameters of an ambiguous unit cell or atomic coordinates in an arbitrary basis. 

\begin{dfn}[isometry invariant]
\label{dfn:invariant}
An \emph{isometry class} of sets is a collection of all sets that are isometric to each other, i.e. any sets $S,Q$ from the same class are related by an isometry $S\to Q$.
An {\em isometry invariant} is a function $I$ that maps all sets of a certain type, e.g. all periodic sets, to a simpler space (numbers, matrices) so that $I(S)=I(Q)$ for any isometric sets $S,Q$.
An invariant $I$ is called \emph{complete} if the converse is also true: if $I(S)=I(Q)$, then the sets $S,Q$ are isometric.
\bs
\end{dfn}

Now we discuss homometric periodic sets that were hard to distinguish up to isometry, because they have identical diffraction patterns depending only on the difference set below.

\begin{dfn}[homometric sets]
\label{dfn:homometry}
For any finite set $S\subset\R^n$ of $m$ points, the \emph{difference} multi-set $\dif(S)$ consists of the $m^2$ vector differences $\vec a-\vec b$ for all points $a,b\in S$, counted with multiplicities.
Periodic point sets $S,Q\subset\R^n$ with a common lattice $\La$ and a primitive unit cell $U$ are called \emph{homometric} if $\dif(S\cap U)\equiv\dif(Q\cap U)\pmod{\La}$ with multiplicities respected, so all pairs of vectors $u\in\dif(S\cap U)$ and $v\in\dif(Q\cap U)$ that are equal up to lattice translations have the same multiplicity in both difference sets.
The \emph{homometry} is the equivalence relation of periodic point sets being homometric in the sense above.
\bs
\end{dfn}

If a set $S$ consists of $m$ points, $\dif(S)$ includes the zero vector with multiplicity $m$.
The above definition clarifies the following past attempts to define homometric sets.
\medskip

Patterson \cite[p.197]{patterson1944ambiguities} called periodic point sets $S,Q\subset\R^n$ \emph{homometric} if $\dif(S\cap U)\equiv\dif(Q\cap U)\pmod{\La}$ without mentioning weights or multiplicities.
Franklin \cite[equations (17)-(18)]{franklin1974ambiguities} renamed them as \emph{homometric modulo a lattice} $\La$ and called $S,Q$ \emph{homometric} if $\dif(S\cap U)=\dif(Q\cap U)$, not modulo $\La$.
Additionally both definitions required that $S,Q$ are not isometric.
However, after removing this restriction, we expect to get an equivalence relation so that any periodic point set should be homometric to itself even if another unit cell of a lattice $\La$ is chosen.
The following example shows that the equation $\dif(S\cap U)=\dif(Q\cap U)$ without translations by the lattice $\La$ fails this reflexivity condition.
\medskip

Franklin \cite[p.~699]{franklin1974ambiguities} considered the sets $S_3=\{0,1\}+3\Z$ and $Q_3=\{0,2\}+3\Z$, which are isometric by $x\mapsto x+1\pmod{3}$.
However, $\dif(\{0,1\})=\{0,0,-1,+1\}\neq\dif(\{0,2\})=\{0,0,-2,+2\}$.
These sets are equal modulo 3, hence lattice translations are needed.
If we consider the set $S_3$ with a twice larger unit cell  and period 6 as $\{0,1,3,4\}+6\Z$, then 
$$\dif(\{0,1,3,4\})=\{0,0,0,0,\pm 1,\pm 1,\pm 2,\pm 3,\pm 3,\pm 4\}.$$

This difference set can be considered equal to $\dif(\{0,1\})$ modulo 3 only if the multiplicities in both sets are normalized so that their sums are equal.
Hence Definition~\ref{dfn:homometry} requires a primitive unit cell $U$.
Most importantly, Proposition~\ref{prop:homometry} below justifies that homometry in Definition~\ref{dfn:homometry} is independent of a primitive unit cell and is an \emph{equivalence} relation satisfying 
\smallskip

\noindent
(1) \emph{reflexivity} : a periodic set $S$ is homometric to $S$, i.e. $S\sim S$;
\smallskip

\noindent
(2) \emph{symmetry} : if $S$ is homometric to $Q$, i.e. $S\sim Q$, then $Q\sim S$;
\smallskip

\noindent
(3) \emph{transitivity} : if $S\sim Q$ and $Q\sim T$, then $S\sim T$.
\medskip

Proposition~\ref{prop:homometry} makes the experimental concept of homometric crystals verifiable in an algorithmic way.
It might be a folklore result, but we couldn't find a proof in the literature.

\begin{prop}[algorithm for homometric sets]
\label{prop:homometry}
(a) For any periodic point set $S\subset\R^n$ with a lattice $\La$, the difference set $\dif(S\cap U)\pmod{\La}$ does not depend on a primitive unit cell $U$ of $S$.
So the homometry in Definition~\ref{dfn:homometry} is an equivalence relation.
\medskip

\noindent
(b) Given a common primitive unit cell $U$ containing $m$ points of periodic point sets $S,Q\subset\R^n$, there is an algorithm of complexity $O(m^2 \log m)$ to determine if $S,Q$ are homometric.
\bs
\end{prop}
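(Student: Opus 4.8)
The plan is to prove part (a) first, since part (b) will follow almost immediately once the invariance of the difference set is established. For part (a), let $U$ and $U'$ be two primitive unit cells of the same periodic point set $S$ with common lattice $\La$. As in the proof of Theorem~\ref{thm:AMD_invariant}, translation by appropriate lattice vectors gives a bijection between $S\cap U$ and $S\cap U'$: each point $q\in S\cap U'$ equals $p+\vec w(q)$ for a unique $p\in S\cap U$ and $\vec w(q)\in\La$. First I would write down what happens to a difference vector under this bijection: if $q_1=p_1+\vec w_1$ and $q_2=p_2+\vec w_2$, then $\vec q_1-\vec q_2=(\vec p_1-\vec p_2)+(\vec w_1-\vec w_2)$, and $\vec w_1-\vec w_2\in\La$, so $\vec q_1-\vec q_2\equiv\vec p_1-\vec p_2\pmod{\La}$. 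Since the bijection $S\cap U\to S\cap U'$ is a bijection on the index set, the induced map on ordered pairs is a bijection of the $m^2$-element multisets of difference vectors, and it preserves each vector's class modulo $\La$. Hence $\dif(S\cap U)\equiv\dif(S\cap U')\pmod{\La}$ with multiplicities respected. This shows the relation in Definition~\ref{dfn:homometry} does not depend on the choice of primitive cell, so symmetry and transitivity are inherited trivially from equality of multisets, and reflexivity holds because $S$ and $S$ share the cell $U$ with $\dif(S\cap U)\equiv\dif(S\cap U)$; thus homometry is an equivalence relation.

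For part (b), given a common primitive unit cell $U$ with $m$ points $p_1,\dots,p_m\in S\cap U$ and $q_1,\dots,q_m\in Q\cap U$, the algorithm is: form the two multisets of $m^2$ difference vectors $\{\vec p_i-\vec p_j\}$ and $\{\vec q_i-\vec q_j\}$, reduce each representative into a canonical fundamental domain of $\La$ (e.g. by writing it in the basis of $U$ and taking coordinates modulo $1$, which is an $O(1)$ operation in fixed dimension $n$), then sort both lists lexicographically and compare them entrywise. Two multisets of size $N=m^2$ are sorted in $O(N\log N)=O(m^2\log m)$ time, and the linear scan to compare is $O(m^2)$, so the total is $O(m^2\log m)$ as claimed. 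One technical point to address carefully is that the reduction to a canonical domain must be done consistently so that two lattice-equivalent vectors always produce exactly the same representative; using coordinates in the basis of $U$ taken modulo $1$ does this, modulo the usual care about floating-point boundary cases, which I would mention but not dwell on since the complexity statement is the content.

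The main obstacle — though it is more a matter of careful bookkeeping than a deep difficulty — is making the invariance argument in part (a) airtight at the level of \emph{multiplicities}. One has to be explicit that the bijection $S\cap U\to S\cap U'$ induces a bijection on ordered pairs that respects $\La$-classes, so that no cancellation or merging of multiplicities can occur; the definition counts the zero vector with multiplicity $m$ and every nonzero class with its honest count, and the argument must show these counts transfer exactly. The other place requiring a word of justification is why a \emph{primitive} cell is essential: if $U$ were non-primitive with $n(S)>1$ copies of each motif point's neighbourhood data, the difference multiset would be inflated by a factor depending on $n(S)$ and the equivalence would break, exactly as illustrated by Franklin's example $\{0,1\}+3\Z$ versus $\{0,1,3,4\}+6\Z$ in the surrounding text; I would point to that example rather than re-derive it.
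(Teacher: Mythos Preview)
Your proposal is correct and follows essentially the same approach as the paper: for (a) you use the lattice-translation bijection $S\cap U\leftrightarrow S\cap U'$ to show differences are preserved modulo~$\La$ with multiplicities, and for (b) you compute the $m^2$ differences, canonicalize modulo~$\La$, sort lexicographically, and compare. Your version is slightly more explicit about the induced bijection on ordered pairs and the canonical reduction step, but the structure and key ideas are identical to the paper's proof.
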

\begin{proof}
Let $U,U'$ be primitive cells of the periodic set $S\subset\R^n$.
Any point $q\in S\cap U'$ can be translated along a vector $\vec v\in\La$ to a point $p\in S\cap U$ and vice versa.
These translations establish a bijection $S\cap U\lra S\cap U'$, which can change any point only by a vector of $\La$.
So $\dif(S\cap U)\equiv\dif(S\cap U')\pmod{\La}$ with multiplicities respected by the bijection above.
\medskip

To determine if periodic sets $S,Q\subset\R^n$ are homometric by Definition~\ref{dfn:homometry}, first we compute all $O(m^2)$ pairwise vector differences between points from the motifs $S\cap U$ and $Q\cap U$.
\medskip

To check if these vector sets coincide, we could lexicographically order them in time $O(m^2\log m)$, e.g. by using coordinates in the basis of the cell $U$.
Then a single pass over $O(m^2)$ vector differences is enough to decide if $\dif(S)\equiv\dif(Q)\pmod{\La}$.
\end{proof}

We illustrate Proposition~\ref{prop:homometry} for the 1-dimensional periodic sets $S(1)=\{0, 1, 3, 4\}+8\Z$ and $Q(1)=\{0, 3, 4, 5\}+8\Z$ in Fig.~\ref{fig:non-isometric_pairs}.
Their 4-point motifs have distinct difference sets: \\
$\begin{array}{l|llll}
S_8 &  0 & 1 & 3 & 4 \\
\hline
0 & 0 & -1 & -3 & -4 \\
1 & 1 & 0 & -2 & -3 \\
3 & 3 & 2 & 0 & -1 \\
4 & 4 & 1 & 3 & 0
\end{array}$ and 
$\begin{array}{l|llll}
Q_8 &  0 & 3 & 4 & 5 \\
\hline
0 & 0 & -3 & -4 & -5 \\
3 & 3 & 0 & -1 & -2 \\
4 & 4 & 1 & 0 & -1 \\
5 & 5 & 2 & 1 & 0
\end{array}$.
These sets coincide modulo 8 with multiplicities shown as subscripts:
$\dif(S(1))\equiv \{0_4,1_2,2_1,3_2,4_2,5_2,6_1,7_2 \}\equiv \dif(Q(1))$.
\medskip

Then $S(1),Q(1)$ are homometric by Definition~\ref{dfn:homometry}.
The equivalence of differences modulo 8 gives rise to a bijection between all 16 elements of the matrices above, hence to a bijection between the sets of vector differences $\dif(S(1))\to\dif(Q(1))$.
For example, the difference $(8i+1)-(8j+4)=8(i-j)-3\equiv 5\pmod{8}$ in $S(1)$ can be bijectively mapped to the vector difference $(8i+5)-8j=8(i-j)+5$ in $Q(1)$.

\section{Comparisons with crystallographic tools for experts in applications}
\label{sec:comparisons}

This paper makes a step towards the computer-aided design of solid crystalline materials, which are also referred to as \emph{crystals}. 
Most solid minerals in nature and a number of important synthetic materials are periodic crystals at the atomic scale.
\medskip

A periodic point set even without extra links (bonds) is the fundamental model for any solid crystalline material.
Indeed, atomic centers (nuclei) have a less ambiguous physical meaning than inter-atomic bonds, which can be challenging to classify and that require bespoke and empirical definitions.
For example, at what distance does a hydrogen bond become a bond?
However, for any zero-size point representing an atom, one can add a  label such as a chemical element or a radius or another physical property.
\medskip

Definition~\ref{dfn:AMD} is given for unlabeled points only for simplicity.
If points have labels such as chemical elements, one can adapt AMD for labels as follows.
For any two labels (say, O for oxygen and N for nitrogen) and $i$-th point $p_i$ of type O, the matrix $D[\text{O}-\text{N}](S;k)$ has the $i$-th row of ordered distances from $p_i$ to its $k$-th nearest neighbours of type N within $S$.
Define the label-specific $\AMD[\text{O}-\text{N}](S;k)$ as the average of the $k$-th column of the matrix $D[\text{O}-\text{N}](S;k)$ over all points of type O in the motif of $S$. 
The right hand side picture of Fig.~\ref{fig:AMD} shows AMD values without taking labels into account.
We can include labels as follows. 
The green unit cell has two points: one red and one blue (traditional colors for oxygen and nitrogen, respectively).
Then the matrix $D[\text{O}-\text{N}]$ is the single row-vector $\AMD[\text{O}-\text{N}]$ of ordered distances from the red point to all its blue neighbours.
\medskip

The idea of the Average Minimum Distances is close to the Radial Distribution Function (RDF), which measures the probability to find a number of atoms at a distance $r$ from a reference atom \cite[vol.~B, sec.~4.6]{hahn1983international}.
The RDF is defined via a multidimensional integral, which is approximated and visualized as a smooth function. 
Homometric crystals have identical RDFs but can be distinguished by AMD, see Example~\ref{exa:SQr+SQ15}.
\medskip

Diamond and graphite are well-known crystals in Fig.~\ref{fig:graphene+diamond+T2delta} with identical compositions (pure carbon) but very different physical properties, so geometric structures are important.
\medskip

\begin{figure}[h!]
\includegraphics[height=40mm]{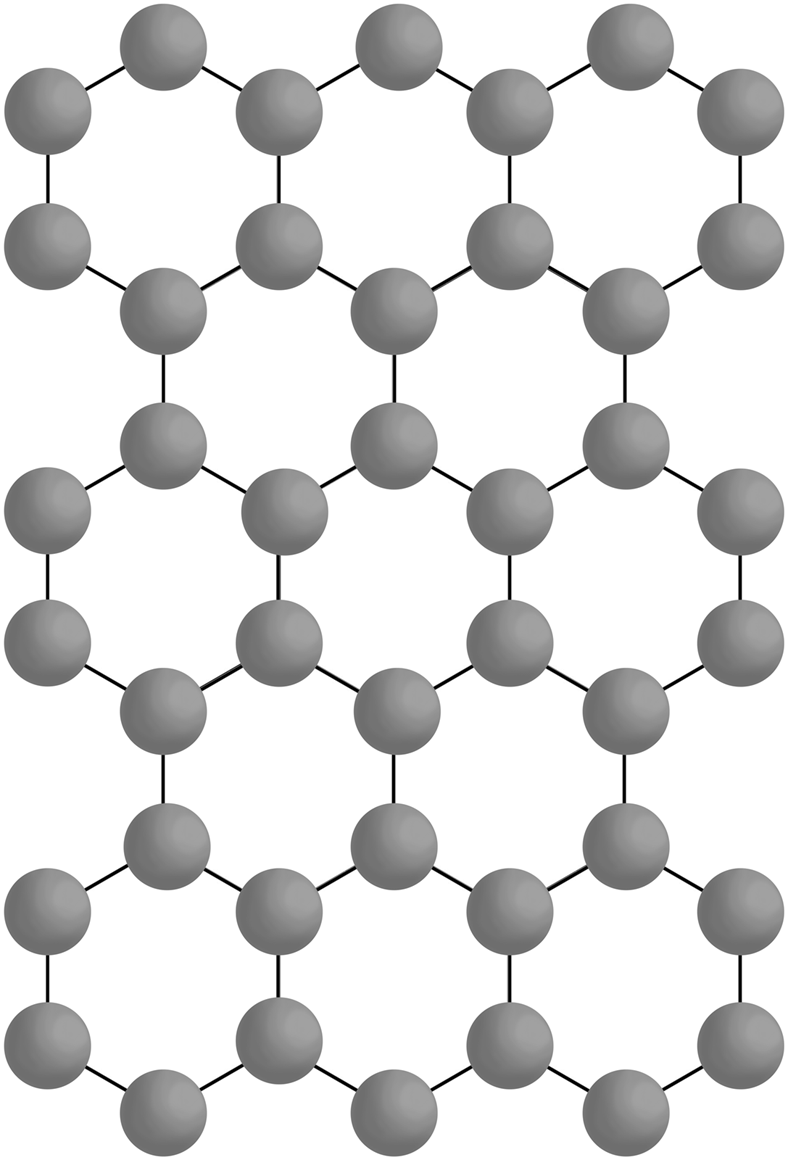}
\hspace*{8mm}
\includegraphics[height=40mm]{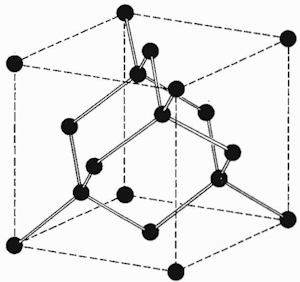}
\hspace*{8mm}
\includegraphics[height=40mm]{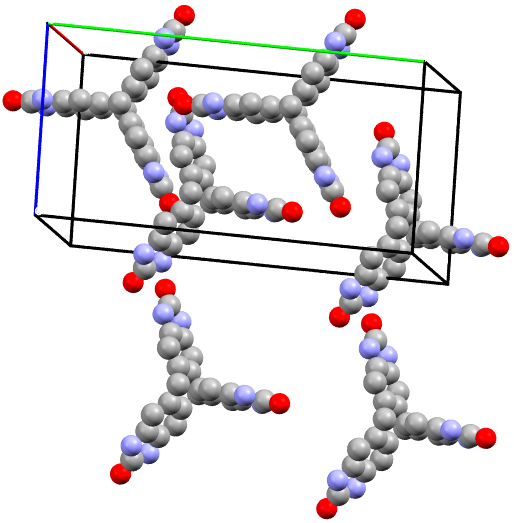}
\caption{Graphite (left) and diamond (middle) consist of only carbon atoms, hence their physical properties differ due to crystal structures.
The crystal on the right consists of T2 molecules.}
\label{fig:graphene+diamond+T2delta}
\end{figure}

Typically, periodic crystal is stored as a Crystallographic Information File (CIF) including parameters of a unit cell (three edge-lengths and three angles in $\R^3$) and all atoms with chemical labels and coordinates of atomic centers in the basis of the unit cell.
The input size of a crystal is best measured as the number of atoms in its unit cells or the number of points in a motif of a periodic set, which motivates a near linear time condition in Problem~\ref{pro:invariants}.
\medskip

Crystallographers used Niggli's reduced cell  \cite[section~9.3]{hahn1983international} as a canonical cell for a lattice.
This reduced cell is unique, but at the price of continuity.
Examples of discontinuous Niggli's cells \cite{andrews1980perturbation} were known since 1980.
Theorem~\ref{thm:discontinuity} disproves any continuous reduction.
\medskip

A traditional method to check if two crystals are identical compares their finite motifs of atoms in reduced cells.
Niggli's cell \cite[section~9.3]{hahn1983international} is unique, but is known to be discontinuous \cite{andrews1980perturbation} in the sense that a reduced cell of a slightly perturbed lattice is not close to a reduced cell of the non-perturbed lattice.
Discontinuity under small atomic perturbations is the major weakness of all discrete invariants including symmetry groups.
The nearly identical periodic sets in the last two pictures of Fig.~\ref{fig:deformations} should be recognisable as very similar.
\medskip

Though there was no justified distance that satisfies all metric axioms for any periodic crystals, the COMPACK algorithm \cite{chisholm2005compack} is widely used for a pairwise comparison of crystals.
Within given tolerances ($20^{\circ}$ for angles and 20\% for distances), up to a given number (15 by default) of molecules from two crystals are matched by a rigid motion that minimizes the Root Mean Square
 deviation of $N$ matched atoms $\RMS=\sqrt{\dfrac{1}{N}\sum\limits_{i=1}^N|p_i-q_i|^2}$. 

\begin{table*}[h!]
\begin{tabular}{c|cccccccccc}
$m$ matched molecules & 5 of 5 & 9 of 10 & 12 of 15 & 16 of 20 & 21 of 25 & 26 of 30 & 28 of 35 
\\
\hline
RMS, $1\AA=10^{-10}m$ & 0.603 & 0.708 & 0.874 &  0.969 & 1.080 & 1.040 & 1.044 
\\
\hline
running time, seconds & 0.168 & 0.422 & 2.026 & 14.61 & 63.51 & 151.4 & 759.3 
\end{tabular}
\caption{The Root Mean Square (RMS) deviation between the experimental T2-$\delta$ crystal \cite{pulido2017functional} and its closest simulated version with ID 14.
The irregular dependence of RMS on $m$ makes this comparison unreliable.
The running time substantially grows in the number of molecules.}
\label{tab:T2rms}
\end{table*}

Informally, the RMS is the average version of the bottleneck distance $d_B$ restricted to a finite subset of atoms or molecules whose choice may depend on extra parameters.
If a match between these finite subsets is extended, $\RMS$ can also increase, potentially to infinity, for example between cubic lattices of sizes 1 and 1.1. 
Table~\ref{tab:T2rms} shows how RMS depends on the maximum number $m$ of attempted molecules to match by rigid motion.
\medskip

\section{Breakthrough invariant-based visualisations of huge datasets}
\label{sec:visualisations}

The past visualisation was possible only for datasets of simulated crystals (based on the same chemical compositions) and only by a few explainable coordinates, see a noisy plot without any internal similarities in Fig.~\ref{fig:CSP}.
The AMD-based visualisation shows the same dataset as a Minimum Spanning Tree in Fig.~\ref{fig:T2AMD200TMap} with all  crystals clearly distinguishable.
\medskip

The AMD invariants have captured the structure-property relationships
in the map of all crystalline drugs in Fig.~\ref{fig:CSD_drugs_TMap_AMD200} showing aspirin and paracetamol close to each other as in a pharmacy.
Finally, the ultra-fast AMD allowed us to produce the largest ever map of all 229K organic crystals from the Cambridge Structural Database in Fig.~\ref{fig:CSDorganic_AMD100TMap}.  
All these maps can be interactively zoomed and explored in any Javascript-enabled browser.
\medskip

The practical motivation for continuous invariants of periodic point sets is the `embarrassment of over-prediction' 
coined by Prof Sally Price FRS.
Modern tools of Crystal Structure Prediction (CSP) predict many thousands of simulated crystals, though very few of them can be experimentally realized.
CSP is often run on supercomputers and starts from millions of random arrangements of atoms or molecules that are then minimized according to a complicated lattice energy function.
Lower lattice energy values indicate that crystals are more likely to remain stable as solid materials. 
This energy has no closed expression due to the dependence on infinitely many interactions between all atoms within a periodic crystal.
\medskip
 
A typical CSP software outputs thousands of approximate local minima of this energy, that is simulated crystals whose local perturbations are unlikely to produce more stable arrangements.
Simulated crystals are often visualized by the energy-vs-density \emph{landscape} representing each crystal as a point with two coordinates (density, energy).
The density is the molecular weight of atoms within a cell divided by the cell volume.
This single-value density is insufficient to differentiate crystals, because many non-isometric sets can have the same density.
The landscape in Fig.~\ref{fig:CSP} required 12 weeks of supercomputer time \cite{pulido2017functional}.
\medskip

\begin{figure}[h]
\centering
\includegraphics[width=\linewidth]{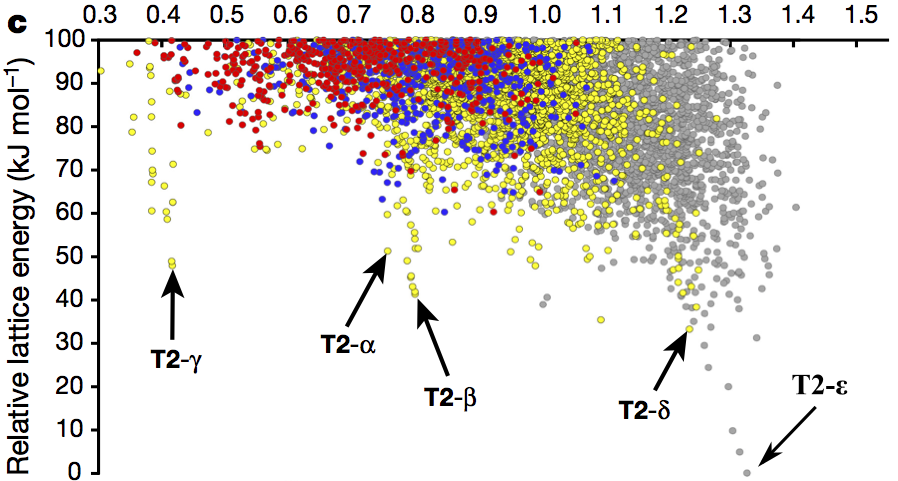}
\caption{\cite[Fig.~2d]{pulido2017functional}:
energy-vs-density plot shows many of 5679 crystals as nearly identical. 
}
\label{fig:CSP}
\end{figure}

Laboratory syntheses were guided by the energy landscape in Fig.~\ref{fig:CSP} of 5679 simulated crystals.
To validate this approach, each experimental crystals is matched with the 5679 simulated structures.
If there was no close match, the simulations have missed a real crystal, which is possible, because the continuous space of all potential crystals in Fig.~\ref{fig:CSP} is blindly sampled.
Until now, only the single-value density $\rho$ was 
most commonly used as the continuous isometry invariant to match similar crystals.
The density $\rho$ can separate well only nano-porous organic crystals, while inorganic crystals are much denser and can not be well-separated.
Using the density $\rho$ of an experimental crystal, one can search through multiple simulated crystals within a vertical `stripe' of the energy-vs-density landscape in Fig.~\ref{fig:CSP} over a small density interval to allow for errors.
From this stripe one might take the simulated crystal with the lowest energy as the most likely structure.
As such, a result depends on the tolerance error for the density among other factors. 
A final match is confirmed by the non-invariant RMS deviation between finite portions of crystals, see Table~\ref{tab:T2rms}.
\medskip

\begin{figure}[h]
\centering
\includegraphics[height=22mm]{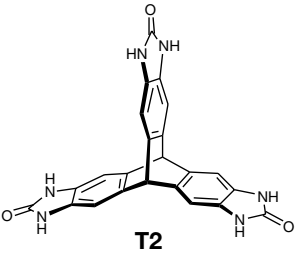}
\hspace*{0mm}
\includegraphics[height=22mm]{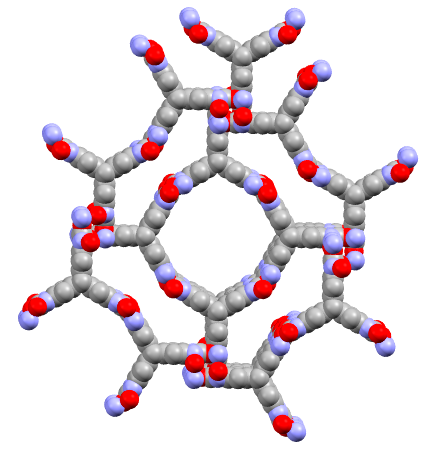}
\hspace*{0mm}
\includegraphics[height=22mm]{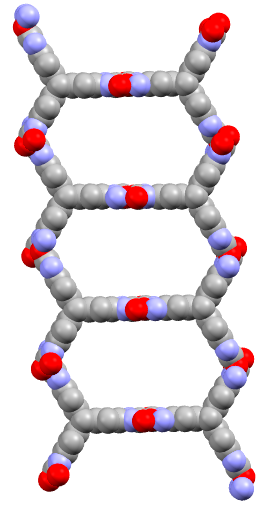}
\hspace*{0mm}
\includegraphics[height=22mm]{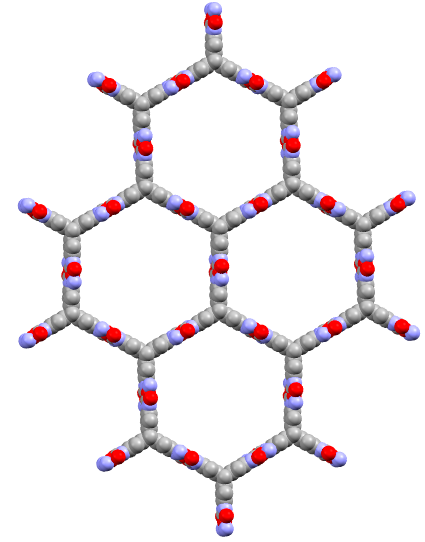}
\hspace*{0mm}
\includegraphics[height=22mm]{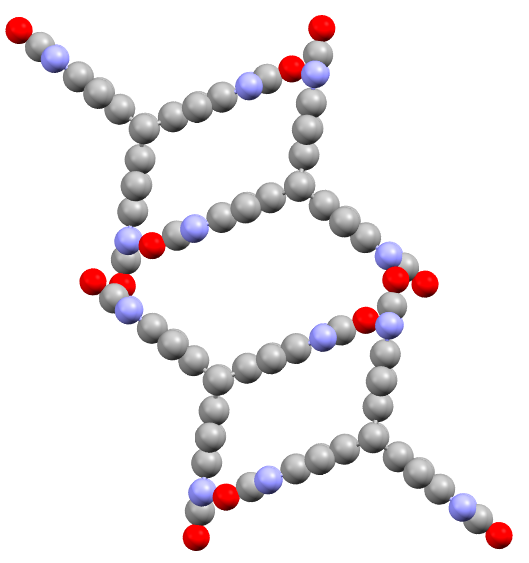}
\hspace*{0mm}
\includegraphics[height=22mm]{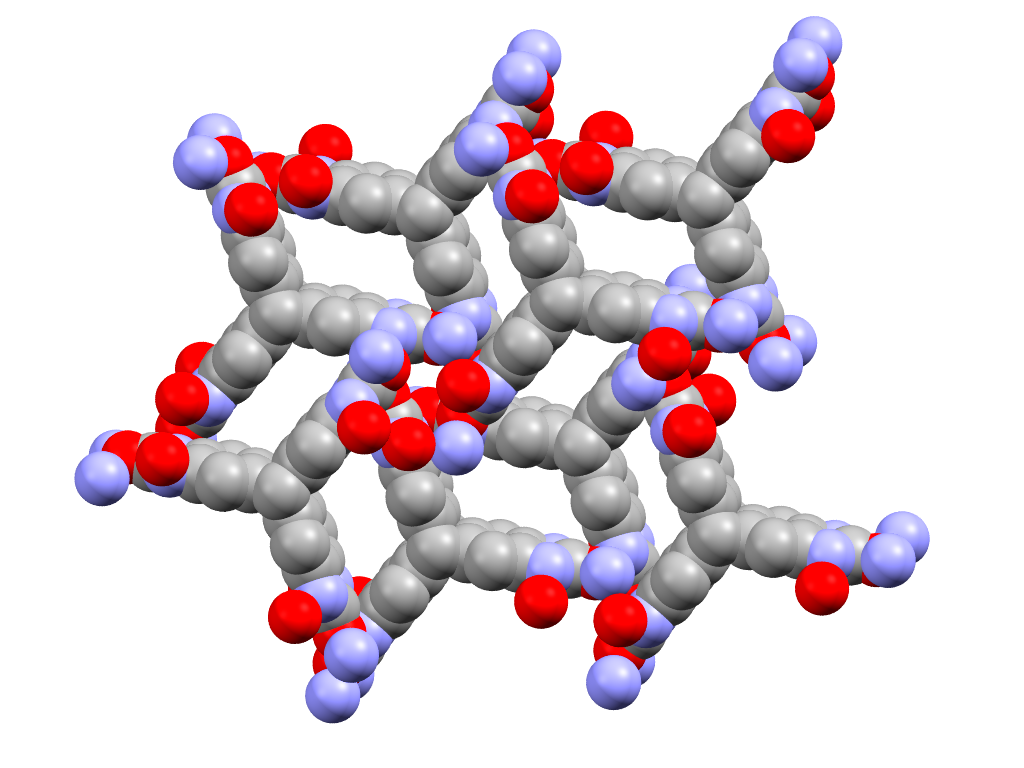}
\caption{
T2 molecule and the crystals T2-$\al$, T2-$\be$, T2-$\ga$, T2-$\de$, T2-$\ep$ based on the T2 molecule were synthesized in the laboratory after the Crystal Structure Prediction in Fig.~\ref{fig:CSP} reported in \cite{pulido2017functional}.}
\label{fig:T2crystals}
\end{figure}

The following closest simulated versions of experimental crystals were 
found previously by manual (and time-consuming) searches, but can now be automatically identified as close neighbours in the clustering dendrogram in Fig.~\ref{fig:AMD1000dendrogram_Linf}.
Here are the numerical IDs in the T2 dataset: 0186 for T2-$\al$, 0054 for T2-$\be$, 0120 for T2-$\ga$, 0014 for T2-$\de$, 0001 for T2-$\ep$.
Fig.~\ref{fig:AMD1000dendrogram_Linf} shows the dendrogram obtained by the $L_{\infty}$-distance, which is more stable with respect to the length $k$ of AMD vectors than Euclidean $L_2$ due to Theorem~\ref{thm:asymptotic}.
This dendrogram automatically identifies closest simulated crystals that in the past were manually matched to experimental ones 
within a prescribed density range.
All expected similar crystals, such as four versions of T2-$\ga$ synthesized under different conditions, belong to the same small cluster highlighted by the orange color in Fig.~\ref{fig:AMD1000dendrogram_Linf}.
Fig.~\ref{fig:T2AMD200curves} clearly shows five different patterns for the various experimental structures of T2, thus showing that AMD curves can distinguish between polymorphs (that is, different crystal packings of the same molecule).  

\begin{figure*}
\centering
\includegraphics[width=0.85\textwidth]{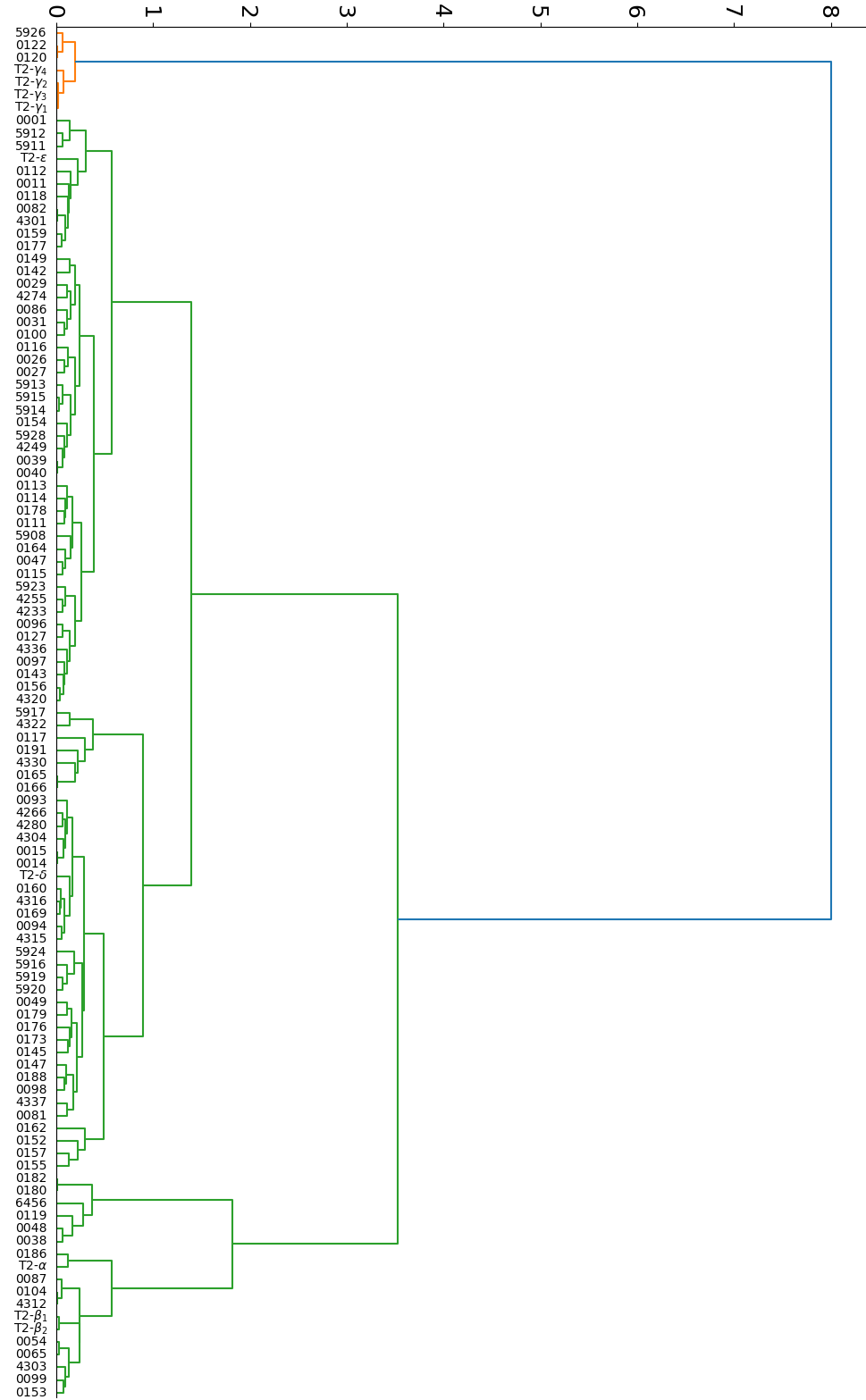}
\caption{Complete-linkage clustering by the $L_{\infty}$-distance on $\AMD^{(1000)}$ of T2 crystals: nine experimental and 100 simulated crystals with lowest energies, see \cite{pulido2017functional}.}
\label{fig:AMD1000dendrogram_Linf}
\end{figure*}

We note that there are four structures reported in the Cambridge Structural Database for polymorph T2-$\ga$, and that these all cluster onto the same AMD curve in Fig.~\ref{fig:T2AMD200curves}.
The linear behavior of $(\AMD_k(S))^3$ in the top diagram is explained by Theorem~\ref{thm:asymptotic}.
\medskip

\begin{figure*}[h!]
\includegraphics[width=\linewidth]{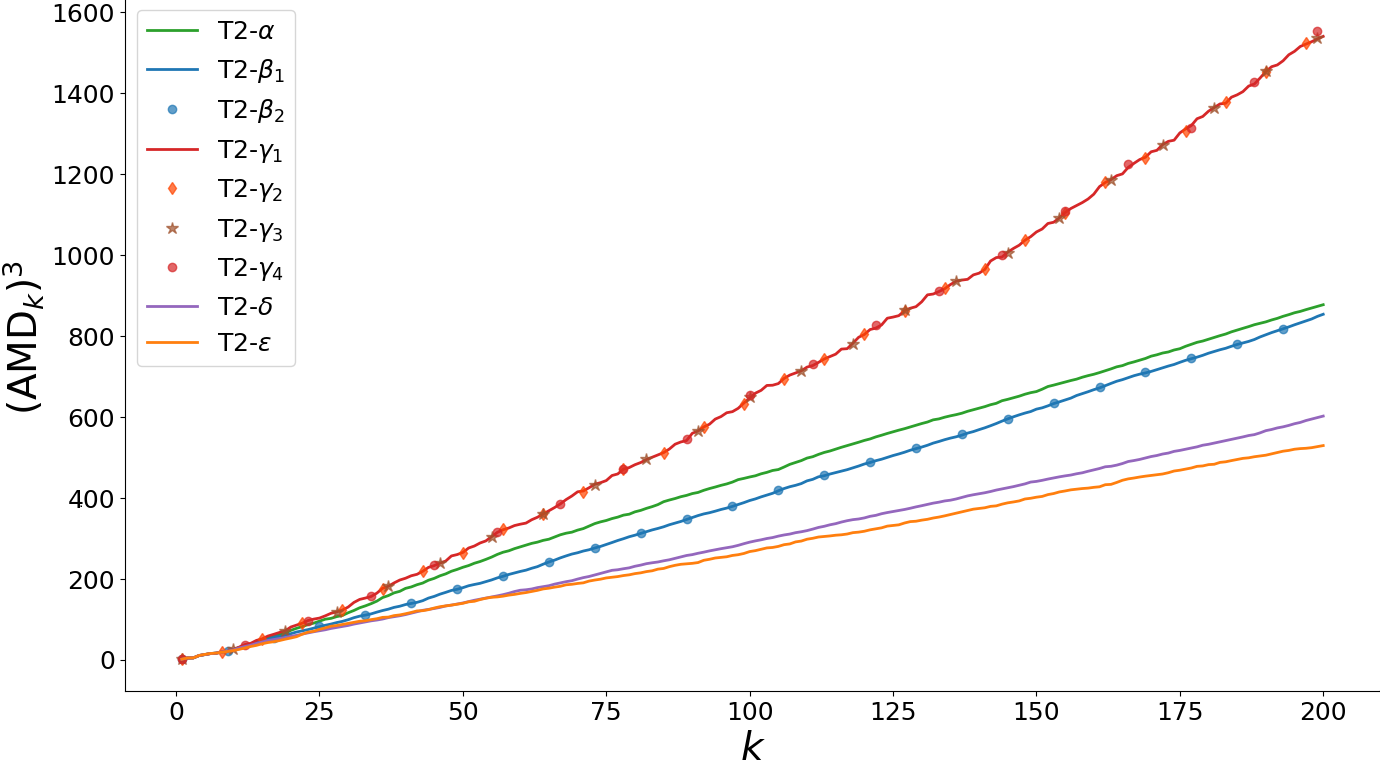}
\caption{
Cubed AMD curves clearly distinguish five phases of nine experimental crystals.}
\label{fig:T2AMD200curves}
\end{figure*}

The AMD curves of the four T2-$\ga$ crystals seem identical in Fig.~\ref{fig:T2AMD200curves}.
The AMD invariants pick up their subtle differences in the \emph{AMDigrams} similar to barcodes as shown in Fig.~\ref{fig:T2AMDigrams}.
The values of $\AMD_k$ for $k\leq 10$ are similar because all T2 crystals have the same intramolecular bond distances; small values of $k$ correspond to the atomic distances within the T2 molecule.
However, larger values distinguish geometric structures of different crystals. 
Theorem~\ref{thm:asymptotic} implies that pushing $k$ to infinity will add little extra information.
For reference, our $k=200$ is about 5 times more than the number 46 of atoms in a single T2 molecule in Fig.~\ref{fig:T2crystals}.
\medskip

\begin{figure*}[h!]
\includegraphics[width=\linewidth]{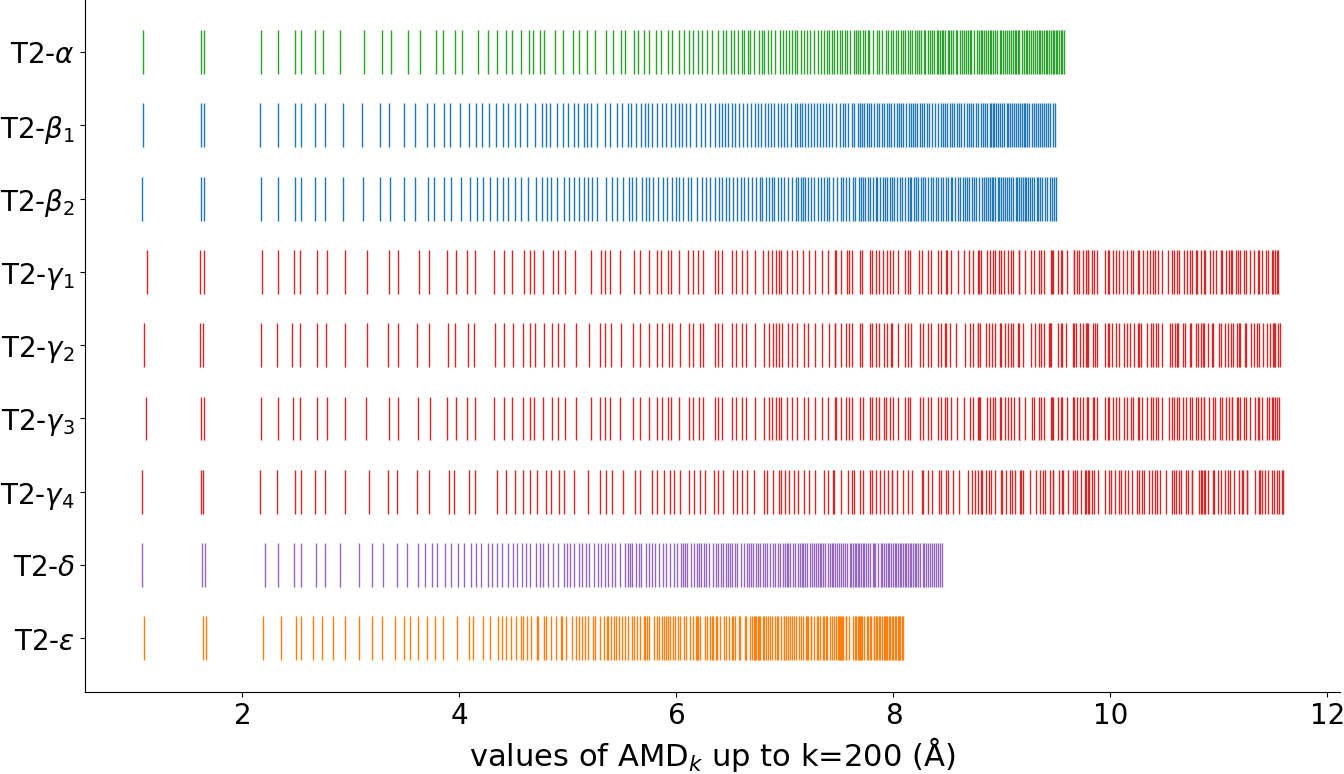}
\caption{
The \emph{AMDigrams} $\AMD^{(k)}(S)$ of the nine experimental crystals 
consist of vertical bars drawn at the values $\AMD_1,\dots,\AMD_k$ in the horizontal axis.
Even subtle differences are now visible.}
\label{fig:T2AMDigrams}
\end{figure*}

CSP datasets are often visualized as energy-vs-density plots, because the density is one of the few quickly computable continuous isometry invariants, see Fig.~\ref{fig:CSP}. 
However, the single-value density $\rho$ is insufficient to differentiate between many similar crystals.
The energy is computationally predicted only for simulated crystals, not for experimental ones.
\medskip

\begin{figure*}
\includegraphics[width=\textwidth]{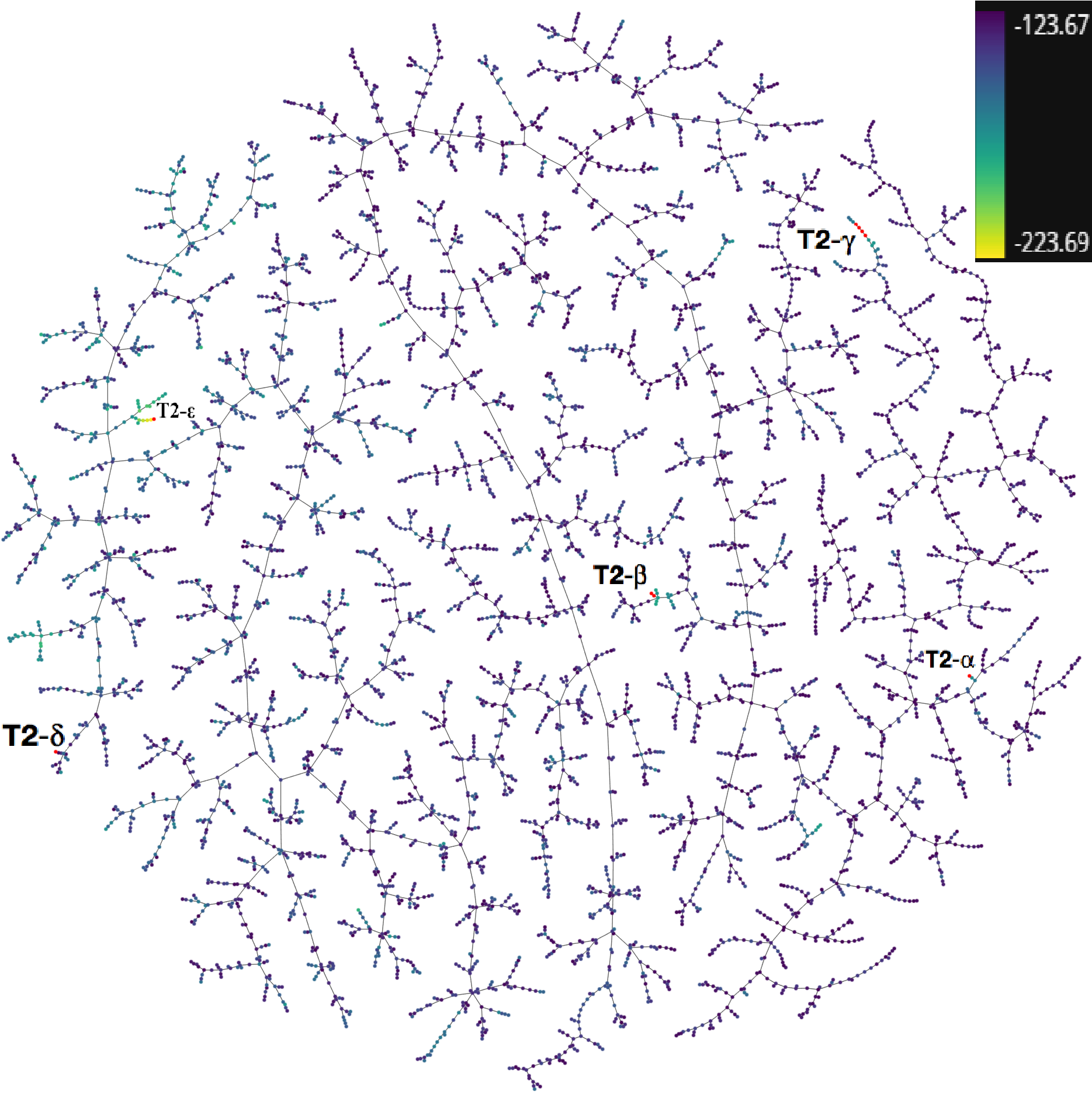}
\caption{The first joint visualization of 9 experimental and 5679 simulated crystals by TreeMap \cite{probst2020visualization} of a Minimum Spanning Tree based on $L_{\infty}$-distances between $\AMD^{(200)}$ vectors.
The colors reflect energies of simulated crystals.
Low energy values correspond to the most stable crystals in lighter colors.
The nine experimental crystals of unknown energies are shown in red for easy visibility.
We automatically 
located the closest simulated structures for all experimental crystals by using this Minimum Spanning Tree built by using computationally fast AMD invariants due to Theorem~\ref{thm:algorithm}.
}
\label{fig:T2AMD200TMap}
\end{figure*}

Fig.~\ref{fig:T2AMD200TMap} is a first visualization of experimental and simulated crystals together.
We computed the $L_{\infty}$-distance between the vectors $\AMD^{(200)}$ of all crystals, then built a Minimum Spanning Tree that connects all crystals as points in $\R^{200}$ and has a minimum total length of edges, drawn by TreeMap \cite{probst2020visualization}.
The points in Fig.~\ref{fig:T2AMD200TMap} are colored by predicted energy, apart from the 9 experimental crystals, where the points are coloured red.  
All experimental structures are located adjacent to closely-matched predicted structures in the T2 dataset.
\medskip

\begin{figure*}
\centering
\includegraphics[width=\textwidth]{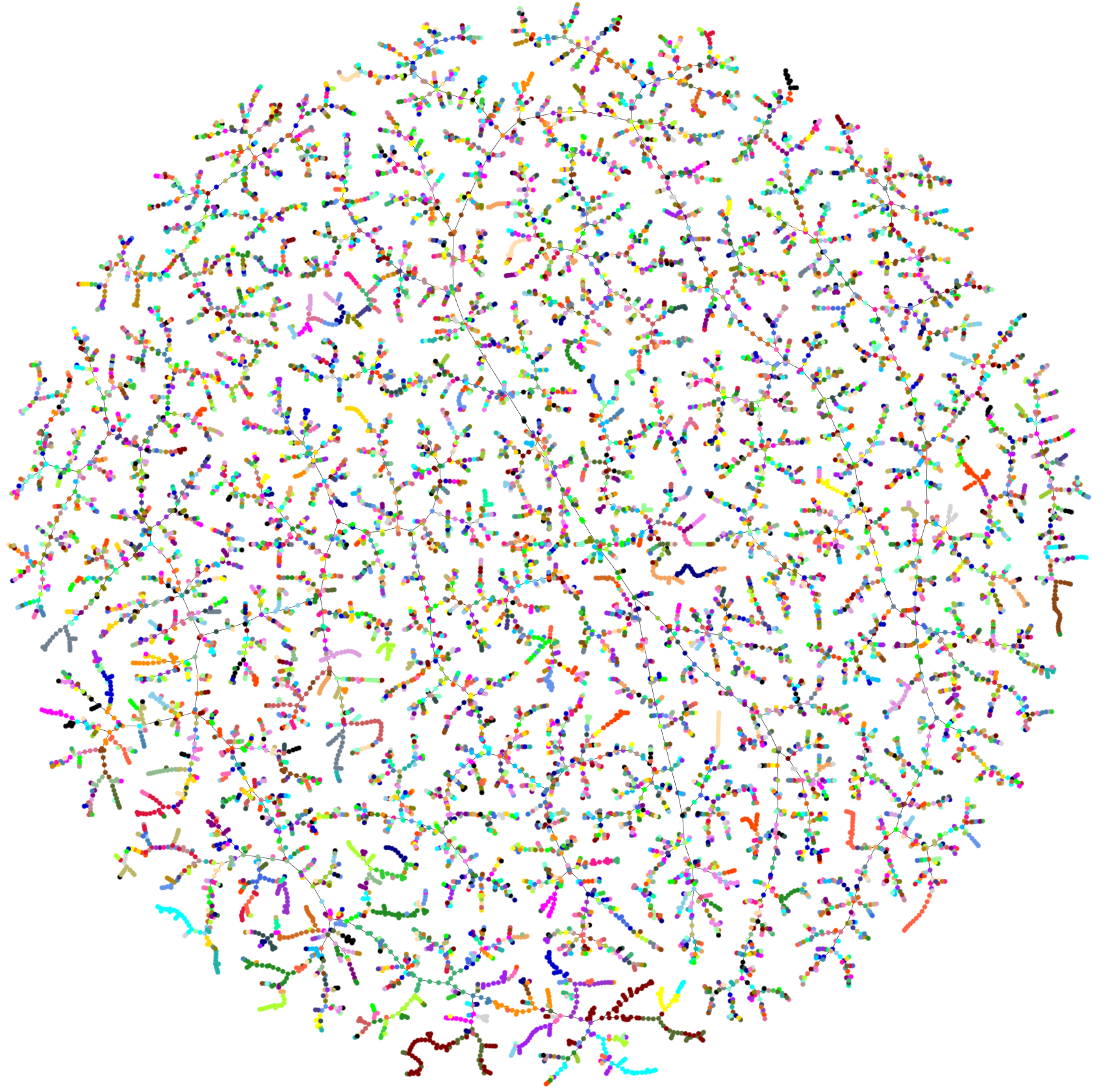} \\
\includegraphics[width=0.75\textwidth]{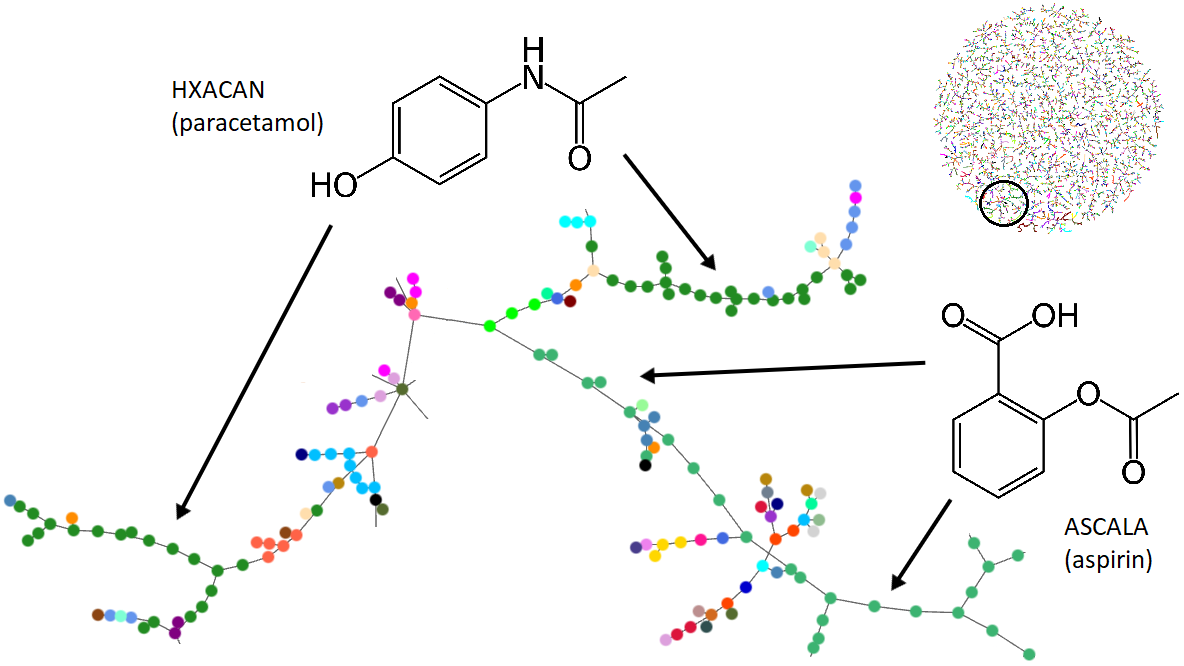}
\caption{TreeMap of all 12576 crystal structures in the CSD Drug subset, based on $L_{\infty}$-distances between $\AMD^{(200)}$ vectors.
All drugs with identical 6-letter codes have the same (random) color.
The aspirin and paracetamol are close as in a pharmacy, though the comparisons used only geometry.
}
\label{fig:CSD_drugs_TMap_AMD200}
\end{figure*}


\begin{figure*}
\includegraphics[width=1.2\textwidth]{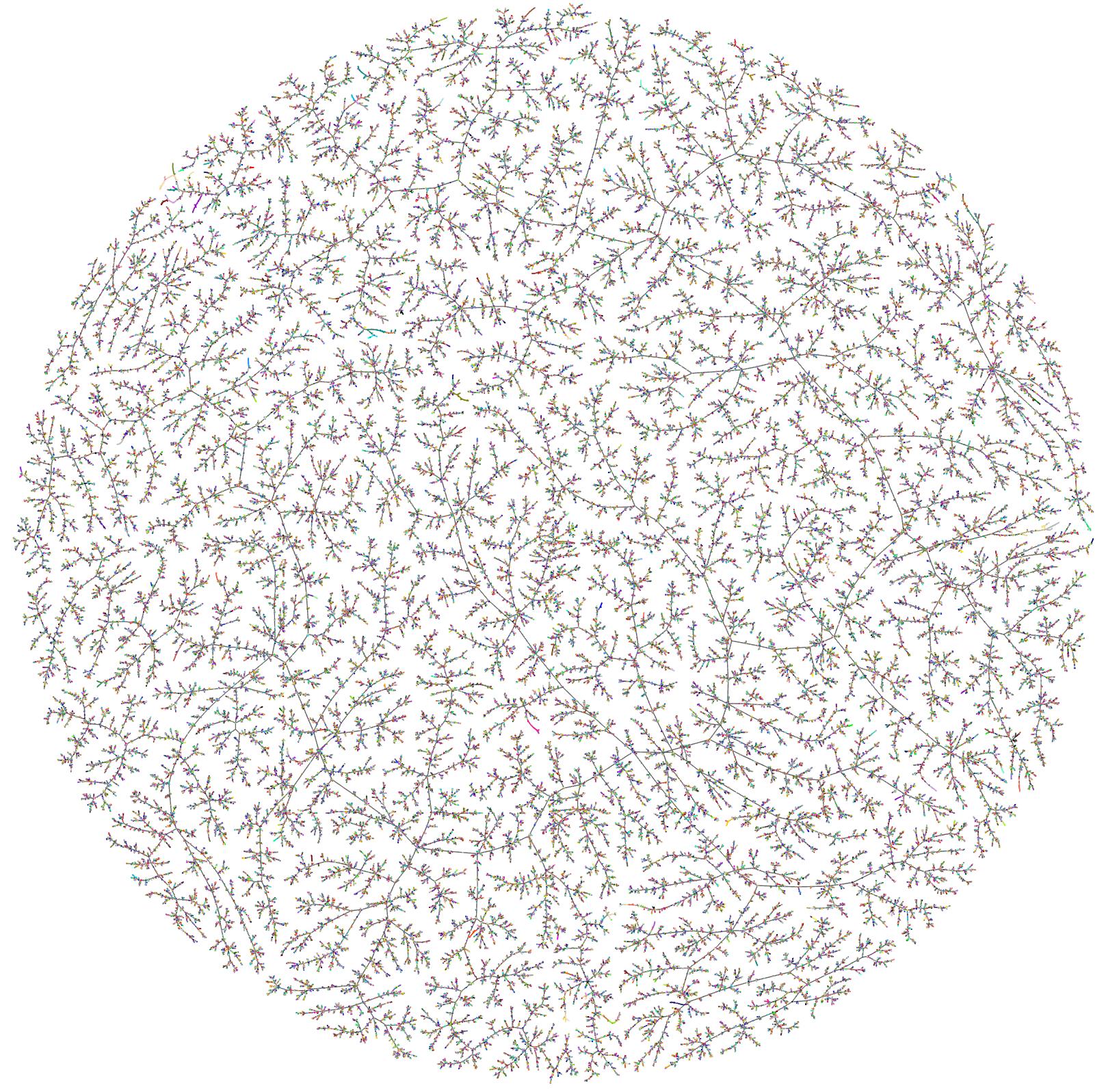}
\caption{TreeMap of all 228,994 organic crystals from the Cambridge Structural Database.}
\label{fig:CSDorganic_AMD100TMap}
\end{figure*}

\end{document}